\newif\ifignore 
\newcommand{\auxproof}[1]{
\ifignore\mbox{}\newline
\textbf{PROOF:} \dotfill\newline
{\it #1}\mbox{}\newline
\textbf{ENDPROOF}\dotfill
\fi}
\newcolumntype{C}{>{\centering\arraybackslash}m{1.5cm}}
\newcommand{\id}{\mathrm{id}}
\newcommand{\strat}{\mathsf{Strat}}
\newcommand{\cato}{\Cat{O}}
\newcommand{\uly}{Y}
\newcommand{\parstrat}{\Psi}
\newcommand{\initsigma}{\mathsf{Init}\,\sigma}
\newcommand{\sigmaki}{\sigma/ki}
\newcommand{\setbr}[1]{\{{#1}\}}
\newcommand{\sumstar}{\sum^{\sharp}}
\newcommand{\lbim}[1]{{#1}^{\mathsf{Left}}}
\newcommand{\rbim}[1]{{#1}^{\mathsf{Right}}}
\newcommand{\eqdef}{\stackrel{\mbox{\rm {\tiny def}}}{=}}
\newcommand{\relto}{\to\hspace{-0.65em} \shortmid \hspace{0.5em}}
\newcommand{\pset}{\mathcal{P}}
\newcommand{\psetplus}{\mathcal{P}^{+}}
\newcommand{\arity}[1]{\mathsf{Ar}({#1})}
\newcommand{\betwixt }{\hspace{1em}}
\newcommand{\complatt}{\mathbf{CSL}}
\newcommand{\alcomplatt}{\mathbf{ACSL}}
\newcommand{\bigolift}{\bigoplus^{\bot}}
\newdimen\proofrulebreadth \proofrulebreadth=.05em
\newdimen\proofdotseparation \proofdotseparation=1.25ex
\newdimen\proofrulebaseline \proofrulebaseline=2ex
\let\then\relax
\def\hfi{\hskip0pt plus.0001fil}
\mathchardef\squigto="3A3B
\newif\ifinsideprooftree\insideprooftreefalse
\newif\ifonleftofproofrule\onleftofproofrulefalse
\newif\ifproofdots\proofdotsfalse
\newif\ifdoubleproof\doubleprooffalse
\let\wereinproofbit\relax
\newdimen\shortenproofleft
\newdimen\shortenproofright
\newdimen\proofbelowshift
\newbox\proofabove
\newbox\proofbelow
\newbox\proofrulename
\def\shiftproofbelow{\let\next\relax\afterassignment\setshiftproofbelow\dimen0 }
\def\shiftproofbelowneg{\def\next{\multiply\dimen0 by-1 }%
\afterassignment\setshiftproofbelow\dimen0 }
\def\setshiftproofbelow{\next\proofbelowshift=\dimen0 }
\def\setproofrulebreadth{\proofrulebreadth}
\def\prooftree{
%
\ifnum  \lastpenalty=1
\then   \unpenalty
\else   \onleftofproofrulefalse
\fi
%
\ifonleftofproofrule
\else   \ifinsideprooftree
        \then   \hskip.5em plus1fil
        \fi
\fi
%
\bgroup
\setbox\proofbelow=\hbox{}\setbox\proofrulename=\hbox{}%
\let\justifies\proofover\let\leadsto\proofoverdots\let\Justifies\proofoverdbl
\let\using\proofusing\let\[\prooftree
\ifinsideprooftree\let\]\endprooftree\fi
\proofdotsfalse\doubleprooffalse
\let\thickness\setproofrulebreadth
\let\shiftright\shiftproofbelow \let\shift\shiftproofbelow
\let\shiftleft\shiftproofbelowneg
\let\ifwasinsideprooftree\ifinsideprooftree
\insideprooftreetrue
%
\setbox\proofabove=\hbox\bgroup$\displaystyle 
\let\wereinproofbit\prooftree
%
\shortenproofleft=0pt \shortenproofright=0pt \proofbelowshift=0pt
%
\onleftofproofruletrue\penalty1
}
\def\eproofbit{
%
\ifx    \wereinproofbit\prooftree
\then   \ifcase \lastpenalty
        \then   \shortenproofright=0pt  
        \or     \unpenalty\hfil         
        \or     \unpenalty\unskip       
        \else   \shortenproofright=0pt  
        \fi
\fi
%
\global\dimen0=\shortenproofleft
\global\dimen1=\shortenproofright
\global\dimen2=\proofrulebreadth
\global\dimen3=\proofbelowshift
\global\dimen4=\proofdotseparation
\global\count255=\proofdotnumber
%
$\egroup  
%
\shortenproofleft=\dimen0
\shortenproofright=\dimen1
\proofrulebreadth=\dimen2
\proofbelowshift=\dimen3
\proofdotseparation=\dimen4
\proofdotnumber=\count255
}
\def\proofover{
\eproofbit 
\setbox\proofbelow=\hbox\bgroup 
\let\wereinproofbit\proofover
$\displaystyle
}%
\def\proofoverdbl{
\eproofbit 
\doubleprooftrue
\setbox\proofbelow=\hbox\bgroup 
\let\wereinproofbit\proofoverdbl
$\displaystyle
}%
\def\proofoverdots{
\eproofbit 
\proofdotstrue
\setbox\proofbelow=\hbox\bgroup 
\let\wereinproofbit\proofoverdots
$\displaystyle
}%
\def\proofusing{
\eproofbit 
\setbox\proofrulename=\hbox\bgroup 
\let\wereinproofbit\proofusing
\kern0.3em$
}
\def\endprooftree{
\eproofbit 
  \dimen5 =0pt
%
\dimen0=\wd\proofabove \advance\dimen0-\shortenproofleft
\advance\dimen0-\shortenproofright
%
\dimen1=.5\dimen0 \advance\dimen1-.5\wd\proofbelow
\dimen4=\dimen1
\advance\dimen1\proofbelowshift \advance\dimen4-\proofbelowshift
%
\ifdim  \dimen1<0pt
\then   \advance\shortenproofleft\dimen1
        \advance\dimen0-\dimen1
        \dimen1=0pt
        \ifdim  \shortenproofleft<0pt
        \then   \setbox\proofabove=\hbox{%
                        \kern-\shortenproofleft\unhbox\proofabove}%
                \shortenproofleft=0pt
        \fi
\fi
%
\ifdim  \dimen4<0pt
\then   \advance\shortenproofright\dimen4
        \advance\dimen0-\dimen4
        \dimen4=0pt
\fi
%
\ifdim  \shortenproofright<\wd\proofrulename
\then   \shortenproofright=\wd\proofrulename
\fi
%
\dimen2=\shortenproofleft \advance\dimen2 by\dimen1
\dimen3=\shortenproofright\advance\dimen3 by\dimen4
%
\ifproofdots
\then
        \dimen6=\shortenproofleft \advance\dimen6 .5\dimen0
        \setbox1=\vbox to\proofdotseparation{\vss\hbox{$\cdot$}\vss}%
        \setbox0=\hbox{%
                \advance\dimen6-.5\wd1
                \kern\dimen6
                $\vcenter to\proofdotnumber\proofdotseparation
                        {\leaders\box1\vfill}$%
                \unhbox\proofrulename}%
\else   \dimen6=\fontdimen22\the\textfont2 
        \dimen7=\dimen6
        \advance\dimen6by.5\proofrulebreadth
        \advance\dimen7by-.5\proofrulebreadth
        \setbox0=\hbox{%
                \kern\shortenproofleft
                \ifdoubleproof
                \then   \hbox to\dimen0{%
                        $\mathsurround0pt\mathord=\mkern-6mu%
                        \cleaders\hbox{$\mkern-2mu=\mkern-2mu$}\hfill
                        \mkern-6mu\mathord=$}%
                \else   \vrule height\dimen6 depth-\dimen7 width\dimen0
                \fi
                \unhbox\proofrulename}%
        \ht0=\dimen6 \dp0=-\dimen7
\fi
%
\let\doll\relax
\ifwasinsideprooftree
\then   \let\VBOX\vbox
\else   \ifmmode\else$\let\doll=$\fi
        \let\VBOX\vcenter
\fi
\VBOX   {\baselineskip\proofrulebaseline \lineskip.2ex
        \expandafter\lineskiplimit\ifproofdots0ex\else-0.6ex\fi
        \hbox   spread\dimen5   {\hfi\unhbox\proofabove\hfi}%
        \hbox{\box0}%
        \hbox   {\kern\dimen2 \box\proofbelow}}\doll%
%
\global\dimen2=\dimen2
\global\dimen3=\dimen3
\egroup 
\ifonleftofproofrule
\then   \shortenproofleft=\dimen2
\fi
\shortenproofright=\dimen3
%
\onleftofproofrulefalse
\ifinsideprooftree
\then   \hskip.5em plus 1fil \penalty2
\fi
}
\spnewtheorem{assumption}[theorem]{Assumption}{\bfseries}{\itshape}
\newcommand{\free}{\mathcal{F}}
\newcommand{\trlog}{\mathsf{log}}
\newcommand{\trem}{\mathsf{em}}
\newcommand{\trkl}{\mathsf{kl}}
\newcommand{\calogtb}{\ell_{\mathrm{log}}}
\newcommand{\calogbt}{\ell^{\mathrm{log}}}
\newcommand{\caem}{\ell_{\mathrm{em}}}
\newcommand{\cakl}{\ell_{\mathrm{kl}}}
\newcommand{\emlog}{\mathsf{e}}
\newcommand{\kllog}{\mathsf{k}}
\newcommand{\klem}{\mathsf{k}}
\newcommand{\QEDbox}{\square}
\newcommand{\QED}{\hspace*{\fill}$\QEDbox$}
\newcommand{\idmap}[1][]{\ensuremath{\mathrm{id}_{#1}}}
\newcommand{\Id}{\mathrm{Id}}
\newcommand{\after}{\mathrel{\circ}}
\newcommand{\evmap}{\mathrm{ev}}
\newcommand{\st}{\ensuremath{\mathrm{st}}}
\newcommand{\op}[1]{#1^{\mathrm{op}}}
\newcommand{\ket}[1]{\ensuremath{|{\kern.1em}#1{\kern.1em}\rangle}}
\newcommand{\bigket}[1]{\ensuremath{\big|{\kern.1em}#1{\kern.1em}\big\rangle}}
\newcommand{\powersetsymbol}{\mathcal{P}}
\newcommand{\finpowersetsymbol}{\powersetsymbol_{\textrm{f}}}
\newcommand{\distributionsymbol}{\mathcal{D}}
\newcommand{\findistributionsymbol}{\distributionsymbol_{\textrm{fin}}}
\newcommand{\Pow}{\powersetsymbol}
\newcommand{\Powfin}{\finpowersetsymbol}
\newcommand{\Dstfin}{\findistributionsymbol}
\newcommand{\UF}{\ensuremath{\mathcal{U}{\kern-.75ex}\mathcal{F}}}
\newcommand{\Cat}[1]{\ensuremath{\mathbf{#1}}\xspace}
\newcommand{\cat}[1]{\Cat{#1}}
\newcommand{\catc}{\Cat{C}}
\newcommand{\catd}{\Cat{D}}
\newcommand{\Kl}{\mathcal{K}{\kern-.4ex}\ell}
\newcommand{\EM}{\mathcal{E}{\kern-.4ex}\mathcal{M}}
\newcommand{\Alg}{\ensuremath{\mathrm{Alg}}}
\newcommand{\CoAlg}{\ensuremath{\mathrm{CoAlg}}}
\newcommand{\Sets}{\Cat{Sets}}
\newcommand{\Pred}{\ensuremath{\mathrm{Pred}}}
\newcommand{\Ef}{\ensuremath{\mathcal{E}{\kern-.5ex}f}}
\newcommand{\intd}{{\kern.2em}\mathrm{d}{\kern.03em}}
\newcommand{\OF}{\ensuremath{\mathcal{O}{\kern-.1em}\mathcal{F}}}
\newcommand{\Closed}{\ensuremath{\mathcal{C}{\kern-.05em}\ell}}
\newcommand{\congrightarrow}{\mathrel{\smash{\stackrel{
           \raisebox{.5ex}{$\scriptstyle\cong$}}{
           \raisebox{0ex}[0ex][0ex]{$\rightarrow$}}}}}
\newsavebox\sbground
\savebox\sbground{\begin{tikzpicture}[circuit ee IEC,yscale=0.5,xscale=0.4]
                \draw (0,-2ex) to (0,0) node[ground,rotate=90,xshift=.65ex] {};
                \end{tikzpicture}}
\title{Steps and Traces\thanks{The research leading to these results has
    received funding from the European Research Council under the
    European Union's Seventh Framework Programme (FP7/2007-2013) / ERC
    grant agreement nr.~320571.
    This is a revised and extended version of a paper which appeared in the proceedings
    of CMCS 2018~\cite{JacobsLR18}.}}
\author{Jurriaan Rot\inst{1} and Bart Jacobs\inst{1} and Paul Levy\inst{2}}
\institute{Institute for Computing and Information Sciences,\\
Radboud Universiteit, Nijmegen, The Netherlands \\
\and University of Birmingham, UK \\
\email{jrot@cs.ru.nl} and \email{bart@cs.ru.nl} and \email{P.B.Levy@cs.bham.ac.uk}
}
\begin{document}
\maketitle
\begin{abstract}
In the theory of coalgebras, trace semantics can be defined in various
distinct ways, including through algebraic logics, the Kleisli category
of a monad or its Eilenberg-Moore category. This paper elaborates two
new unifying ideas: 1)~coalgebraic trace semantics is naturally
presented in terms of corecursive algebras, and 2)~all three approaches
arise as instances of the same abstract setting. Our perspective puts
the different approaches under a common roof, and allows to derive
conditions under which some of them coincide. 
\end{abstract}

\section{Introduction}\label{sec:intro}

Traces are used in the semantics of state-based systems as a way of
recording the consecutive behaviour of a state in terms of sequences
of observable (input and/or output) actions. Trace semantics leads to,
for instance, the notion of trace equivalence, which expresses that two
states cannot be distinguished by only looking at their iterated
in/output behaviour.

Trace semantics is a central topic of interest
in the coalgebra community---and not only there, of course. One of
the key features of the area of coalgebra is that states and their
coalgebras can be considered in different universes, formalised as
categories. The break-through insight is that trace semantics for a
system in universe $A$ can often be obtained by switching to a
different universe $B$. More explicitly, where the (ordinary)
behaviour of the system can be described via a final coalgebra in
universe $A$, the trace behaviour arises by finality in the different
universe $B$. Typically, the alternative universe $B$ is a category of
algebraic logics, the Kleisli category, or the category of
Eilenberg-Moore algebras, of a monad on universe $A$.

This paper elaborates two new unifying ideas.
\begin{enumerate}
\item We observe that the trace map from the state space of a
  coalgebra to a carrier of traces is in all three situations the
  unique `coalgebra-to-algebra' map to a \emph{corecursive
    algebra}~\cite{CaprettaUV09} of traces. This differs from earlier
  work which tries to describe traces as final coalgebras. For us it
  is quite natural to view languages as algebras, certainly when they
  consist of finite words/traces.

\item Next, these corecursive algebras, used as spaces of traces, all
  arise via a uniform construction, in a setting given by an
  adjunction together with a special natural transformation that we
  call a `step'. We heavily rely on a basic result saying that in this
  situation, the (lifting of the) right adjoint preserves corecursive
  algebras, sending them from one universe to another. This is a known
  result~\cite{CaprettaUustaluVene:recursive}, but its fundamental
  role in trace semantics has not been recognized before. For an
  arbitrary coalgebra there is then a unique map to the transferred
  corecursive algebra; this is the trace map that we are after.
\end{enumerate}

\noindent The main contribution of this paper is the unifying
step-based approach to coalgebraic trace semantics: it is shown that
three existing flavours of trace semantics---logical,
Eilenberg-Moore, Kleisli---are all instances of our
approach. Moreover, comparison results are given relating 
theses. 
We focus only on finite
trace semantics, and also exclude at this stage the `iteration' based
approaches, e.g.,~in~\cite{MiliusPS15,DorschMS19,KurzMPS15,Cirstea16}.

\paragraph{Outline.} The paper is organised as follows. It starts in
Section~\ref{sec:prelim} with the abstract step-and-adjunction setting,
and the relevant definitions and results for corecursive algebras. 
In the next three sections, it is explained how this 
setting gives rise to trace semantics, by presenting
the above-mentioned three approaches to coalgebraic trace semantics
in terms of steps and adjunctions: Eilenberg-Moore (Section~\ref{sec:em}),
logical (Section~\ref{sec:logic}) and Kleisli (Section~\ref{sec:kl}).
In each case, the relevant corecursive algebra is described. 
These sections are illustrated with several examples. 
In Section~\ref{sec:partial} we study
partial traces for coalgebras with input and output~\cite{BowlerLevyPlotkin:infinstrat}, as another instance of the step-and-adjunction setting; but it is helpful to express that setting in the language of bimodules, which we do in Appendix~\ref{sec:bimod}.

The next section establishes a connection between the Eilenberg-Moore and the logical approach,
between the Kleisli and logical approach and between the Kleisli and Eilenberg-Moore approach (Section~\ref{sec:comp}).
In Section~\ref{sec:cia} we show that our construction yields algebras that are not merely corecursive but completely iterative, a stronger property that provides more general trace semantics. Finally, in Section~\ref{sec:fw} we
provide some directions for future work.

\paragraph{Notation.} \label{sec:prelim}

In the context of an adjunction $F \dashv G$, we shall use overline
notation $\overline{(-)}$ for adjoint transposition. The unit and
counit of an adjunction are, as usual, written as $\eta$ and
$\varepsilon$.

For an endofunctor $H$, we write $\Alg(H)$ for its algebra category
and $\CoAlg(H)$ for its coalgebra category.  For a monad
$(T,\eta,\mu)$ on $\catc$, we write $\EM(T)$ for the Eilenberg-Moore
category and $\Kl(T)$ for the Kleisli category.

We recall that any functor $S \colon \Sets \rightarrow \Sets$ has a
unique strength $\st$.  We write $\st \colon S(X^A) \rightarrow
S(X)^A$ for $\st(t)(a) = S(\evmap_a)(t)$, where $\evmap_a = \lambda
f. f(a) \colon X^A \rightarrow X$.


\section{Coalgebraic semantics from a step} \label{sect:stepsem}

This section is about the construction of corecursive algebras and
their use for semantics.  The notion of corecursive algebra, studied
in~\cite{Eppendahl:coalgtoalg,CaprettaUV09} as the dual of Taylor's
notion of recursive coalgebra~\cite{GirardLafontTaylor:book}, is
defined as follows.

\begin{definition} Let $H$ be an endofunctor on a category $\catc$.
\begin{enumerate}
\item A \emph{coalgebra-to-algebra morphism} from a coalgebra $c\colon
  X \rightarrow H(X)$ to an algebra $a\colon H(\Theta)\rightarrow \Theta$ is a
  map $f\colon X \rightarrow \Theta$ such that the diagram
\[ \xymatrix@R-0.5pc{
 X \ar[r]^-{f} \ar[d]_{c} & \Theta  \\
 H(X) \ar[r]_-{H(f)} & H(\Theta) \ar[u]_{a}
} \]

\noindent commutes.  Equivalently: such a morphism is a fixpoint for the
endofunction on the homset $\catc(X,\Theta)$ sending $f$ to the composite
\[ \xymatrix{
 X \ar[r]^-{c} & H(X) \ar[r]^-{H(f)} & H(\Theta) \ar[r]^-{a} & \Theta
} \]

\item An algebra $a \colon H(A) \rightarrow A$ is \emph{corecursive}
  when for every coalgebra $c \colon X \rightarrow H(X)$ there is a
  unique coalgebra-to-algebra morphism $(X,c) \to (\Theta,a)$.
  \end{enumerate}
\end{definition}

Here is some intuition.

\begin{itemize}
\item As explained in~\cite{HinzeWuGibbons:conjhylo}, the
  specification of a coalgebra-to-algebra morphism $f$ is a
  ``divide-and-conquer'' algorithm.  It says: to operate on an
  argument, first decompose it via the coalgebra $c$, then operate on
  each component via $H(f)$, then combine the results via the algebra
  $a$.

\item For each final $H$-coalgebra $\zeta \colon \Theta
  \congrightarrow H(\Theta)$, the inverse $\zeta^{-1} \colon H(\Theta)
  \rightarrow \Theta$ is a corecursive algebra. For most functors of
  interest, this final coalgebra gives semantics up to bisimilarity,
  which is finer than trace equivalence.  So trace semantics requires
  a different corecursive algebra.
\end{itemize}

In all our examples, we use the same procedure for obtaining a
corecursive algebra. It makes frequent use of the following so-called
\emph{mate
  correspondence}~\cite{KellyStreet:twocat,leinster2004higher}; also
  see, e.g.,~\cite{Klin07,JacobsS10,KlinR16,Levy15} for special cases.

\begin{theorem} 
\label{thm:matesgeneral}
Given adjunctions and functors
  \begin{displaymath}
    \xymatrix{
      \cat{C}\ar@/^2ex/[rr]^-{F}  \ar[d]_{H} & \bot & \cat{D} \ar@/^2ex/[ll]^-G \ar[d]^{L} \\
       \cat{C'}\ar@/^2ex/[rr]^-{F'}  & \bot & \cat{D'} \ar@/^2ex/[ll]^-{G'}
      }
    \end{displaymath}

\noindent there are bijective correspondences between natural transformations:
    \begin{displaymath}
      \xymatrix@R-0.5pc{
        \cat{C} \ar[d]_{H} \ar@{}[dr]|{\stackrel{\rho_1}{\Longrightarrow}} &  \cat{D} \ar[d]^{L} \ar[l]_{G} & \cat{C}\ar[r]^{F} \ar[d]_{H} \ar@{}[dr]|{\stackrel{\rho_2}{\Longrightarrow}} &  \cat{D} \ar[d]^{L} & \cat{C} \ar[d]_{H}  \ar@{}[dr]|{\stackrel{\rho_3}{\Longrightarrow}} &  \cat{D} \ar[d]^{L}  \ar[l]_{G} & \cat{C} \ar[d]_{H} \ar[r]^{F} \ar@{}[dr]|{\stackrel{\rho_4}{\Longrightarrow}} &  \cat{D} \ar[d]^{L} \\
        \cat{C'} & \cat{D'} \ar[l]^{G'} &  \cat{C'} \ar[r]_{F'} & \cat{D'} &  \cat{C'}\ar[r]_{F'} & \cat{D'} &  \cat{C'} & \cat{D'} \ar[l]^{G'}
        }
    \end{displaymath}

\noindent Here $\rho_1$ and $\rho_3$ correspond by adjoint
transposition, and similarly for $\rho_2$ and $\rho_4$. Further,
$\rho_1$ and $\rho_2$ are obtained from each other by:
\begin{displaymath}
\begin{array}[b]{rcl}
\rho_1
& = &
\left( \xymatrix{HG \ar@{=>}[r]^-{\eta' HG} & G'F'HG \ar@{=>}[r]^-{G'\rho_2 G} & G'LFG \ar@{=>}[r]^-{G'L\varepsilon} & G'L}\right) 
\\
\rho_2
& = &
\left( \xymatrix{F'H \ar@{=>}[r]^-{F'H\eta} & F'HGF \ar@{=>}[r]^-{F'\rho_1 F} & F'G'LF \ar@{=>}[r]^-{\varepsilon' LF} & LF} \right) \,.
\end{array}
\end{displaymath}
\end{theorem}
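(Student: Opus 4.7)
The plan is to factor the bijection $\rho_1 \leftrightarrow \rho_2$ through $\rho_3$ (and symmetrically through $\rho_4$), observing that each intermediate step is either a pointwise adjoint transposition or a standard ``mate'' operation with respect to a single adjunction. In this way the formulas stated in the theorem arise as compositions of already-established bijections, and the only non-trivial check reduces to the triangle identities.

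First I would establish the ``vertical'' bijection $\rho_1 \leftrightarrow \rho_3$. For each $d \in \cat{D}$, the $d$-component of $\rho_1$ is a map $HGd \to G'Ld$, which transposes across $F' \dashv G'$ to a map $F'HGd \to Ld$, giving $\rho_3 := \varepsilon' L \circ F' \rho_1$; the inverse sends $\rho_3$ to $G'\rho_3 \circ \eta' HG$. Naturality in $d$ of each resulting family is automatic, since $\varepsilon'$ and $\eta'$ are themselves natural and composition of natural transformations preserves naturality. This establishes $\rho_1 \leftrightarrow \rho_3$. By the same argument, at each object $c \in \cat{C}$ one obtains $\rho_2 \leftrightarrow \rho_4$.

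Next I would establish the ``horizontal'' bijection $\rho_2 \leftrightarrow \rho_3$, this time using the adjunction $F \dashv G$: set $\rho_3 := L\varepsilon \circ \rho_2 G$ and conversely $\rho_2 := \rho_3 F \circ F'H\eta$. To verify these are mutually inverse, substitute each into the other; naturality of $\rho_2$ (respectively $\rho_3$) allows one to commute it past a whiskered unit or counit, and the resulting composite then collapses via the triangle identity $\varepsilon F \cdot F\eta = \id_F$ (respectively $G\varepsilon \cdot \eta G = \id_G$). Composing the already-established bijections $\rho_1 \leftrightarrow \rho_3 \leftrightarrow \rho_2$ yields exactly $\rho_2 = \varepsilon' L F \circ F'\rho_1 F \circ F'H\eta$, and tracing the chain in reverse recovers the inverse formula for $\rho_1$.

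The only real obstacle is bookkeeping in the $\rho_2 \leftrightarrow \rho_3$ calculation, since each substitution produces a composite of several whiskered 2-cells that must be shuffled by naturality before the triangle identities can be applied. Drawing the reduction as a pasting diagram in $\Cat{Cat}$ makes the two cancellations transparent; everything else reduces to the generic fact that pointwise adjoint transposition of a natural transformation is again natural, which is proved once and re-used.
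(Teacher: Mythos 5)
Your proof is correct. The paper itself gives no proof of this theorem—it is cited as the standard mate correspondence from Kelly--Street—so there is nothing to compare against; your decomposition through $\rho_3$ (pointwise transposition across $F' \dashv G'$ for $\rho_1 \leftrightarrow \rho_3$ and $\rho_2 \leftrightarrow \rho_4$, then the whisker-with-unit/counit bijection across $F \dashv G$ for $\rho_2 \leftrightarrow \rho_3$) is the standard argument, the two triangle identities are invoked in the right places, and composing the two bijections does yield exactly the displayed formulas $\rho_2 = \varepsilon'LF \circ F'\rho_1 F \circ F'H\eta$ and $\rho_1 = G'L\varepsilon \circ G'\rho_2 G \circ \eta'HG$.
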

It is common to refer to $\rho_1$ and $\rho_2$ as \emph{mates}; the
other two maps are their adjoint transposes, as we have seen.  In diagrams we omit the
subscript $i$ in $\rho_i$ and let the type determine which version of
$\rho$ is meant.  Further, in the remainder of this paper we usually drop the subscript
of components of natural transformations.

Our basic setting
consists of an adjunction, two endofunctors, and a natural
transformation:
\begin{equation}
\label{eq:adj-setting}
  \xymatrix{
\cat{C}\ar@/^2ex/[rr]^-{F} \save !L(.5) \ar@(dl,ul)^{H} \restore & \bot & \cat{D} \ar@/^2ex/[ll]^-G \save !R(.5) \ar@(ur,dr)^{L} \restore 
& \mbox{\qquad with} &
HG \ar@{=>}[r]^-{\rho} & GL
}
\end{equation}

\noindent The natural transformation $\rho \colon HG \Rightarrow GL$
will be called a \emph{step}.  Here $H$ is the \emph{behaviour
  functor}: we study $H$-coalgebras and give semantics for them in a
corecursive $H$-algebra.  This arrangement is well-known in the area
of coalgebraic modal
logic~\cite{BonsangueK05,PavlovicMW06,Klin07,ChenJung:catframework,Levy15},
but we shall see that its application is wider.
The following result shows different equivalent presentations of a step; for the proof,
see Appendix~\ref{app:details-steps}.

\begin{theorem}	
\label{thm:mates}
In the situation~\eqref{eq:adj-setting}, there are bijective
correspondences between natural transformations $\rho_1 \colon HG
\Rightarrow GL$, $\rho_2 \colon FH \Rightarrow LF$, $\rho_3 \colon FHG
\Rightarrow L$ and $\rho_4 \colon H \Rightarrow GLF$, as in Theorem~\ref{thm:matesgeneral}.
	
Moreover, if $H$ and $L$ happen to be monads, then $\rho_1$ is an
$\EM$-law (map $HG \Rightarrow GL$ compatible with the monad
structures) iff $\rho_2$ is a $\Kl$-law (map $FH \Rightarrow LF$
compatible with the monad structures) iff $\rho_4$ is a monad map; and
two further equivalent characterisations are respectively a lifting of
$G$ or an extension of $F$:
\[ \xymatrix@R-0.5pc{
	\EM(H) \ar[d]
		& \EM(L) \ar[d] \ar[l]_{\overline{G}} \\
	\Cat{C} 
		& \Cat{D} \ar[l]_G
	}
	\qquad\qquad
	\xymatrix@R-0.5pc{
	\Kl(H) \ar[r]^{\overline{F}}
		& \Kl(L) \\
	\Cat{C} \ar[r]^{F} \ar[u]
		& \Cat{D}  \ar[u]
	} \eqno{\raisebox{-2.2em}{$\QEDbox$}} \]
\end{theorem}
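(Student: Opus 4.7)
The first assertion is a direct specialization of Theorem~\ref{thm:matesgeneral}: taking the upper and lower adjunctions to coincide (so $\cat{C}' = \cat{C}$, $\cat{D}' = \cat{D}$, $F' = F$, $G' = G$, and hence $\eta' = \eta$, $\varepsilon' = \varepsilon$), the four natural transformations displayed there specialize to precisely $\rho_1, \rho_2, \rho_3, \rho_4$ of the present statement, and the transpose formulas at the end of Theorem~\ref{thm:matesgeneral} furnish the required conversions between them.

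For the monad part, my plan is to establish a chain of equivalences: $\rho_1$ is an EM-law $\Leftrightarrow$ $G$ lifts to $\overline{G}\colon \EM(L)\to\EM(H)$ $\Leftrightarrow$ $F$ extends to $\overline{F}\colon \Kl(H)\to\Kl(L)$ $\Leftrightarrow$ $\rho_2$ is a Kl-law $\Leftrightarrow$ $\rho_4$ is a monad map. The outer two bi-implications are the classical correspondences between generalised distributive laws and liftings or extensions. Concretely, given an EM-law $\rho_1$, one defines $\overline{G}(X,a) = (GX, Ga \after \rho_1)$; the algebra axioms for this assignment are exactly the unit and multiplication conditions on $\rho_1$. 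Symmetrically, a Kl-law $\rho_2$ yields an extension $\overline{F}$ by sending a Kleisli arrow $f \colon X \to HY$ to $\rho_2 \after Ff \colon FX \to LFY$, whose functoriality matches the Kl-law axioms. The middle bi-implication uses the standard fact that a lifting of a right adjoint corresponds to an extension of its left adjoint, so that $\overline{F} \dashv \overline{G}$ forms a single adjunction sitting over $F \dashv G$; the compatibility between the associated $\rho_1$ and $\rho_2$ is precisely the mate relation established in the first part of the theorem.

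For the last step of the chain, I unfold the monad-map conditions on $\rho_4 = G\rho_2 \after \eta H$, where $GLF$ carries its standard composite monad structure with unit $G\eta^L F \after \eta$ and multiplication $G\mu^L F \after GL\varepsilon LF$. Using naturality of $\eta$ and $\varepsilon$ together with the triangle identities, preservation of the unit and multiplication by $\rho_4$ reduces to the corresponding Kl-law axioms on $\rho_2$. The main obstacle will be precisely this verification: the diagrams intertwine the adjunction's unit and counit with the monads' units and multiplications, and it is easy to conflate them while chasing the diagrams. Once the mate formulas are invoked carefully, however, each equation becomes a routine application of naturality and the triangle identities, so the whole theorem follows from the first part together with these bookkeeping calculations.
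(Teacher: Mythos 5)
Your proposal is correct in its overall architecture and proves the same statement, but it routes the second part differently from the paper. The paper's proof (in Appendix~\ref{app:details-steps}) establishes the chain by (i) citing Street for ``$\rho_1$ is an $\EM$-law iff $\rho_4$ is a monad map'', (ii) giving an explicit two-diagram chase, built on Lemma~\ref{lm:mates-useful}, showing that $\rho_1$ being an $\EM$-law implies $\rho_2$ is a $\Kl$-law (with the converse analogous), and (iii) citing Johnstone and Mulry for the law/lifting and law/extension correspondences. You instead take the lifting and extension correspondences as the backbone, bridge them by doctrinal adjunction, and verify ``$\rho_2$ is a $\Kl$-law iff $\rho_4$ is a monad map'' by direct computation with the pushforward monad structure on $GLF$. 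All the facts you invoke are true, and your explicit constructions of $\overline{G}$ and $\overline{F}$ are the standard (and correct) ones; the trade-off is that the paper localises all the real work in one concrete mate calculation, whereas you distribute it between a cited 2-categorical theorem and a new transpose computation.

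The one genuine soft spot is your justification of the middle link. The functors $\overline{F}\colon \Kl(H)\to\Kl(L)$ and $\overline{G}\colon \EM(L)\to\EM(H)$ act on different pairs of categories, so they cannot literally ``form a single adjunction $\overline{F}\dashv\overline{G}$ sitting over $F\dashv G$''; that phrase does not typecheck. More importantly, you then say the compatibility between the associated $\rho_1$ and $\rho_2$ ``is precisely the mate relation established in the first part of the theorem''---but the first part only gives a bijection of \emph{natural transformations}; it says nothing about the monad-compatibility axioms being carried across that bijection. That transfer is exactly the content of the second part, i.e.\ exactly what the paper's diagram chase proves, so as written your middle step is circular. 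The fix is either to cite doctrinal adjunction (Kelly) honestly as the content of that step---lax monad-morphism structures on a right adjoint correspond to colax structures on its left adjoint via mates---or to carry out the diagram chase showing that the unit and multiplication axioms for $\rho_1$ translate, via the transpose formula and Lemma~\ref{lm:mates-useful}, into those for $\rho_2$, as the paper does.
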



Steps give rise to liftings to categories of algebras and coalgebras, as follows. 

\begin{definition}\label{def:step-liftings}
In the setting~\eqref{eq:adj-setting}, the step natural transformation
$\rho$ gives rise to both:
  \begin{itemize}
  \item a lifting $G_\rho$ of the right adjoint $G$, called the
    \emph{step-induced algebra lifting}:
$$
\vcenter{ \xymatrix@R-0.5pc@C-0.5pc{
    \Alg(H)\ar[d]  & \Alg(L)\ar[d] \ar[l]_-{G_{\rho}}\\
    \cat{C} & \cat{D}\ar[l]_-{G} } } \qquad\qquad
\begin{array}{l}
  G_{\rho} \Big(L(X) \xrightarrow{a} X\Big) \;\coloneqq\; \\
\quad  \Big(HG(X) \xrightarrow{\rho} 
  GL(X)\xrightarrow{G(a)}  G(X)\Big).
\end{array}
$$
\item dually, a lifting $F^{\rho}$ of the left adjoint $F$, called the
  \emph{step-induced coalgebra lifting}:
$$
\vcenter{ \xymatrix@R-0.5pc@C-0.5pc{
    \CoAlg(H)\ar[d]  \ar[r]^-{F^{\rho}} & \CoAlg(L)\ar[d]\\
    \cat{C}\ar[r]^-{F} & \cat{D} } } \qquad\qquad
\begin{array}{l}
  F^{\rho} \Big(X \xrightarrow{c} H(X)\Big) \;\coloneqq\; \\
\quad \Big(F(X) \xrightarrow{F(c)} FH(X) \xrightarrow{\rho} LF(X) \Big).
\end{array}
$$
\end{itemize}
\end{definition}

Our approach relies on the following basic result.

\begin{proposition}[\cite{CaprettaUustaluVene:recursive}]
\label{prop:pres-corec}
In the setting~\eqref{eq:adj-setting}, for a corecursive $L$-algebra
$a \colon L(\Theta) \to \Theta$, the transferred $H$-algebra $G_{\rho}
(\Theta,a) \colon HG(\Theta) \to G(\Theta)$ is also corecursive.
\end{proposition}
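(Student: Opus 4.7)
My plan is to reduce corecursiveness of $G_\rho(\Theta,a)$ in $\catc$ to corecursiveness of $(\Theta,a)$ in $\catd$ by adjoint transposition. Given an $H$-coalgebra $c \colon X \to H(X)$, the strategy is to set up a bijection between candidate coalgebra-to-algebra maps $f \colon X \to G(\Theta)$ from $(X,c)$ to $G_\rho(\Theta,a)$ and candidate coalgebra-to-algebra maps $g \colon F(X) \to \Theta$ from the transferred coalgebra $F^\rho(X,c)$ to $(\Theta,a)$, using the adjunction bijection $\catc(X,G(\Theta)) \cong \catd(F(X),\Theta)$.  Since $a$ is corecursive by assumption, there is exactly one such $g$; transposing back yields a unique $f$, which is what we need.

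The technical core is to verify that under this bijection, the fixed-point equation
\[ f \;=\; G(a) \circ \rho_\Theta \circ H(f) \circ c \]
is equivalent to
\[ g \;=\; a \circ L(g) \circ \rho_{F(X)} \circ F(c), \]
where on the left $\rho$ denotes $\rho_1 \colon HG \Rightarrow GL$ (used by $G_\rho$) and on the right $\rho$ denotes its mate $\rho_2 \colon FH \Rightarrow LF$ (used by $F^\rho$). Starting from the equation for $g$, I would apply $G(-)$ and precompose with $\eta_X$ to obtain $f$; the only nontrivial step is the identity
\[ \rho_{1,F(X)} \circ H(\eta_X) \;=\; G(\rho_{2,X}) \circ \eta_{H(X)} \qquad \text{as maps } H(X) \longrightarrow GLF(X). \]
Both sides equal $\rho_4$ from Theorem~\ref{thm:mates}; more concretely, one substitutes the mate formula $\rho_1 = GL\varepsilon \circ G\rho_2 G \circ \eta HG$, uses naturality of $\eta$ and of $\rho_2$, and eliminates $GL\varepsilon_{F(X)} \circ GLF(\eta_X)$ via the triangle identity $\varepsilon F \circ F\eta = \id_F$. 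The reverse direction (from the equation for $f$ to the one for $g$) is symmetric, using the dual mate formula together with the other triangle identity.

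The hardest part is really this mate-based bookkeeping: keeping track of which flavour of $\rho$ appears at each occurrence and applying naturality in the right direction. Once the Hom-bijection is shown to restrict to a bijection between the two sets of coalgebra-to-algebra maps, corecursiveness of $G_\rho(\Theta,a)$ follows immediately from corecursiveness of $(\Theta,a)$, because the adjunction isomorphism preserves uniqueness. \QED
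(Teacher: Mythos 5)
Your proof is correct and takes essentially the same route as the paper's: transpose the coalgebra to $F^{\rho}(X,c)$, invoke corecursiveness of $(\Theta,a)$ to get the unique map $F(X)\to\Theta$, and transpose back, with the adjunction bijection carrying one fixed-point equation to the other. The mate identity you isolate, $\rho_{1,F(X)}\circ H(\eta_X)=G(\rho_{2,X})\circ\eta_{H(X)}$, is exactly the second square of Lemma~\ref{lm:mates-useful}; the paper's two-line proof leaves this bookkeeping implicit, so your write-up just supplies the details.
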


\begin{proof}
Let $c\colon X \rightarrow H(X)$ be an $H$-coalgebra. Then $F^{\rho}(X,c)$ is
an $L$-coalgebra, which gives rise to a unique coalgebra-to-algebra
map $f\colon F(X) \rightarrow \Theta$ with $a \after L(f) \after \rho
\after F(c) = f$. The adjoint-transpose $g\colon X \rightarrow
G(\Theta)$ of $f$ is then the unique coalgebra-to-algebra map from
$(X,c)$ to $G_{\rho}(\Theta,a)$. \QED
\end{proof}

Thus, by analogy with the familiar statement that \emph{``right
  adjoints preserves limits''}, we have \emph{``step-induced algebra
  liftings of right adjoints preserve corecursiveness''}.  Now we give
the complete construction for semantics of a coalgebra.

\begin{theorem} \label{thm:finalcor}
Suppose that $L$ has a final coalgebra $\smash{\zeta\colon \Psi
  \congrightarrow L(\Psi)}$. Then for every $H$-coalgebra $(X,c)$
there is a unique coalgebra-to-algebra map $c^\dagger$ as on the left
below:
	$$
	\xymatrix@R-.5pc{
		X \ar@{-->}[r]^-{c^\dagger} \ar[d]_{c} 
			& G(\Psi)  \\
		H(X) \ar@{-->}[r]^-{H(c^\dagger)}
			& HG(\Psi) \ar[u]_{G_\rho(\Psi,\zeta^{-1})}
	}
	\qquad\qquad
	\xymatrix@R-.5pc{
		F(X) \ar@{-->}[r]^-{\overline{c^\dagger}} \ar[d]_{F^\rho(X,c)} 
			& \Psi  \\
		LF(X) \ar@{-->}[r]^-{L(\overline{c^\dagger})}
			& L(\Psi) \ar[u]_{\zeta^{-1}}
	}
	$$

\noindent The map $c^\dagger$ on the left can alternatively be
characterized via its adjoint transpose $\overline{c^{\dagger}}$ on
the right, which is the unique coalgebra-to-algebra morphism.
The latter can also be seen as the unique map to the final
coalgebra $\Psi \congrightarrow L(\Psi)$. \QED
\end{theorem}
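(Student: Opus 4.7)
The statement is essentially a direct corollary of Proposition~\ref{prop:pres-corec} together with the observation (recorded in the bullet points after the definition of corecursive algebra) that for any final coalgebra $\zeta \colon \Psi \congrightarrow L(\Psi)$, the inverse $\zeta^{-1} \colon L(\Psi) \to \Psi$ is a corecursive $L$-algebra. The plan is therefore to assemble three ingredients: (a) corecursiveness of $\zeta^{-1}$, (b) transfer of corecursiveness along $G_\rho$, and (c) the mate correspondence identifying the two diagrams in the statement.

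First, I would invoke Proposition~\ref{prop:pres-corec} with $(\Theta,a) = (\Psi, \zeta^{-1})$: since $\zeta^{-1}$ is corecursive as an $L$-algebra, its transfer $G_\rho(\Psi, \zeta^{-1}) \colon HG(\Psi) \to G(\Psi)$ is corecursive as an $H$-algebra. By the very definition of corecursiveness, this yields for every $H$-coalgebra $c \colon X \to H(X)$ a unique coalgebra-to-algebra morphism $c^\dagger \colon X \to G(\Psi)$, which is the left-hand diagram.

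Next, I would establish the equivalence with the right-hand diagram using the adjoint transpose $\overline{c^\dagger} \colon F(X) \to \Psi$. Unfolding commutativity of the left diagram, $c^\dagger = G(\zeta^{-1}) \after \rho \after H(c^\dagger) \after c$; transposing across $F \dashv G$ and using that $\rho_1$ and $\rho_2$ are mates (Theorem~\ref{thm:mates}) converts this equation into $\overline{c^\dagger} = \zeta^{-1} \after L(\overline{c^\dagger}) \after \rho \after F(c) = \zeta^{-1} \after L(\overline{c^\dagger}) \after F^\rho(X,c)$, i.e.\ commutativity of the right-hand diagram; the adjoint bijection also transports uniqueness. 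This is in fact precisely the argument used in the proof of Proposition~\ref{prop:pres-corec}, so I would just cite that construction.

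Finally, the identification of $\overline{c^\dagger}$ with the unique morphism into the final $L$-coalgebra $\zeta$ is immediate: since $\zeta$ is an isomorphism, a map $f \colon F(X) \to \Psi$ satisfies $f = \zeta^{-1} \after L(f) \after F^\rho(X,c)$ if and only if $\zeta \after f = L(f) \after F^\rho(X,c)$, which is exactly the condition for $f$ to be an $L$-coalgebra morphism from $F^\rho(X,c)$ to $\zeta$. Uniqueness then matches the universal property of the final coalgebra. There is no real obstacle here; the only point requiring a little care is making sure that the two instances of $\rho$ used when passing between left and right diagrams are related by the mate correspondence of Theorem~\ref{thm:mates}, but this was already set up so that the passage is formal.
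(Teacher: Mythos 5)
Your proposal is correct and follows essentially the same route as the paper: the theorem is stated with an immediate \QED precisely because it is the combination of the remark that $\zeta^{-1}$ is a corecursive $L$-algebra with Proposition~\ref{prop:pres-corec}, whose proof already contains the adjoint-transposition argument identifying the two diagrams. Your final observation that $\overline{c^\dagger}$ is the unique map to the final coalgebra (since $\zeta$ is an isomorphism) matches the paper's closing sentence.
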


Note that Theorem~\ref{thm:finalcor} generalises final coalgebra
semantics: taking in (\ref{eq:adj-setting}) $F = G = \Id_{\cat{C}}$
and $H = L$, the map $c^\dagger$ in the above theorem is the unique
homomorphism to the final coalgebra.  In the remainder of this paper
we focus on instances where $c^\dagger$ captures traces, and we
therefore refer to $c^\dagger$ as the \emph{trace semantics} map.

Given steps $\rho \colon HG \Rightarrow GL$ and $\theta \colon KG
\Rightarrow GM$, we can form a new step by composition, written as
$\theta \circledcirc \rho$ in:
\begin{equation}
\label{eqn:composition-steps}
\begin{array}{rcl}
\theta \circledcirc \rho
& \coloneqq &
\xymatrix{\Big(KHG\ar@{=>}[r]^-{K\rho} &
   KGL\ar@{=>}[r]^-{\theta L} & GML\Big) }
\end{array}
\end{equation}

We conclude with a lemma that relates the mate construction to composition of steps. 
See Appendix~\ref{app:details-steps} for a proof. 

\begin{lemma}
\label{lm:mates-composition}
Let $\rho \colon HG \Rightarrow GL$, $\theta \colon KG \Rightarrow GM$
be steps. Then $\big(\theta \circledcirc \rho\big)_2 = M
\rho_2 \after \theta_2 H$.
\end{lemma}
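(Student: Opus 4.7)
The plan is to verify the equation by comparing adjoint transposes. By Theorem~\ref{thm:mates}, a step $\rho \colon HG \Rightarrow GL$ is equivalently presented as its companion $\rho_4 \colon H \Rightarrow GLF$, and similarly a map $FKH \Rightarrow MLF$ is determined by its transpose $KH \Rightarrow GMLF$. So it suffices to show that the transposes of the two sides coincide as 2-cells $KH \Rightarrow GMLF$.

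Unwinding the mate formulas of Theorem~\ref{thm:mates}, a short calculation using naturality of $\eta$ and the triangle identity $G\varepsilon \after \eta G = \id_G$ yields the concise expressions $\rho_4 = \rho F \after H\eta$ and $\theta_4 = \theta F \after K\eta$. Using these, the adjoint transpose $KH \Rightarrow GMLF$ of the left-hand side $(\theta \circledcirc \rho)_2$ works out to $\theta LF \after K\rho_4$, while the transpose of the right-hand side $M\rho_2 \after \theta_2 H$ works out to $GM\rho_2 \after \theta_4 H$. The lemma thus reduces to the equation
\[
GM\rho_2 \after \theta_4 H \;=\; \theta LF \after K\rho_4 \qquad \colon\; KH \Rightarrow GMLF.
\]

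To prove this equation I would substitute $\rho_2 = \varepsilon LF \after F\rho F \after FH\eta$ on the left and apply naturality of $\theta_4$ at the components of $\rho F \colon HGF \Rightarrow GLF$, which slides $\theta_4$ past $K\rho F$. What remains on the outside is $GM\varepsilon LF \after \theta_4 GLF$; substituting $\theta_4 = \theta F \after K\eta$, naturality of $\theta$ at $\varepsilon LF$ followed by the triangle identity collapses this factor to $\theta LF$. Reassembling, the composite becomes $\theta LF \after K(\rho F \after H\eta) = \theta LF \after K\rho_4$, as required.

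The main obstacle is not conceptual but bookkeeping of whiskerings: one must keep track of the many $F$'s, $G$'s, $\eta$'s and $\varepsilon$'s appearing in the mate formulas. Conceptually, the lemma is one coherence instance of the fact that the mate correspondence is functorial with respect to vertical pasting of squares in the double category of categories, functors, and natural transformations; a string-diagrammatic presentation renders the proof essentially immediate, with the two naturality applications and the triangle identity being the only moves.
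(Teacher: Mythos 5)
Your proof is correct. It is, however, the mirror image of the paper's argument rather than a literal match: the paper stays on the counit side, expanding $\big(\theta \circledcirc \rho\big)_2$ via the mate formula as $\varepsilon MLF \after F\theta_1 LF \after FK\rho_1 F \after FKH\eta$ and then sliding $\theta_2$ through the composite using its naturality together with the \emph{first} square of Lemma~\ref{lm:mates-useful} ($L\varepsilon \after \rho_2 G = \varepsilon L \after F\rho_1$). You instead transpose both sides across the adjunction and verify the equivalent identity $GM\rho_2 \after \theta_4 H = \theta LF \after K\rho_4$ of natural transformations $KH \Rightarrow GMLF$; your formula $\rho_4 = \rho_1 F \after H\eta$ is precisely the \emph{second} square of Lemma~\ref{lm:mates-useful}, and your collapse of $GM\varepsilon LF \after \theta_4 GLF$ to $\theta LF$ via naturality of $\theta$ and a triangle identity does the work that the first square does in the paper. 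The two routes are of identical length and difficulty, so neither buys much over the other, though your version makes the pasting/functoriality-of-mates reading a bit more explicit. One small imprecision in your write-up: to move $\theta_4$ to the outside you must invoke its naturality at both $H\eta$ and $\rho F$, not only at $\rho F$ as stated; this does not affect the correctness of the argument.
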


\section{Traces via Eilenberg-Moore}
\label{sec:em}

We recall the approach to trace semantics  
developed in~\cite{JacobsSS15,SBBR13,BonsangueMS13}, putting it in the framework of the previous section. 
The approach deals with coalgebras for the composite functor $BT$,
where $T$ is a monad that captures the `branching' aspect. 
The following assumptions are required. 
\begin{assumption}[Traces via Eilenberg-Moore]\label{ass:em} In this section, we assume:
\begin{enumerate}
\item An endofunctor $B \colon \cat{C} \rightarrow \cat{C}$ with a
  final coalgebra $\zeta \colon \Theta \congrightarrow B(\Theta)$.

\item A monad $(T,\eta,\mu)$, with the standard adjunction $\free
  \dashv U$ between categories $\cat{C} \leftrightarrows \EM(T)$,
  where $U$ is `forget' and $\free$ is for `free algebras'.

\item A lifting $\overline{B}$ of $B$, as in:
	\begin{equation}\label{eq:lifting-em}
\vcenter{
	\xymatrix@R-0.5pc{
		\EM(T) \ar[r]^{\overline{B}} \ar[d]_U
			& \EM(T)  \ar[d]^U \\
		\Cat{C} \ar[r]_{B}
			& \Cat{C}
	}
}
\end{equation}

\noindent or, equivalently, an $\EM$-law $\kappa \colon TB \Rightarrow
BT$.
\end{enumerate}
\end{assumption}

\begin{example}
\label{ex:first-em}
To briefly illustrate these ingredients, we consider non-deterministic
automata.  These are $BT$-coalgebras with $B\colon \Sets \rightarrow
\Sets$, $B(X) = 2 \times X^A$ where $2 = \{\bot,\top\}$ and $T$ the
finite powerset monad.  The functor $B$ has a final coalgebra carried
by the set $2^{A^*}$ of languages.  Further, $\EM(T)$ is the category
of join semi-lattices (JSLs). The lifting is defined by products in
$\EM(T)$, using the JSL on $2$ given by the usual ordering $\bot \leq
\top$.  By the end of this section, we revisit this example and obtain
the usual language semantics.
\end{example}

The above three assumptions give rise to the following instance of our
general setting~\eqref{eq:adj-setting}:
\begin{equation}\label{eq:em-standard}
\xymatrix@C-0.5pc{
\cat{C}\ar@/^2ex/[rr]^-{\free} \save !L(.5) \ar@(dl,ul)^{BT} \restore & \bot & \EM(T) \ar@/^2ex/[ll]^-U \save !R(.5) \ar@(ur,dr)^{\overline{B}} \restore 
& \quad\mbox{with}\quad
{\begin{array}{l}
\rho\colon BTU \Longrightarrow U\overline{B} \; \mbox{ where } 
\\
\rho_{(X,a)} = \big(BTX \xrightarrow{Ba} BX\big)
\end{array}}
}
\end{equation}

\noindent Actually---and equivalently, by Theorem~\ref{thm:mates}---the 
step $\rho$ is most easily given in terms of $\rho_4 \colon BT
\Rightarrow U\overline{B}\free$: since $\overline{B}$ lifts $B$, we
have $U\overline{B}\free = BU\free = BT$, so that $\rho_4$ is then
defined simply as the identity.

The following result is well-known, and is (in a small variation) due
to~\cite{TuriP97}.

\begin{lemma}
\label{lm:final-coalg-em}
There is a unique algebra structure $a \colon T(\Theta) \rightarrow
\Theta$ making $((\Theta,a),\zeta)$ a
$\overline{B}$-coalgebra. Moreover, this coalgebra is final in
$\CoAlg(\overline{B})$. \QED
\end{lemma}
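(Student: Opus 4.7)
The plan is to define $a$ by finality of $\zeta$, then verify it really is a $T$-algebra, and finally lift finality from $\CoAlg(B)$ to $\CoAlg(\overline{B})$.

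First, I unpack the condition. Since $\overline{B}$ lifts $B$, the $\EM$-law $\kappa\colon TB \Rightarrow BT$ corresponding to $\overline{B}$ gives $\overline{B}(\Theta,a) = (B\Theta,\, Ba \after \kappa_\Theta)$. So requiring $\zeta\colon (\Theta,a) \to \overline{B}(\Theta,a)$ to be a $T$-algebra morphism amounts to $\zeta \after a = Ba \after \kappa_\Theta \after T\zeta$. In other words, $a$ is a $B$-coalgebra morphism from $(T\Theta,\, \kappa_\Theta \after T\zeta)$ into $(\Theta,\zeta)$, and finality of $\zeta$ delivers such an $a$ uniquely.

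Second, I verify that this $a$ is a $T$-algebra. The strategy is the usual one: present both sides of each monad law as $B$-coalgebra morphisms out of a common coalgebra into $(\Theta,\zeta)$, then invoke finality. For the unit law, a short diagram chase using naturality of $\eta$ and the $\EM$-law axiom $\kappa \after \eta B = B\eta$ shows that $a \after \eta_\Theta$ is a $B$-coalgebra endomorphism of $(\Theta,\zeta)$, so it must equal $\id_\Theta$. For associativity, the multiplication axiom $\kappa \after \mu B = B\mu \after \kappa T \after T\kappa$ together with naturality of $\mu$ and $\kappa$ exhibits both $a \after \mu_\Theta$ and $a \after Ta$ as $B$-coalgebra morphisms from $(T^2\Theta,\, \kappa_{T\Theta} \after T\kappa_\Theta \after T^2\zeta)$ into $(\Theta,\zeta)$, so finality forces them to agree.

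Third, I establish finality of $((\Theta,a),\zeta)$ in $\CoAlg(\overline{B})$. Given any $\overline{B}$-coalgebra $((X,b),c)$, the underlying $B$-coalgebra $(X,c)$ admits a unique morphism $h\colon X \to \Theta$ into $(\Theta,\zeta)$. It remains to check that $h$ is automatically a $T$-algebra morphism, i.e., $h \after b = a \after Th$; using the lifting condition $c \after b = Bb \after \kappa_X \after Tc$ together with naturality of $\kappa$, both composites reappear as $B$-coalgebra morphisms from $(TX,\, \kappa_X \after Tc)$ into $(\Theta,\zeta)$, and finality again settles the matter. Uniqueness of $h$ in $\CoAlg(\overline{B})$ is inherited from its uniqueness in $\CoAlg(B)$. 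The main obstacle throughout is the multiplication law for $a$: choosing the right auxiliary coalgebra on $T^2\Theta$ and threading the $\EM$-law axiom through the naturality squares takes some care, but every other step follows the same finality-plus-diagram-chase template.
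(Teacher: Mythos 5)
Your proof is correct and follows essentially the same route as the paper: the paper's (much terser) proof defines $a$ by exactly the same finality diagram, i.e.\ as the unique $B$-coalgebra morphism $(T\Theta,\kappa_\Theta\after T\zeta)\to(\Theta,\zeta)$, and then asserts without detail that this yields an Eilenberg--Moore algebra and a final $\overline{B}$-coalgebra. Your verification of the unit and multiplication laws and of finality in $\CoAlg(\overline{B})$ correctly supplies the standard finality-plus-naturality chases that the paper leaves implicit (citing Turi--Plotkin).
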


\begin{proof}
We recall that the map $a\colon T(\Theta) \rightarrow \Theta$ is obtained
by finality in:
\begin{equation}
\label{eq:final-coalg-em}
\vcenter{\xymatrix@R-0.5pc{
T(\Theta)\ar[d]_{\kappa \after T(\zeta)}\ar@{-->}[r]^-{a} & \Theta\ar[d]^{\zeta}_{\cong}
\\
BT(\Theta)\ar@{-->}[r]_-{B(a)} & B(\Theta)
}}
\end{equation}

\noindent This gives an Eilenberg-Moore algebra $(\Theta,a)$, with a
$\overline{B}$-coalgebra $\zeta \colon (\Theta,a) \rightarrow
\overline{B}(\Theta, a)$ which is final. \QED
\end{proof}

We apply the step-induced algebra lifting $G_\rho \colon
\Alg(\overline{B}) \rightarrow \Alg(BT)$ to the inverse of this final
$\overline{B}$-coalgebra, obtaining a $BT$-algebra:
\begin{equation}
\label{eqn:caem}
\begin{array}{rcccl}
\big(BT(\Theta) \xrightarrow{\caem} \Theta\big) 
& \coloneqq &
G_\rho((\Theta,a),\zeta^{-1})
& = &
\big(BT(\Theta) \xrightarrow{B(a)} B(\Theta) \xrightarrow{\zeta^{-1}}\Theta\big).
\end{array}
\end{equation}

\noindent By Theorem~\ref{thm:finalcor}, this $BT$-algebra $\caem$ is
corecursive, giving us trace semantics of $BT$-coalgebras. More
explicitly, given a coalgebra $c \colon X \rightarrow BT(X)$, the
trace semantics is the unique map, written as $\trem_c$, making the
following square commute.
\begin{equation}
\label{eq:traces-em}
\vcenter{\xymatrix@R-0.5pc{
	X \ar@{-->}[r]^{\trem_c} \ar[d]_c
		& \Theta \\
	BT(X) \ar@{-->}[r]_{BT(\trem_c)}
		& BT(\Theta) \ar[u]_{\caem}
}}
\end{equation}

\noindent The unique map $\trem_c$ in~\eqref{eq:traces-em} appears in
the literature as a `coiteration up-to' or `unique solution'
theorem~\cite{Bartels03}.  Examples follow later in this section
(Theorem~\ref{thm:em-automata}, Example~\ref{ex:em-automata}).

In~\cite{JacobsSS15,SBBR13}, the above trace semantics of
$BT$-coalgebras arises through `determinisation', which we explain
next. Given a coalgebra $c \colon X \rightarrow BT(X)$, one takes its
adjoint transpose:
$$
\frac{c \colon X \longrightarrow BT(X) = BU\free(X) = U\overline{B}\free(X)}
{\overline{c} \colon \free(X) \longrightarrow \overline{B}\free(X)}
$$

It follows from Theorem~\ref{thm:mates} and our definition of $\rho$
that this transpose coincides with the application of the step-induced
coalgebra lifting $\free^\rho \colon \CoAlg(BT) \rightarrow
\CoAlg(\overline{B})$ from the previous section, i.e.,
$\free^\rho(X,c) = (\free(X),\overline{c})$. The functor $\free^\rho$
thus plays the role of determinisation, see~\cite{JacobsSS15}.  By
Theorem~\ref{thm:finalcor}, the trace semantics $\trem_c$ can
equivalently be characterised in terms of $\free^\rho$, as the unique
map $\overline{\trem_c}$ making the diagram below commute.  
\begin{equation}
\label{eq:det}
\vcenter{
\xymatrix@R-0.5pc{
	T(X) \ar@{-->}[r]^{\overline{\trem_c}} \ar[d]_{\overline{c}}
		& \Theta \\
	BT(X) \ar@{-->}[r]_{B(\overline{\trem_c})}
		& B(\Theta) \ar[u]_{\zeta^{-1}}
}
}
\end{equation}

\noindent This is how the trace semantics via Eilenberg-Moore is
presented in~\cite{JacobsSS15,SBBR13}: as the transpose $\trem_c =
\overline{\trem_c} \after \eta_{X} \colon X \rightarrow \Theta$.

We conclude this section by recalling a canonical construction of a
distributive law~\cite{Jacobs06} for a class of `automata-like'
examples that we will spell out in Example~\ref{ex:em-automata}.

\begin{theorem}\label{thm:em-automata}
Let $\Omega$ be a set, $T$ a monad on $\Sets$ and $t \colon T(\Omega)
\rightarrow \Omega$ an $\EM$-algebra.  Let $B \colon \Sets \rightarrow
\Sets$, $B(X) = \Omega \times X^A$, and $\kappa \colon TB \Rightarrow
BT$ given by
	$$
	\kappa_X :=
	\Big(
	\xymatrix@C=1.5cm{
		T(\Omega \times X^A) \ar[r]^-{\langle T(\pi_1),T(\pi_2)\rangle}
			& T(\Omega) \times T(X^A) \ar[r]^-{t \times \st}
			& \Omega \times T(X)^A
	}
	\Big) \,.
	$$
	Then $\kappa$ is an $\EM$-law.
	Moreover, the algebra structure
	on the carrier of the final coalgebra $(\Omega^{A^*}, \zeta)$ mentioned in 
	the statement of
	Lemma~\ref{lm:final-coalg-em} is given by
	$
	\xymatrix@C-0.5pc{
		T(\Omega^{A^*}) \ar[r]^-{\st}
			& T(\Omega)^{A^*} \ar[r]^-{t^{A^*}}
			& \Omega^{A^*}
	}
	$. Hence, this algebra is the carrier of a final $\overline{B}$-coalgebra.
	\QED
\end{theorem}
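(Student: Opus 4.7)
The plan is to proceed in two parts: first, verify that $\kappa$ is an $\EM$-law, and then identify the algebra structure produced by Lemma~\ref{lm:final-coalg-em}.

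For the first part, I would note that naturality of $\kappa$ is inherited from naturality of the pair $\langle T(\pi_1), T(\pi_2)\rangle$ and of $\st$. The unit axiom $\kappa \after \eta_{B(X)} = B(\eta_X)$ splits componentwise: on $\Omega$ it reduces to $t \after \eta_\Omega = \id_\Omega$, and on $T(X)^A$ to the standard identity $\st \after \eta_{X^A} = (\eta_X)^A$, which follows directly from the defining formula $\st(s)(a) = S(\evmap_a)(s)$ together with naturality of $\eta$. For the multiplication axiom $\kappa \after \mu_{B(X)} = B(\mu_X) \after \kappa_{T(X)} \after T(\kappa_X)$, I would again reason componentwise: the $\Omega$-component is exactly the $\EM$-algebra law $t \after \mu_\Omega = t \after T(t)$, while the $T(X)^A$-component reduces to the coherence $(\mu_X)^A \after \st \after T(\st) = \st \after \mu_{X^A}$ relating the strength to monad multiplication. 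I expect the bookkeeping on this latter coherence to be the main technical nuisance, but it is purely formal from the definition of $\st$ and naturality of $\mu$.

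For the second part, set $a \coloneqq t^{A^*} \after \st \colon T(\Omega^{A^*}) \to \Omega^{A^*}$. First I would observe that $a$ is itself an $\EM$-algebra: pointwise lifting along $\st$ transports any $\EM$-algebra on $\Omega$ to one on the function space $\Omega^{A^*}$, the two algebra laws reducing to those for $t$ via the same strength coherences used above. By the uniqueness clause of Lemma~\ref{lm:final-coalg-em}, it then suffices to verify the square $\zeta \after a = B(a) \after \kappa_{\Omega^{A^*}} \after T(\zeta)$. Unfolding $\zeta(\phi) = (\phi(\epsilon), \lambda b.\,\lambda w.\,\phi(bw))$, where $\epsilon$ denotes the empty word, and using $\st(\tau)(w) = T(\evmap_w)(\tau)$, both sides evaluated at $\tau \in T(\Omega^{A^*})$ have first component $t(T(\evmap_\epsilon)(\tau))$ and $b$-indexed second component $w \mapsto t(T(\evmap_{bw})(\tau))$, so the square commutes. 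Hence $a$ must coincide with the algebra produced by Lemma~\ref{lm:final-coalg-em}, and that lemma additionally yields finality of the resulting $\overline{B}$-coalgebra.
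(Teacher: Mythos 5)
Your proof is correct, and it is the standard argument one would give here: the paper itself omits the proof entirely (the theorem is stated with a \QED and the construction is recalled from the cited prior work), so there is nothing to contrast it with. Both halves of your argument check out --- the componentwise reduction of the $\EM$-law axioms to the algebra laws for $t$ and the strength/monad coherences, and the identification of $a = t^{A^*} \after \st$ via the uniqueness in Lemma~\ref{lm:final-coalg-em} after verifying the square $\zeta \after a = B(a) \after \kappa \after T(\zeta)$.
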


\begin{example}
\label{ex:em-automata}
By Theorem~\ref{thm:em-automata}, we obtain an explicit description
of the trace semantics arising from the corecursive algebra~\eqref{eq:traces-em}:
for any $\langle o, f \rangle \colon X \rightarrow \Omega \times T(X)^A$,
the trace semantics is the unique map $\trem$ in:
$$
\xymatrix@R-0.5pc{
	X \ar@{-->}[rrr]^-{\trem} \ar[d]_-{\langle o,f \rangle}
		& & & \Omega^{A^*} \\
	BT(X) \ar@{-->}[r]_-{BT(\trem)}
		& BT(\Omega^{A^*}) \ar[r]_-{B(\st)}
		& B(T(\Omega)^{A^*}) \ar[r]_-{B(t^{A^*})}
		& B(\Omega^{A^*}) \ar[u]_-{\zeta^{-1}}
}
$$ We instantiate the trace semantics $\trem$ for various choices of
$\Omega$, $T$ and $t$. Given a coalgebra $\langle o, f \rangle \colon
X \rightarrow \Omega \times T(X)^A$, we have $\trem(x)(\varepsilon) =
o(x)$ independently of these choices. The table below lists the
inductive case $\trem(x)(aw)$ respectively for non-deterministic
automata (NDA) where branching is interpreted as usual
(NDA-$\exists$), NDA where branching is interpreted conjunctively
(NDA-$\forall$) and (reactive) probabilistic automata (PA).  Here
$\Powfin$ is the finite powerset monad, and $\Dstfin$ the finitely
supported distribution (or subdistribution) monad.
$$
\begin{array}{c|c|c|c|c}
		& T & \Omega & t \colon T(\Omega) \rightarrow \Omega & \trem(x)(aw) \\
		\hline 
		\text{NDA-}\exists
		& \Powfin
		& 2 = \{\bot,\top\}
		& S \mapsto \bigvee S
		& \bigvee_{y \in f(x)(a)} \trem(y)(w) \\
		\text{NDA-}\forall
		& \Powfin
		& 2 = \{\bot,\top\}
		& S \mapsto \bigwedge S
		& \bigwedge_{y \in f(x)(a)} \trem(y)(w) \\
		\text{PA}
		& \Dstfin
		& [0,1]
		& \varphi \mapsto \sum_{p \in [0,1]}p \cdot \varphi(p) 
		& \sum_{y \in X} \trem(y)(w) \cdot f(x)(a)(y) 
\end{array}
$$
For other examples, and a concrete presentation of the associated
determinisation constructions, see~\cite{JacobsSS15,SBBR13}.
\end{example}

\subsection{Eilenberg-Moore trace semantics for $TA$-coalgebras}\label{sec:gen-em}

We now extend the above treatment of trace semantics of $BT$-coalgebras via Eilenberg-Moore categories,
to cover coalgebras for a composite functor $TA$ as well, where $A$ is another endofunctor 
on the base category $\cat{C}$. This integrates the \emph{extension semantics} of~\cite{JacobsSS15} in the present setting;
the latter covers examples such as non-deterministic automata (as in Example~\ref{ex:nda-generative}) and probabilistic systems
in generative form.
The approach to trace semantics of $TA$-coalgebras in this section extends Assumption~\ref{ass:em},
making use of a lifting $\overline{B}$ of a functor $B$ to obtain traces as a suitable final coalgebra. 
Note that $A$ itself is not lifted, but is connected to $B$ via a step $\rho$ as stipulated in the global
assumptions of this subsection, described next.

\begin{assumption}\label{ass:emta} In addition to Assumption~\ref{ass:em},
we assume a functor $A \colon \cat{C} \rightarrow \cat{C}$, and a step $\rho$
in:
$$
\xymatrix@C-0.5pc{
\cat{C}\ar@/^2ex/[rr]^-{\free} \save !L(.5) \ar@(dl,ul)^{A} \restore & \bot & \EM(T) \ar@/^2ex/[ll]^-U \save !R(.5) \ar@(ur,dr)^{\overline{B}} \restore 
& \qquad\mbox{with}\qquad
{\begin{array}{l}
\rho \colon AU \Longrightarrow U \overline{B} \; 
\end{array}}
}
$$ 
\end{assumption}

\noindent 
The counit $\varepsilon \colon \free U \rightarrow U$ is given by
$\varepsilon_{(X,a)} = a$; notice that $\free U (X,a) = (T(X), \mu_X)$. 
Applying the forgetful functor to $\varepsilon$
gives another step $U \varepsilon \colon TU \Rightarrow U$,
where the `$L$' (from~\eqref{eq:adj-setting}) in the codomain is the
identity functor.  We can compose the steps $U\varepsilon$ and $\rho$
in two ways. First, we get a step for $AT$ by composing as follows:
\[ \begin{array}{rcl}
\rho \circledcirc U\varepsilon
& = &
\xymatrix{\Big(ATU\ar@{=>}[r]^-{AU\varepsilon} & 
   AU\ar@{=>}[r]^-{\rho} & U \overline{B} \Big)}
\end{array} \]

\noindent If $A=B$, then taking $\rho$ to be the identity is precisely
the step defined in~\eqref{eq:em-standard}.

We turn to the other composition of $U\varepsilon$ and $\rho$, which
gives a step for $TA$:
\[ \begin{array}{rcl}
U\varepsilon \circledcirc \rho
& = &
\xymatrix{\Big(TAU\ar@{=>}[r]^-{T\rho} &
 TU  \overline{B}  \ar@{=>}[r]^-{U\varepsilon  \overline{B} } & U  \overline{B}  \Big) }
\end{array} \]

As we will see in Proposition~\ref{prop:extension-step}, 
steps $\rho$ as in Assumption~\ref{ass:emta} correspond to natural transformations of the form
$\mathfrak{e}
\colon TA \Rightarrow BT$ making the following diagram commute:
\begin{equation}
\label{eqn:extensiondiagsingle}
\vcenter{\xymatrix@R-0.5pc{
T^{2}A\ar[rr]^-{\mu}\ar[d]_{T(\mathfrak{e})} & & TA\ar[d]^{\mathfrak{e}}
\\
TBT\ar[r]^-{\kappa} & BT^{2}\ar[r]^-{B(\mu)} & BT
}}
\end{equation}
In~\cite{JacobsSS15}, a natural transformation $\mathfrak{e}$ making this diagram commute is called an 
\emph{extension law} if it additionally satisfies a coherence axiom with a $\Kl$-law. 
We will only see this later, in our
comparison between different approaches for assigning trace semantics,
see Section~\ref{sec:comp-kleisli-em}.
The last line of the correspondence below involves natural transformations of the form
$A \Rightarrow BT$, which are called \emph{generic observers} in~\cite{Goncharov13}.

\begin{proposition}
\label{prop:extension-step}
There is a one-to-one correspondence between:
$$\begin{prooftree}
\begin{prooftree}
\text{steps }\; \rho \colon AU \Longrightarrow U\overline{B}
\Justifies
\mathfrak{e} \colon TA \Rightarrow BT 
  \;\text{ satisfying~\eqref{eqn:extensiondiagsingle}}
\end{prooftree}
\Justifies
A \Longrightarrow BT
\end{prooftree}$$
\end{proposition}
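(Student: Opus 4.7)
The plan is to apply Theorem~\ref{thm:mates} to the adjunction $\free \dashv U$ with behaviour functor $A$ on $\Cat{C}$ and $\overline{B}$ on $\EM(T)$, and then unfold what the resulting natural transformations look like after forgetting algebra structure.

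First I would establish the bottom line of the correspondence. By Theorem~\ref{thm:mates}, a step $\rho_1 = \rho \colon AU \Rightarrow U\overline{B}$ corresponds bijectively to $\rho_4 \colon A \Rightarrow U\overline{B}\free$. Since $\overline{B}$ lifts $B$ along $U$, we have $U\overline{B}\free = BU\free = BT$, so $\rho_4$ is exactly a natural transformation $A \Rightarrow BT$. This gives the bottom deduction line in the statement, with no side condition.

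Next I would establish the upper line by passing through the mate $\rho_2 \colon \free A \Rightarrow \overline{B}\free$ in $\EM(T)$, again from Theorem~\ref{thm:mates}. Applying $U$ gives $U\rho_2 \colon TA \Rightarrow BT$; call this $\mathfrak{e}$. The key computation is to identify the $\EM$-algebra structures on the domain and codomain of $\rho_{2,X}$: the domain $\free A(X)$ carries the free structure $\mu_{AX}\colon T^2AX \to TAX$, while the codomain $\overline{B}\free(X) = \overline{B}(TX,\mu_X)$ has carrier $BTX$ with structure map $B(\mu_X)\after \kappa_{TX}\colon TBTX \to BT^2X \to BTX$, computed from the $\EM$-law $\kappa$ corresponding to the lifting $\overline{B}$. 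The condition that $\rho_{2,X}$ is an algebra homomorphism therefore reads
\[
	\mathfrak{e}_X \after \mu_{AX} \;=\; B(\mu_X)\after \kappa_{TX}\after T(\mathfrak{e}_X),
\]
which is precisely diagram~\eqref{eqn:extensiondiagsingle}. Conversely, any $\mathfrak{e}\colon TA \Rightarrow BT$ satisfying~\eqref{eqn:extensiondiagsingle} has components that are $\EM$-morphisms with the structures just computed, hence lifts uniquely to a natural transformation $\rho_2 \colon \free A \Rightarrow \overline{B}\free$ in $\EM(T)$ with $U\rho_2 = \mathfrak{e}$. The naturality of $\mathfrak{e}$ in $\Cat{C}$ transfers to naturality of $\rho_2$ in $\EM(T)$ because $\free$ is essentially surjective onto free algebras and $U$ is faithful; alternatively one invokes directly the bijection in Theorem~\ref{thm:mates} and only has to check the coherence side-condition.

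The only mildly delicate step is bookkeeping for the structure map on $\overline{B}\free(X)$; once that is in place, the diagram that expresses ``$\rho_{2,X}$ is an $\EM$-morphism'' coincides on the nose with~\eqref{eqn:extensiondiagsingle}. All three correspondences — step $\rho$, extension $\mathfrak{e}$ with coherence, and generic observer $A \Rightarrow BT$ — thus follow by reading Theorem~\ref{thm:mates} through the lifting $\overline{B}$ and the free/forgetful adjunction. I do not anticipate any genuine obstacle; the content is making the identifications $U\overline{B}\free = BT$ and ``algebra-morphism condition $=$ \eqref{eqn:extensiondiagsingle}'' explicit.
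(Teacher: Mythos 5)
Your proposal is correct and follows essentially the same route as the paper: both directions are read off from Theorem~\ref{thm:mates} applied to the adjunction $\free \dashv U$, with the upper correspondence given by $\rho_1 \leftrightarrow \rho_2 \colon \free A \Rightarrow \overline{B}\free$ (whose algebra-homomorphism condition, with codomain structure $B(\mu_X)\after\kappa_{TX}$, is exactly~\eqref{eqn:extensiondiagsingle}) and the lower one by $\rho_2 \leftrightarrow \rho_4 \colon A \Rightarrow U\overline{B}\free = BT$, which carries no further coherence condition. Your version merely makes the structure-map bookkeeping more explicit than the paper's.
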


\begin{proof}
By Theorem~\ref{thm:mates} the natural transformation $\rho = \rho_{1}
\colon AU \Rightarrow U\overline{B}$ corresponds to $\mathfrak{e} =
\rho_{2} \colon \free A \Rightarrow \overline{B}\free$; the latter is
a natural transformation $TA \Rightarrow BT$ whose components are maps
of algebras $\mu_{X} \rightarrow \overline{B}(\mu_{x})$, as expressed
by Diagram~\eqref{eqn:extensiondiagsingle}. This covers the first
correspondence in the proposition.

By Theorem~\ref{thm:mates}, a natural transformation $\mathfrak{e} \colon
TA \Rightarrow BT$ further corresponds to a natural transformation $A
\Rightarrow U\overline{B} \free = BU \free = BT$. The latter is simply
a natural transformation on the base category $\Cat{C}$, which means
no further coherence axioms like~\eqref{eqn:extensiondiagsingle} need
to be checked. \QED

\auxproof{ In more detail, given $\mathfrak{e} \colon TA \Rightarrow BT$
  and an Eilenberg-Moore algebra $a\colon T(X) \rightarrow X$ we get a
  map:
\[ \xymatrix{
A(X)\ar[r]^-{\eta} & TA(X)\ar[r]^-{\mathfrak{e}} & BT(X)\ar[r]^-{B(a)} & B(X)
} \]

\noindent These maps form a natural transformation $AU \Rightarrow
U\overline{B}$.

Let's write $\overline{\mathfrak{e}}_{(X,a)}$ for the above map, and let
$f\colon (X,a) \rightarrow (Y,b)$ be a map of algebras. Naturality of
$\overline{\mathfrak{e}}$ is shown in:
\[ \begin{array}{rcl}
U\overline{B}(f) \after \overline{\mathfrak{e}}_{(X,a)}
& = &
B(f) \after B(a) \after \mathfrak{e} \after \eta
\\
& = &
B(b) \after BT(f) \after \mathfrak{e} \after \eta
\\
& = &
B(b) \after \mathfrak{e} \after TA(f) \after \eta
\\
& = &
B(b) \after \mathfrak{e} \after \eta \after A(f)
\\
& = &
\overline{\mathfrak{e}}_{(Y,b)} \after AU(f)
\end{array} \]

\noindent Apparently we don't need~\eqref{eqn:extensiondiagsingle}.

Conversely, given $\rho\colon AU \Rightarrow U\overline{B}$ we
obtain:
\[ \xymatrix@C+0.5pc{
TA(X)\ar[r]^-{TA(\eta)} & TAT(X)\ar[r]^-{T(\rho_{(TX,\mu)})} &
   TBT(X)\ar[r]^-{\kappa} & BT^{2}(X)\ar[r]^-{B(\mu)} & BT(X)
} \]

We check that this map, call it $\overline{\rho}_{X}$, makes
Diagram~\eqref{eqn:extensiondiagsingle} commute.
\[ \begin{array}{rcl}
B(\mu) \after \kappa \after T(\overline{\rho})
& = &
B(\mu) \after \kappa \after TB(\mu) \after T(\kappa) \after
   T^{2}(\rho) \after T^{2}A(\eta)
\\
& = &
B(\mu) \after BT(\mu) \after \kappa \after T(\kappa) \after
   T^{2}(\rho) \after T^{2}A(\eta)
\\
& = &
B(\mu) \after B(\mu) \after \kappa \after T(\kappa) \after
   T^{2}(\rho) \after T^{2}A(\eta)
\\
& = &
B(\mu) \after \kappa \after \mu \after T^{2}(\rho) \after T^{2}A(\eta)
\\
& = &
B(\mu) \after \kappa \after T(\rho) \after TA(\eta) \after \mu
\\
& = &
\overline{\rho} \after \mu
\end{array} \]

\noindent Strangely, here we don't use that $\rho$ is a map
of EM-algebras.
}
\end{proof}

The composed step $U\varepsilon \circledcirc \rho \colon TAU
\Rightarrow U\overline{B}$ gives a corecursive algebra, by applying
the step-induced algebra lifting $G_{U\varepsilon \circledcirc \rho}
\colon \Alg(\overline{B}) \rightarrow \Alg(TA)$ to the final
$\overline{B}$-coalgebra $((\Theta, a), \zeta)$, from 
Lemma~\ref{lm:final-coalg-em}.  We call this corecursive algebra
$\caem^A \colon TA(\Theta) \rightarrow \Theta$ to distinguish it from
$\caem \colon BT(\Theta) \rightarrow \Theta$.  It is given by:
\[ \xymatrix@R-1.7pc@C-1pc{
\caem^{A} \coloneqq \Big(TA(\Theta) = TAU(\Theta,a)\ar[r]^-{T(\rho)} &
   TU\overline{B}(\Theta,a)\ar@{=}[d]\ar[rr]^-{U(\varepsilon)} & &
   U\overline{B}(\Theta,a)\ar@{=}[d]
\\
& TB(\Theta)\ar[r]_-{\kappa} & BT(\Theta)\ar[r]_-{B(a)} & 
   B(\Theta)\ar[r]_-{\zeta^{-1}} & \Theta\Big)
} \]

\noindent This corecursive algebra gives semantics to
$TA$-coalgebras. It can be expressed in terms of the corecursive
$BT$-algebra $\caem$, making use of $\rho_2 \colon \free A \Rightarrow
\overline{B} \free$, as follows.

\begin{lemma}
\label{lm:corec-alg-ta}
We have $\caem^A = \caem \after U(\rho_2) \colon TA(\Theta)
\rightarrow \Theta$. Explicitly:
\[ \begin{array}{rcccl}
\big(TA(\Theta) \xrightarrow{\caem^A} \Theta\big) 
& = &
\big(TA(\Theta) \xrightarrow{U(\rho_2)} U\overline{B} \free (\Theta) = BT(\Theta) \xrightarrow{B(a)} B(\Theta) \xrightarrow{\zeta^{-1}}\Theta\big).
\end{array} \]
\end{lemma}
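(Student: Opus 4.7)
Both sides of the claim are morphisms $TA(\Theta) \to \Theta$ of the form $\zeta^{-1} \after B(a) \after (\,\cdot\,)$, so, writing $\rho = \rho_1 \colon AU \Rightarrow U\overline{B}$ and $\sigma \coloneqq U\varepsilon \circledcirc \rho \colon TAU \Rightarrow U\overline{B}$, it suffices to show that the component $\sigma_{(\Theta,a)}$ of this composite step coincides with $B(a) \after U(\rho_{2,\Theta})$. Unfolding the definition of $\sigma$ recovers exactly the displayed formula $\sigma_{(\Theta,a)} = B(a) \after \kappa_\Theta \after T(\rho_{(\Theta,a)})$ for $\caem^A$.

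The slickest route uses Lemma~\ref{lm:mates-composition} together with Theorem~\ref{thm:matesgeneral}. Write $\theta \coloneqq U\varepsilon \colon TU \Rightarrow U\Id$. A direct unfolding of the formula of Theorem~\ref{thm:matesgeneral}, combined with the monad unit law $\mu \after T\eta = \id$, shows that its mate $\theta_2 \colon \free T \Rightarrow \free$ has underlying arrow $\mu_X$ at $X$. By Lemma~\ref{lm:mates-composition}, the mate of $\sigma$ then satisfies
\[ U(\sigma_{2,X}) \;=\; U(\rho_{2,X}) \after \mu_{AX}. \]

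Next, applying Theorem~\ref{thm:matesgeneral} to $\sigma$ yields $\sigma_{(Y,b)} = U\overline{B}(\varepsilon_{(Y,b)}) \after U(\sigma_{2,Y}) \after \eta_{TAY}$. Since $\overline{B}$ lifts $B$ through $U$ and $\varepsilon_{(Y,b)}$ has underlying arrow $b$, specialising at $(\Theta,a)$ gives
\[ \sigma_{(\Theta,a)} \;=\; B(a) \after U(\rho_{2,\Theta}) \after \mu_{A\Theta} \after \eta_{TA\Theta} \;=\; B(a) \after U(\rho_{2,\Theta}), \]
by the other monad unit law $\mu \after \eta T = \id$. This is the required equation, and precomposing with $\zeta^{-1}$ yields $\caem^A = \caem \after U(\rho_2)_\Theta$.

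I do not foresee any real obstacle: the argument reduces to manipulation of mates together with the monad unit laws. The only care required is bookkeeping — in particular, tracking at which object each natural transformation is applied (distinguishing $\rho_{(\Theta,a)}$ from $\rho_{\free\Theta}$) and remembering that $\overline{B}$ lifts $B$ through $U$. A fully elementary alternative avoids the mates machinery and instead unfolds $U(\rho_{2,\Theta})$ directly via Theorem~\ref{thm:matesgeneral}, then uses naturality of $\kappa$ at $a$, naturality of $\rho$ at the $\EM(T)$-morphism $a \colon \free U(\Theta,a) \to (\Theta,a)$, and the Eilenberg-Moore laws $a \after \mu_\Theta = a \after T(a)$ and $a \after \eta_\Theta = \id_\Theta$; this yields the same conclusion but is more opaque.
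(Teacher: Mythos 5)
Your proof is correct. It differs from the paper's in the choice of key lemma: the paper proves the identity by a single commuting diagram whose only non-trivial square is $U$ applied to the mate-compatibility square $\varepsilon \overline{B} \after \free\rho_1 = \overline{B}\varepsilon \after \rho_2 U$ of Lemma~\ref{lm:mates-useful}, instantiated at $(\Theta,a)$; everything else in that diagram is just the identification of $U\varepsilon_{\overline{B}(\Theta,a)}$ with $B(a)\after\kappa$ and of $U\overline{B}(\varepsilon_{(\Theta,a)})$ with $B(a)$. You instead compute the mate $\sigma_2$ of the composite step via Lemma~\ref{lm:mates-composition} (using that the mate of $U\varepsilon$ has underlying arrow $\mu$) and then transpose back with the formula of Theorem~\ref{thm:matesgeneral} and the unit law $\mu\after\eta T=\id$. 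Since Lemma~\ref{lm:mates-composition} is itself proved from Lemma~\ref{lm:mates-useful}, the two arguments rest on the same machinery; yours is a little longer but has the merit of exhibiting $U(\sigma_2)=U(\rho_2)\after\mu_{A-}$ explicitly, which is reusable information about the composite step, whereas the paper's is the more economical one-square chase. All your intermediate claims check out ($U(\theta_{2,X})=\mu_X$, the typing of $\sigma_{2,X}=\rho_{2,X}\after\theta_{2,AX}$, and $\varepsilon_{(Y,b)}$ having underlying arrow $b$); the only blemish is the phrase ``precomposing with $\zeta^{-1}$'' where you mean composing with $\zeta^{-1}$ on the left.
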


\begin{proof}
We describe $\caem \after U(\rho_2)$ as south-east and $\caem^A$ as
east-south in:
\[ \xymatrix@C-1.3pc@R-1.3pc{
\llap{$TA(\Theta)=\;$}U\free A U(\Theta,a)
   \ar[dd]_{U(\rho_2)}\ar[rr]^-{T(\rho_{1})} & & 
   U\free U\overline{B}(\Theta,a)\ar@{=}[r]\ar[dd]_{U(\varepsilon)} & 
   TB(\Theta)\ar[d]^{\kappa} 
\\
& & & BT(\Theta)\ar[dd]^{B(a)}
\\
U\overline{B}\free U(\Theta,a)\ar@{=}[d]\ar[rr]^-{U\overline{B}(\varepsilon)}
   & & U\overline{B}(\Theta,a)\ar@{=}[dr] 
\\
BT(\Theta)\ar[rrr]_-{B(a)} & & & B(\Theta)\ar[rr]_-{\zeta^{-1}} & & \Theta
} \]

\noindent The upper left square commutes by
Lemma~\ref{lm:mates-useful}. \QED
\end{proof}

\begin{example}\label{ex:nda-generative}
	We illustrate the situation with a simple example:
	non-deterministic automata, viewed as coalgebras of the form
	$c \colon X \rightarrow \Powfin(\Sigma \times X + 1)$.
	To this end, we instantiate the setting with $\Cat{C} = \Sets$, 
	$T = \Powfin$ the finite powerset monad, 
	and $A(X) = \Sigma \times X + 1$. Moreover, we let $B(X) = 2 \times X^\Sigma$. 
	Note that there is a difference between $\Powfin A$-coalgebras
	and $B \Powfin$-coalgebras, if $\Sigma$ is infinite: the former are finitely branching 
	non-deterministic automata (that is, finitely many successors)
	whereas the latter are image-finite non-deterministic automata (that is, finitely
	many successors for every alphabet letter). 
	
	The lifting $\overline{B} \colon \EM(\Powfin) \rightarrow \EM(\Powfin)$ of
	$B$ is given as in Example~\ref{ex:first-em} and Theorem~\ref{thm:em-automata}. 
	In particular, the corecursive algebra
	$$
	\caem \colon 2 \times (\Powfin(2^{\Sigma^*}))^\Sigma \rightarrow 2^{\Sigma^*}
	$$
	is given by $\caem(o,\varphi)(\varepsilon) = o$ and $\caem(o,\varphi)(aw) = \bigvee_{\psi \in \varphi(a)} \psi(w)$. 
	
	The relevant step $\rho \colon AU \Rightarrow U\overline{B}$
	is most easily given by $\rho_4 \colon A \Rightarrow U \overline{B} \free = B \Powfin$.
	On a component $X$, we  define $
	(\rho_4)_X \colon \Sigma \times X + 1 \rightarrow 2 \times (\Powfin X)^\Sigma
	$
	by 
	$$
	(\rho_4)_X(a,x) = \left( \bot,\lambda b. \begin{cases} \{x\} & \text{ if } a=b \\ \emptyset & \text{ otherwise } \end{cases} \right)\,, \quad (\rho_4)_X(*) = (\top, \lambda b. \emptyset) \,.
	$$
	Then $(\rho_2)_X \colon \Powfin(\Sigma \times X + 1) \rightarrow 2 \times (\Powfin X)^\Sigma$
	is the adjoint transpose, given by 
	$$
	\rho_2(S) = \left(
		\bigvee_{* \in S} \top, \lambda a. \{x \mid (a,x) \in S\}
	\right).
	$$
	This coincides with the extension law given in~\cite{JacobsSS15}. 
	
	By Lemma~\ref{lm:corec-alg-ta}, the corecursive $\Powfin A$-algebra obtained from the final $\overline{B}$-coalgebra is given by
	$\caem^A = \caem \after U(\rho_2) \colon \Powfin(\Sigma \times 2^{\Sigma^*} + 1) \rightarrow 2^{\Sigma^*}$,
	which is: 
	$$
	\caem^A(S)(\varepsilon) = \bigvee_{* \in S} \top \,, \qquad \caem^A(S)(aw) = \bigvee_{(a, \psi) \in S} \psi(w) \,.
	$$
	Given a coalgebra $c \colon X \rightarrow \Powfin(\Sigma \times X + 1)$, 
	the unique coalgebra-to-algebra morphism $\trem^A \colon X \rightarrow 2^{\Sigma^*}$ from $c$ to $\caem^A$
	is thus given by $\trem^A(x)(\varepsilon) = \bigvee_{* \in c(x)} \top$
	and $\trem^A(x)(aw) = \bigvee_{(a,y) \in c(x)} \trem^A(y)(w)$.	
\end{example}
For examples of extension laws for weighted and probabilistic automata, see~\cite{JacobsSS15}.

\section{Traces via Logic}\label{sec:logic}

This section illustrates how the `logical' approach to trace semantics
of~\cite{KlinR16}, ultimately based on the testing framework introduced in~\cite{PavlovicMW06}, fits in our
general framework.  In this approach, traces are viewed as logical
formulas, also called tests, which are evaluated for states. These
tests are obtained via an initial algebra of a functor $L$. The
approach works both for $TB$ and $BT$-coalgebras (and could, in
principle, be extended to more general combinations).  We start by
listing our assumptions in this section, and continue by showing
how these assumptions lead to a corecursive algebra giving trace 
semantics in the general framework of Section~\ref{sect:stepsem}.

\begin{assumption}[Traces via Logic] In this section, we assume:
\begin{enumerate}
\item \label{assump:adj} An adjunction $F\dashv G$ between categories
  $\cat{C} \leftrightarrows \op{\cat{D}}$.

\item \label{assump:modality} A functor $T$ on $\cat{C}$ with a
step $\tau \colon TG \Rightarrow G$. 

\item \label{assump:logic} 
  A functor  $B\colon\cat{C}
  \rightarrow \cat{C}$ and a functor $L\colon\cat{D}
  \rightarrow \cat{D}$ with a step $\delta \colon BG \Rightarrow GL$.


\item \label{assump:formulas} An initial algebra $\alpha \colon
  L(\Phi) \congrightarrow \Phi$.
\end{enumerate}
\end{assumption}

\noindent We deviate from the convention of writing $\rho$ for `step',
since the above map $\tau$ gives rise to multiple steps $\tau
\circledcirc \delta$ and $\delta \circledcirc \tau$
in~\eqref{eqn:logicextensions} below, in the sense of
Definition~\ref{thm:mates}; here we use `delta' instead of `rho'
notation since it is common in modal logic.

\begin{example}
	We take $\cat{C} = \cat{D} = \Sets$, and $F,G$ both the contravariant powerset
	functor $2^{-}$. 
	Non-deterministic automata are obtained either as $BT$-coalgebras
	with $B(X) = 2 \times X^A$ and $T$ the finite powerset functor;
	or as $TB$-coalgebras, with $B(X) = A \times X + 1$ and $T$ again the finite powerset functor. 
	In both cases, $L$ is given by $L(X) = A \times X + 1$, which has the set of words $A^*$ as 
	carrier of an initial algebra. 
	The map $\tau \colon T2^- \Rightarrow 2^-$ is defined
	by $\tau_X(S)(x) = \bigvee_{\varphi \in S} \varphi(x)$, and intuitively models
	the existential choice in the semantics of non-deterministic automata. The step
	$\delta$ and the language semantics are defined later in this section. 
\end{example}
The assumptions are close to the general step-and-adjunction setting~\eqref{eq:adj-setting}. Here,
we have an opposite category on the right,
and instantiate $H$ to $TB$ or $BT$:
\begin{equation}
\xymatrix{
\cat{C}\ar@/^2ex/[rr]^-{F} \save !L(.5) \ar@(dl,ul)^{H} \restore & \bot & \op{\cat{D}} \ar@/^2ex/[ll]^-G \save !R(.5) \ar@(ur,dr)^{L} \restore 
}
\quad \text{ where } H = TB \text{ or } H=BT\,.
\end{equation}

\noindent Notice that our assumptions already include a step $\delta$
(involving $B,L$) and a step $\tau$, which we can compose to obtain
steps for the $TB$ respectively $BT$ case:
\begin{equation}
\label{eqn:logicextensions}
\begin{array}{rclcl}
\tau \circledcirc \delta
& \coloneqq &
\xymatrix{\Big(TBG\ar@{=>}[r]^-{T\delta} &
   TGL\ar@{=>}[r]^-{\tau L} & GL\Big) }
   & \quad\quad &
   \xymatrix{
   \op{\CoAlg(L)} \ar[r]^{G_{\tau \circledcirc \delta}}
   	& \Alg(TB)}
\\
\delta \circledcirc {\tau} 
& \coloneqq &
\xymatrix{\Big(BTG\ar@{=>}[r]^-{B\tau} & BG\ar@{=>}[r]^-{\delta} & GL\Big)}
   & &
   \xymatrix{
   \op{\CoAlg(L)} \ar[r]^{G_{\delta \circledcirc \tau}}
   	& \Alg(BT)}
\end{array}
\end{equation}

\noindent Both $\tau \circledcirc \delta$ and $\delta \circledcirc
\tau$ are steps, and hence give rise to step-induced algebra liftings
$G_{\tau \circledcirc \delta}$ and $G_{\delta \circledcirc \tau}$ of
$G$ (Section~\ref{sect:stepsem}).  By Theorem~\ref{thm:finalcor}, we
obtain two corecursive algebras by applying these liftings to the
inverse of the initial algebra, i.e., the (inverse of the) final
coalgebra in $\op{\cat{D}}$:
\begin{equation}
\label{diag:twocorecursivealgebras}
\begin{array}{rcccl}
\calogtb 
& \coloneqq & 
G_{\tau \circledcirc \delta}(\Phi,\alpha^{-1})
& = &
\xymatrix{\Big(TBG(\Phi)\ar[r]^-{\tau \circledcirc \delta} &
  GL(\Phi)\ar[r]^-{G(\alpha^{-1})}_-{\cong} & G(\Phi)\Big)}, \\
\calogbt 
& \coloneqq & 
G_{\delta \circledcirc \tau}(\Phi,\alpha^{-1})
& = &
\xymatrix{\Big(BTG(\Phi)\ar[r]^-{\delta \circledcirc \tau} &
  GL(\Phi)\ar[r]^-{G(\alpha^{-1})}_-{\cong} & G(\Phi)\Big)\,.
}
\end{array}
\end{equation}
These corecursive algebras define trace semantics for any
$TB$-coalgebra $(X,c)$ and $BT$-coalgebra $(Y,d)$:
\begin{equation}\label{eq:traces-logic}
\vcenter{\xymatrix@R-0.5pc@C+0.5pc{
	X \ar@{-->}[r]^{\trlog_c} \ar[d]_-c
		& G(\Phi) \\
	TB(X) \ar@{-->}[r]^-{TB(\trlog_c)}
		& TBG(\Phi) \ar[u]_-{\calogtb}
}}
\qquad\qquad
\vcenter{\xymatrix@R-0.5pc@C+0.5pc{
	Y \ar@{-->}[r]^{\trlog_d} \ar[d]_-d
		& G(\Phi) \\
	BT(Y) \ar@{-->}[r]^-{BT(\trlog_d)}
		& BTG(\Phi) \ar[u]_-{\calogbt}
}}
\end{equation}
It is instructive to characterise this trace semantics in terms of the transpose
and the step-induced coalgebra liftings $F^{\tau \circledcirc \delta}$ and $F^{\delta \circledcirc \tau}$,
showing how they arise as unique maps from an initial algebra:
\begin{equation}\label{eq:traces-logic-dual}
\vcenter{\xymatrix@R-0.5pc@C+1pc{
	F(X) 
		& \Phi \ar[d]^-{\alpha^{-1}}  \ar@{-->}[l]_-{\overline{\trlog_c}}  \\
	LF(X) \ar[u]^-{F^{\tau \circledcirc \delta}(X,c)} 
		& L(\Phi) \ar@{-->}[l]_-{L(\overline{\trlog_c})}
}}
\qquad\qquad
\vcenter{\xymatrix@R-0.5pc@C+1pc{
	F(Y) 
		& \Phi \ar[d]^-{\alpha^{-1}} \ar@{-->}[l]_-{\overline{\trlog_d}}  \\
	LF(Y) \ar[u]^-{F^{\delta \circledcirc \tau}(Y,d)} 
		& L(\Phi) \ar@{-->}[l]_-{L(\overline{\trlog_d})}
}}
\end{equation}

In the remainder of this section, we show two classes of examples of
the logical approach to trace semantics.  With these descriptions we retrieve most
of the examples from~\cite{KlinR16} in a smooth manner.

\begin{proposition}\label{prop:automaton-logic}
	Let $\Omega$ be a set, $T \colon \Sets \rightarrow \Sets$ a functor
	and $t \colon T(\Omega) \rightarrow \Omega$ a map. 
	Then the set of languages $\Omega^{A^*}$ carries a corecursive algebra
	for the functor $\Omega \times T(-)^A$.
	Given a coalgebra $\langle o, f \rangle \colon X \rightarrow \Omega \times T(X)^A$,
	the unique coalgebra-to-algebra morphism $\trlog \colon X \rightarrow \Omega^{A^*}$
	satisfies
	\begin{align*}
		\trlog(x)(\varepsilon) = o(x) \qquad
		\trlog(x)(aw) = t\Big(T(\evmap_w \after \trlog)(f(x)(a))\Big)
	\end{align*}
	for all $x \in X$, $a \in A$ and $w \in A^*$. 
\end{proposition}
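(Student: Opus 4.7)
The plan is to realise this result as a direct instance of the logical framework of Section~\ref{sec:logic}. The functor $\Omega \times T(-)^A$ is $BT$ with $B(X) = \Omega \times X^A$, so my strategy is to choose the data of that framework so that the corecursive algebra $\calogbt$ from Theorem~\ref{thm:finalcor} lives on $\Omega^{A^*}$, and then unfold its definition to read off the recurrences for the induced coalgebra-to-algebra map.

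The instantiation I will use is $\cat{C} = \cat{D} = \Sets$, with $F = G = \Omega^{(-)}$, the classical contravariant self-adjunction witnessed by $\Sets(X, \Omega^Y) \cong \Sets(X \times Y, \Omega) \cong \Sets(Y, \Omega^X)$. I take $L(X) = A \times X + 1$, whose initial algebra is $\alpha \colon A \times A^* + 1 \to A^*$ sending $(a, w) \mapsto aw$ and $\star \mapsto \varepsilon$, so that $\Phi = A^*$ and $G(\Phi) = \Omega^{A^*}$. The step $\tau \colon TG \Rightarrow G$ is built from the canonical strength $\st$ of the $\Sets$-endofunctor $T$ together with the given $t$:
\[ \tau_Y \;\coloneqq\; \bigl( T(\Omega^Y) \xrightarrow{\st} T(\Omega)^Y \xrightarrow{t^Y} \Omega^Y \bigr), \]
and the step $\delta \colon BG \Rightarrow GL$ is the evident transposition, $\delta_Y(o,\varphi)(\star) = o$ and $\delta_Y(o,\varphi)(a,y) = \varphi(a)(y)$. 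Naturality of both is routine: $\tau$ inherits it from $\st$ and from $t$, and $\delta$ is essentially a currying isomorphism.

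It remains to compute $\calogbt = G(\alpha^{-1}) \after (\delta \circledcirc \tau)_\Phi$ by unfolding $\delta \circledcirc \tau = \delta \after B\tau$ at $\Phi = A^*$. Using $\tau_\Phi(s)(w) = t\bigl(T(\evmap_w)(s)\bigr)$ together with $\alpha^{-1}(\varepsilon) = \star$ and $\alpha^{-1}(aw) = (a, w)$, a direct chase gives the explicit form $\calogbt(o, \varphi)(\varepsilon) = o$ and $\calogbt(o, \varphi)(aw) = t\bigl(T(\evmap_w)(\varphi(a))\bigr)$. By Theorem~\ref{thm:finalcor} this is a corecursive $BT$-algebra on $\Omega^{A^*}$, which proves the first claim. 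Substituting into the defining equation $\trlog = \calogbt \after BT(\trlog) \after \langle o, f \rangle$ at a point $x \in X$ yields $\trlog(x)(\varepsilon) = o(x)$ and, using $T(\evmap_w) \after T(\trlog) = T(\evmap_w \after \trlog)$ by functoriality, also $\trlog(x)(aw) = t\bigl(T(\evmap_w \after \trlog)(f(x)(a))\bigr)$, as required. The whole proof is an instantiation exercise; the only place where one has to be careful is keeping track of the variance of $F, G$ and treating $\alpha^{-1}$ as an $L$-algebra in $\op{\cat{D}}$ (equivalently, an $L$-coalgebra in $\cat{D}$), but once this is fixed the computation is entirely mechanical.
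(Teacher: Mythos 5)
Your proposal is correct and follows essentially the same route as the paper's own proof: the same instantiation $\cat{C}=\cat{D}=\Sets$, $F=G=\Omega^{(-)}$, $L(X)=A\times X+1$ with $\Phi=A^*$, the same $\tau = t^{(-)}\after\st$ and $\delta$ as the currying isomorphism, followed by the same unfolding of $\calogbt = G(\alpha^{-1})\after\delta\after B\tau$ and of the coalgebra-to-algebra square. The only cosmetic difference is that you first write out $\calogbt$ in closed form before substituting, whereas the paper chases the composite directly in the defining diagram; the computations are identical.
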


\begin{proof}
We instantiate the assumptions in the beginning of this section by
$\Cat{C} = \Cat{D} = \Sets$, $F = G = \Omega^{-}$, $B(X) = \Omega
\times X^A$, $L(X) = A \times X + 1$ and $T$ the functor
from the statement. The initial $L$-algebra is $\alpha \colon A\times A^* + 1
\congrightarrow A^*$.  The map $t$ extends to a modality $\tau
\colon TG \Rightarrow G$, given on components by
	$$
	\tau_X :=
	\big(
	\xymatrix{
		T(\Omega^X) \ar[r]^-{\st}
			& T(\Omega)^X \ar[r]^-{t^X}
			& \Omega^X
	}
	\big) \,.
	$$
	The logic $\delta \colon BG \Rightarrow GL$ is given by the isomorphism 
	$\Omega \times (\Omega^-)^A \cong \Omega^{(A \times -) + 1}$.
	Instantiating~\eqref{diag:twocorecursivealgebras} we obtain the corecursive $BT$-algebra
	$$
	\xymatrix@C=1.2cm{
		\Omega \times T(\Omega^{A^*})^A \ar[r]^-{\idmap \times (\st)^A}
			& \Omega \times (T(\Omega)^{A^*})^{A} \ar[r]^-{\idmap \times (t^{A^*})^A}
			& \Omega \times (\Omega^{A^*})^A \ar[r]^-{\Omega^{\alpha^{-1}} \after \delta}
			& \Omega^{A^*}
	} \, .
	$$
	
	The concrete description of $\trlog$ follows by
	spelling out the coalgebra-to-algebra diagram that characterises it.
	In particular, we have:
	\begin{align*}
		 \trlog(x)(aw) 
 		&= (\Omega^{\alpha^{-1}} \circ \delta_{A^*} \circ \idmap \times (t^{A^*} \circ \st \circ T(\trlog))^A \circ \langle o, f \rangle(x))(aw) \\
 		&= \delta_{A^*}(\idmap \times (t^{A^*} \circ \st \circ T(\trlog))^A \circ \langle o, f \rangle(x))(a,w) \\
 		&= ((t^{A^*} \circ \st \circ T(\trlog))^A \circ f(x))(a)(w) \\
 		&= (t^{A^*} \circ \st \circ T(\trlog)(f(x)(a)))(w) \\
 		&= t (\st \circ T(\trlog)(f(x)(a))(w)) \\
 		&= t (T(\evmap_w \circ \trlog)(f(x)(a)))
	\end{align*}
	for all $x \in X$, $a \in A$ and $w \in A^*$. 
	\QED
\end{proof}

\begin{example}
\label{ex:automata-traces}
We instantiate the trace semantics $\trlog$ from
Proposition~\ref{prop:automaton-logic} for various choices of
$\Omega$, $T$ and $t$. 
Similar to the instances in Example~\ref{ex:em-automata},
we consider a coalgebra $\langle o, f \rangle
\colon X \rightarrow \Omega \times T(X)^A$, and we always have
$\trlog(x)(\varepsilon) = o(x)$. 
The cases of non-deterministic automata (NDA-$\exists$, NDA-$\forall$)
and probabilistic automata (PA) are the same as in Example~\ref{ex:em-automata}.
However, in constrast to the Eilenberg-Moore approach
and other approaches to trace semantics, a monad structure
on $T$ is not required here. This is convenient as it also allows to treat 
alternating automata (AA), where $T = \Powfin \Powfin$; the latter does not
carry a monad structure~\cite{KlinS18}. 
$$
\begin{array}{c|c|c|c|c}
		& T & \Omega & t \colon T(\Omega) \rightarrow \Omega & \trlog(x)(aw) \\
		\hline 
		\text{NDA-}\exists
		& \Powfin
		& 2 = \{\bot,\top\}
		& S \mapsto \bigvee S
		& \bigvee_{y \in f(x)(a)} \trlog(y)(w) \\
		\text{NDA-}\forall
		& \Powfin
		& 2 = \{\bot,\top\}
		& S \mapsto \bigwedge S
		& \bigwedge_{y \in f(x)(a)} \trlog(y)(w) \\
		\text{PA}
		& \Dstfin
		& [0,1]
		& \varphi \mapsto \sum_{p \in [0,1]}p \cdot \varphi(p) 
		& \sum_{y \in X} \trlog(y)(w) \cdot f(x)(a)(y) \\
		\text{AA}
		& \Powfin\Powfin 
		& 2 = \{\bot,\top\}
		& S \mapsto \bigvee_{T \in S} \bigwedge_{b \in T} b
		&  \bigvee_{T \in f(x)(a)}\bigwedge_{y \in T} \trlog(y)(w)
\end{array}
$$
\end{example}

We also describe a logic for polynomial functors constructed from a
signature. Here, we model a signature by a functor $\Sigma \colon
\mathbb{N} \rightarrow \Sets$, where $\mathbb{N}$ is the discrete
category of natural numbers. This gives rise to a functor $H_\Sigma
\colon \Sets \rightarrow \Sets$ as usual by $H_\Sigma(X) =
\coprod_{n\in \mathbb{N}} \Sigma(n) \times
X^n$.  
The initial algebra of $H_\Sigma$
consists of closed terms (or finite node-labelled trees) over the
signature.

\begin{proposition}
\label{prop:tree}
Let $\Omega$ be a meet semi-lattice with top element $\top$ as well as
a bottom element $\bot$, let $T \colon \Sets \rightarrow \Sets$ be a
functor, and $t \colon T(\Omega) \rightarrow \Omega$ a map.  Let
$(\Phi,\alpha)$ be the initial $H_\Sigma$-algebra.  The set $\Omega^\Phi$ of
`tree' languages carries a corecursive algebra for
the functor $TH_\Sigma$.  Given a coalgebra $c
\colon X \rightarrow TH_\Sigma(X)$, the unique coalgebra-to-algebra
map $\trlog \colon X \rightarrow \Omega^\Phi$ is given by
	\begin{align*}
	&\trlog(x)(\sigma(u_1, \ldots, u_n))
	 	= t(T(m) \after c(x))\,, \text{where} \\
	 &m = \left( u \mapsto \begin{cases} \bigwedge_i \trlog(x_i)(u_i) & \text{ if }\exists x_1\ldots x_n .\, u=(\sigma,x_1, \ldots, x_n) \\ \bot & \text{otherwise} \end{cases} 
	 \right) : H_\Sigma(X) \rightarrow \Omega
	\end{align*}
	for all $x \in X$ and $\sigma(u_1, \ldots, u_n) \in \Phi$. 
\end{proposition}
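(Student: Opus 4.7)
The plan is to instantiate the logical trace-semantics framework of this section with $\Cat{C} = \Cat{D} = \Sets$, $F = G = \Omega^{(-)}$, $B = L = H_\Sigma$ and $T$ as given, taking $(\Phi,\alpha)$ as the initial $L$-algebra. This places us in the $TB$-case of~\eqref{eqn:logicextensions}, so the relevant corecursive algebra is $\calogtb$ from~\eqref{diag:twocorecursivealgebras}. The required steps are constructed as follows. The modality $\tau \colon T\Omega^{(-)} \Rightarrow \Omega^{(-)}$ is defined as in Proposition~\ref{prop:automaton-logic} by $\tau_X = t^X \after \st$. The logical step $\delta \colon H_\Sigma \Omega^{(-)} \Rightarrow \Omega^{H_\Sigma(-)}$ sends $(\sigma, f_1,\ldots,f_n) \in H_\Sigma(\Omega^X)$, with $\sigma \in \Sigma(n)$, to the function $H_\Sigma(X) \to \Omega$ that takes $(\sigma',x_1,\ldots,x_{n'})$ to $\bigwedge_{i \leq n} f_i(x_i)$ when $\sigma' = \sigma$ (so $n' = n$), and to $\bot$ otherwise. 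Naturality in $X$ is routine, and the nullary case $n = 0$ produces the empty meet $\top$, explaining why $\Omega$ is assumed to have a top element as well as a bottom.

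Applying~\eqref{diag:twocorecursivealgebras} then directly yields the corecursive $TH_\Sigma$-algebra
\[
\calogtb \;=\; \Bigl( TH_\Sigma(\Omega^\Phi) \xrightarrow{T\delta_\Phi} T(\Omega^{H_\Sigma \Phi}) \xrightarrow{\tau_{H_\Sigma \Phi}} \Omega^{H_\Sigma \Phi} \xrightarrow{\Omega^{\alpha^{-1}}} \Omega^\Phi \Bigr),
\]
and by Theorem~\ref{thm:finalcor} every coalgebra $c \colon X \to TH_\Sigma(X)$ admits a unique coalgebra-to-algebra morphism $\trlog \colon X \to \Omega^\Phi$ into it.

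To obtain the concrete formula, I would unfold the defining diagram of $\trlog$ at an element $u = \sigma(u_1,\ldots,u_n) = \alpha(\sigma,u_1,\ldots,u_n) \in \Phi$. Applying $\alpha^{-1}$, then $T\delta_\Phi \after TH_\Sigma(\trlog)$, and finally $\tau_{H_\Sigma \Phi} = t^{H_\Sigma \Phi} \after \st$ evaluated at $(\sigma,u_1,\ldots,u_n)$, the expression collapses to $t\bigl( T\bigl(\evmap_{(\sigma,u_1,\ldots,u_n)} \after \delta_\Phi \after H_\Sigma(\trlog)\bigr)(c(x)) \bigr)$; by the definition of $\delta$, the inner composite $H_\Sigma(X) \to \Omega$ is precisely the map $m$ from the statement, which yields the claimed equation. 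I do not foresee any genuine obstacle: the argument is a direct generalisation of Proposition~\ref{prop:automaton-logic} from the word signature to an arbitrary $\Sigma$, the only new ingredient being the matching-constructor shape of $\delta$, which is forced by the isomorphism $\Omega^{H_\Sigma X} \cong \prod_{n \in \mathbb{N}} \prod_{\sigma \in \Sigma(n)} \Omega^{X^n}$ combined with the pointwise meet on $\Omega$.
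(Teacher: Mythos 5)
Your proposal is correct and follows essentially the same route as the paper: the same instantiation $F=G=\Omega^{(-)}$, $B=L=H_\Sigma$, the same modality $\tau = t^X \after \st$ and constructor-matching logic $\delta$, the same corecursive algebra $\calogtb$, and the same unfolding of the coalgebra-to-algebra square to the expression $t\bigl(T(\evmap_{(\sigma,u_1,\ldots,u_n)} \after \delta_\Phi \after H_\Sigma(\trlog))(c(x))\bigr)$, identified with $m$. The aside about the nullary case forcing the empty meet $\top$ is a pleasant extra observation but not a deviation.
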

\begin{proof}
We use $\Cat{C} = \Cat{D} = \Sets$, $F = G = \Omega^{-}$, $B = L =
H_\Sigma$. The map $t$ extends to a
modality $\tau \colon TG \Rightarrow G$ as in the proof of
Proposition~\ref{prop:automaton-logic}.  The logic $\delta \colon
H_\Sigma \Omega^- \Rightarrow \Omega^{H_\Sigma(-)}$ is:
	$$
	\delta_X(\sigma_1,\phi_1, \ldots, \phi_n)(\sigma_2,x_1, \ldots, x_m) = 
	\begin{cases}
	\bigwedge_i \phi_i(x_i) & \text{ if }\sigma_1 = \sigma_2 \\ \bot & \text{otherwise}
	\end{cases} 
	$$	
	The corecursive algebra $\calogtb$ is then given by:
	$$
	\xymatrix{
		TH_\Sigma(\Omega^\Phi) \ar[r]^-{T(\delta)}
			& T(\Omega^{H_\Sigma(\Phi)}) \ar[r]^-{\st}
			& T(\Omega)^{H_\Sigma(\Phi)}\ar[r]^-(0.7){t^{H_\Sigma(\Phi)}}
			& \Omega^{H_\Sigma(\Phi)} \ar[r]_-{\cong}^-{\Omega^{\alpha^{-1}}}
			& \Omega^{\Phi}
	}\,. 
	$$	
Now, given a coalgebra $c \colon X \rightarrow TH_\Sigma(X)$, we
compute:
	\begin{align*}
		& \trlog(x)(\sigma(u_1, \ldots, u_n)) \\
		&= (\Omega^{\alpha^{-1}} \after t^{H_\Sigma \Phi} \after \st \after T(\delta_\Phi) \after TH_\Sigma(\trlog) \after c(x))(\sigma(u_1, \ldots, u_n)) \\
		&= t ((\st \after T(\delta_\Phi) \after TH_\Sigma(\trlog) \after c(x))(\alpha^{-1}(\sigma(u_1, \ldots, u_n)))) \\
		&= t ((\st \after T(\delta_\Phi) \after TH_\Sigma(\trlog) \after c(x))(\sigma,u_1, \ldots, u_n)) \\
		&= t (T(\evmap_{(\sigma,u_1, \ldots, u_n)})(T(\delta_\Phi) \after TH_\Sigma(\trlog) \after c(x))) \\		
		&= t (T(\evmap_{(\sigma,u_1, \ldots, u_n)} \after \delta_\Phi \after H_\Sigma(\trlog))(c(x))) 
	\end{align*}
	To conclude, we analyse the map 
	$\evmap_{(\sigma,u_1, \ldots, u_n)} \after \delta_\Phi \after H_\Sigma(\trlog)$:
	\begin{align*}
		& \evmap_{(\sigma,u_1, \ldots, u_n)} (\delta_\Phi (H_\Sigma(\trlog)(u))) \\
		& = \delta_\Phi (H_\Sigma(\trlog)(u))(\sigma, u_1, \ldots, u_n) \\
		& = \begin{cases} \bigwedge_i \trlog(x_i)(u_i) & \text{ if }\exists x_1\ldots x_n .\, u=(\sigma,x_1, \ldots, x_n) \\ \bot & \text{otherwise} \end{cases}  \,.
	\end{align*}
	This coincides with $m$ in the statement of the proposition. 
	\QED
\end{proof}

\begin{example}
\label{ex:tree}
Given a signature $\Sigma$, a coalgebra $c \colon X \rightarrow
\Powfin H_\Sigma (X)$ is a (top-down) \emph{tree
  automaton}. With $\Omega = \{\bot,\top\}$ and $t(S) = \bigvee S$, 
  Proposition~\ref{prop:tree} gives: 
	$$
	\trlog(x)(\sigma(t_1, \ldots, t_n))=\top \text{ iff } \exists x_1 \ldots x_n . (\sigma,x_1, \ldots, x_n) \in c(x) \wedge \bigwedge_{1 \leq i \leq n} \trlog(x_i)(t_i)
	$$
	for every state $x \in X$ and tree $\sigma(t_1, \ldots, t_n)$. 
	This is the standard semantics of tree automata. It is easily
	adapted to \emph{weighted} tree automata, see~\cite{KlinR16}.
\end{example}
In both Example~\ref{ex:automata-traces} and Example~\ref{ex:tree}, 
the step-induced coalgebra lifting $F^{\delta \circledcirc \tau}$ (respectively $F^{\tau \circledcirc \delta}$) of the underlying logic 
corresponds to reverse determinisation, see~\cite{KlinR16,Rot16} for details. In particular, in Example~\ref{ex:tree} 
it maps a top-down tree automaton to the corresponding bottom-up tree
automaton.

\section{Traces via Kleisli}\label{sec:kl}

In this section we briefly recall the `Kleisli approach' to trace
semantics~\cite{HasuoJS07}, and cast it in our abstract framework.  It
applies to coalgebras for a composite functor $TA$, where $T$ is a
monad modelling the type of branching and $A$ is a functor. For
example, a coalgebra $X \to \mathcal{P}(\Sigma \times X +S)$ has an
associated map $X \to \mathcal{P}(\Sigma^{*} \times S)$ that sends a
state $x \in X$ to the set of its complete traces.  (Taking $S=1$,
this is the usual language semantics of a nondeterministic automaton.)
To fit this to our framework, the monad $T$ is $\mathcal{P}$ and the
functor $A$ is $(\Sigma \times -) + S$.  In general, the following
assumptions are used.

\begin{assumption}[Traces via Kleisli] In this section, we assume:
\begin{enumerate}
\item An endofunctor $A \colon \cat{C} \rightarrow \cat{C}$ with an
  initial algebra $\beta \colon A(\Psi) \congrightarrow \Psi$.

\item A monad $(T,\eta,\mu)$, with the standard adjunction $J \dashv
  U$ between categories $\cat{C} \leftrightarrows \Kl(T)$, where $J(X)
  = X$ and $U(Y) = T(Y)$.

\item An extension $\overline{A}$ of $A$, as below:
\begin{equation}\label{eq:kl-lift}
   \vcenter{
	\xymatrix@R-0.5pc{
		\Kl(T) \ar[r]^{\overline{A}}
			& \Kl(T)  \\
		\Cat{C} \ar[r]_{A} \ar[u]^J
			& \Cat{C} \ar[u]_J
	}
	}
   \end{equation}

\noindent or, equivalently, a $\Kl$-law $\lambda \colon AT \Rightarrow TA$.

\item $(\Psi, J(\beta^{-1}))$ is a final $\overline{A}$-coalgebra.

   \end{enumerate}
\end{assumption}

\noindent In the case that $A$ is the functor $(\Sigma \times -) + S$, its
initial algebra is carried by $\Sigma^{*}\times S$, and the canonical
$\Kl$-law is given at $X$ by
\[ \xymatrix{
\Sigma\times TX + S\ar[rrr]^-{[T\mathsf{inl} \after st_{\Sigma,X},
    T\mathsf{inr} \after \eta_{S}]} & & & T(\Sigma\times X+S)
} \]

\noindent A central observation for the Kleisli approach to traces is
that the fourth assumption holds under certain order enrichment
requirements on $\Kl(T)$, see~\cite{HasuoJS07}.  In particular, these
hold when $T$ is the powerset monad, the (discrete) sub-distribution
monad or the lift monad.

The above assumptions give rise to the following instance of our
setting~\eqref{eq:adj-setting}:
$$
\xymatrix@C-1pc{
\cat{C}\ar@/^2ex/[rr]^-{J} \save !L(.5) \ar@(dl,ul)^{TA} \restore & \bot & \Kl(T) \ar@/^2ex/[ll]^-U \save !R(.5) \ar@(ur,dr)^{\overline{A}} \restore 
& \mbox{\qquad with \quad}
{\begin{array}{l}
\rho\colon TAU \Longrightarrow U\overline{A} \; \mbox{ where } \rho_{X} =
\\
\big(TATX\xrightarrow{T(\lambda)} T^{2}AX \xrightarrow{\mu} TAX\big)
\end{array}}
}
$$

\noindent Similar to the $\EM$-case in Section~\ref{sec:em}, the map
of adjunctions is most easily given in terms of $\rho_4 \colon TA
\Rightarrow U\overline{A}J$ as the identity, using that $\overline{A}$
extends $A$.

We apply the step-induced algebra lifting $G_\rho \colon
\Alg(\overline{A}) \rightarrow \Alg(TA)$ to the inverse of the final
$\overline{A}$-coalgebra, and obtain a corecursive $TA$-algebra,
called $\cakl$:
\begin{equation}
\label{eqn:cakl}
\begin{array}{rcl} 
\Big(TAT(\Psi) \xrightarrow{\cakl} T(\Psi)\Big) 
& \coloneqq &
G_\rho(\Psi,J(\beta^{-1})^{-1})
\\
& = &
G_\rho(\Psi, J(\beta))
\\
& = &
\Big(TAT(\Psi)\xrightarrow{\!T(\lambda)\!} T^{2}A(\Psi) \xrightarrow{\mu}
    TA(\Psi)\xrightarrow{\!T(\beta)\!} T(\Psi)\Big)
\end{array}
\end{equation}

\noindent By Theorem~\ref{thm:finalcor}, this algebra is corecursive,
i.e., for every coalgebra $c \colon X \rightarrow TA(X)$, there is a
unique map $\trkl_c$ as below:
\begin{equation}
\label{diag:trkl}
\vcenter{\xymatrix@R-0.5pc{
	X \ar[d]_-c \ar@{-->}[r]^{\trkl_c} \ar[d]
		& T(\Psi) \\
	TA(X) \ar@{-->}[r]^{TA(\trkl_c)}
		& TAT(\Psi) \ar[u]_-{\cakl}
}}
\end{equation}

\noindent The trace semantics is exactly as in~\cite{HasuoJS07}, to
which we refer for examples. For later use we note the following.

\begin{lemma}
\label{lem:trkl}
The above map $\cakl \colon TAT(\Psi) \rightarrow T(\Psi)$ is a
map of Eilenberg-Moore algebras $\mu_{AT(\Psi)} \rightarrow
\mu_{\Psi}$.
\end{lemma}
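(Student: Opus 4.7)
The plan is to decompose $\cakl$ into three pieces, each of which is evidently a morphism of Eilenberg-Moore algebras, and conclude that their composite is too. Recall from~\eqref{eqn:cakl} that
\[
\cakl \;=\; \Big(TAT(\Psi)\xrightarrow{T(\lambda)} T^{2}A(\Psi) \xrightarrow{\mu_{A(\Psi)}} TA(\Psi)\xrightarrow{T(\beta)} T(\Psi)\Big).
\]
I will show each factor is a morphism between the relevant free algebras, and that the types of these morphisms compose as
$\mu_{AT(\Psi)} \to \mu_{TA(\Psi)} \to \mu_{A(\Psi)} \to \mu_{\Psi}$.

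First I would observe the general fact: for any morphism $f\colon X \to Y$ in $\cat{C}$, the map $T(f)\colon TX \to TY$ is a morphism of free Eilenberg-Moore algebras $\mu_X \to \mu_Y$, by naturality of $\mu$, expressed by the square $T(f)\after \mu_X = \mu_Y \after T^2(f)$. Applying this with $f = \lambda_\Psi$ gives that $T(\lambda_\Psi)$ is of type $\mu_{AT(\Psi)} \to \mu_{TA(\Psi)}$, and applying it with $f = \beta$ gives that $T(\beta)$ is of type $\mu_{A(\Psi)} \to \mu_\Psi$.

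For the middle factor, the associativity law $\mu_X \after \mu_{TX} = \mu_X \after T(\mu_X)$ says precisely that $\mu_X\colon T^2 X \to TX$ is a morphism of free algebras $\mu_{TX} \to \mu_X$. Instantiating at $X = A(\Psi)$, the component $\mu_{A(\Psi)}\colon T^2A(\Psi)\to TA(\Psi)$ is a morphism $\mu_{TA(\Psi)} \to \mu_{A(\Psi)}$.

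Composing the three morphisms $\mu_{AT(\Psi)} \to \mu_{TA(\Psi)} \to \mu_{A(\Psi)} \to \mu_\Psi$ in $\EM(T)$ yields that $\cakl\colon \mu_{AT(\Psi)} \to \mu_\Psi$ is an Eilenberg-Moore algebra morphism, as required. There is no real obstacle; the only thing to be careful about is matching the subscripts on $\mu$ correctly to the free algebras on $AT(\Psi)$, $TA(\Psi)$, $A(\Psi)$ and $\Psi$ respectively.
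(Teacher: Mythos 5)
Your proof is correct and uses essentially the same ingredients as the paper's: the paper verifies $\cakl \after \mu = \mu \after T(\cakl)$ by one equational chain whose steps are exactly naturality of $\mu$ (for $T(\lambda)$ and $T(\beta)$) and the associativity law (for the middle $\mu$), which is precisely your factorisation of $\cakl$ into three morphisms of free algebras $\mu_{AT(\Psi)} \to \mu_{TA(\Psi)} \to \mu_{A(\Psi)} \to \mu_{\Psi}$. The repackaging as a composite in $\EM(T)$ is just a cleaner presentation of the same computation, and all the types match.
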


\begin{proof}
This follows by an easy calculation:
\[ \begin{array}[b]{rcl}
\cakl \after \mu
\hspace*{\arraycolsep}=\hspace*{\arraycolsep}
T(\beta) \after \mu \after T(\lambda) \after \mu
& = &
T(\beta) \after \mu \after \mu \after T^{2}(\lambda)
\\
& = &
T(\beta) \after \mu \after T(\mu) \after T^{2}(\lambda)
\\
& = &
\mu \after T^{2}(\beta) \after T(\mu) \after T^{2}(\lambda)
\hspace*{\arraycolsep}=\hspace*{\arraycolsep}
\mu \after T(\cakl).
\end{array} \eqno{\QEDbox} \]
\end{proof}

\section{Partial Traces for Input/Output}\label{sec:partial}
\subsection{Introduction}

To illustrate the versatility of our framework, we show next that it underpins a trace example quite different from the previous ones, one that arises in programming language semantics and involves both input and output actions~\cite{BowlerLevyPlotkin:infinstrat}.

To avoid confusion, it must be noted that the word ``trace'' is used with a different meaning in the automata and semantics communities, as follows.
\begin{itemize}
\item In the automata literature and the previous sections, a ``trace'' ends in
  acceptance.  Semanticists would call this a ``complete trace''.
\item By contrast, in the semantics literature~\cite{RoscoeBrookesHoare:csp,JagadeesanPitcherRiely:aspects,Laird:genref,LassenLevy:nfbistypes,LevyStaton:transgames,BowlerLevyPlotkin:infinstrat} and this section, a ``trace'' need not end in acceptance.  For example, a program that prints \texttt{Hello} and then diverges (hangs) must be distinguished from one that simply diverges, even though---since neither terminates---neither has a complete trace.  Accordingly, the string \texttt{Hello} is said to be a ``trace'' of the former program (but not the latter), and so is each prefix.  Automata theorists would call these ``partial traces''.
\end{itemize}
This section applies our framework to traces of the second kind, but before doing that, we need two pieces of background.  The first (Section~\ref{sect:bgtrace}) explains that, in a transition system for I/O, a state's set of traces form a \emph{strategy}.  The second (Section~\ref{sect:stratfin}) characterizes the poset of all strategies as a final coalgebra.  This is a result that appeared in~\cite{BowlerLevyPlotkin:infinstrat}.

\subsection{Trace sets as strategies} \label{sect:bgtrace}

  The story begins by fixing a \emph{signature}, which consists of a set
$K$ of operations, and for each $k \in K$ a set $\arity{k}$ called its
\emph{arity}.  Each operation $k \in K$ is regarded as an output
message requesting input, and $\arity{k}$ as the set of acceptable
inputs.\footnote{Many-sorted signatures, in the guise of ``interaction structures'', are used for a similar purpose in~\cite{HancockHyvernat:progintbastop}.}  Accordingly, we use the functor:
\begin{equation}
 \begin{array}{rcccl}
X 
     & \mapsto &
                 \Pow\Big(\sum_{k \in K} X^{\arity{k}}\Big)
& = &
\Pow\Big(\sum_{k \in K}\prod_{i \in \arity{k}} X\Big)
.
\end{array} \label{eq:funcoi}
\end{equation}
\noindent A \emph{transition system} is a coalgebra $c \colon X \to
\pset(\sum_{k \in K}\prod_{i \in \arity{k}} X)$.  For such a system, a
state $x \in X$ represents a program that nondeterministically outputs
some $k \in K$, then pauses until it receives some $i \in \arity{k}$,
and then is in another state.  We write:
\[ x \stackrel{k}{\Longrightarrow} (y_i)_{i \in \arity{k}}
\qquad\mbox{for}\qquad
(k,(y_i)_{i \in \arity{k}}) \in c(x). \]

\noindent A \emph{play} is a finite or infinite sequence
$k_0,i_0,k_1,i_1,\ldots$, where $k_r \in K$ and $i_r \in \arity{k_r}$.  It is so called because it may be viewed  a play in a game of two players, called Proponent and Opponent, where each output is a Proponent-move and each input an Opponent-move.  (The game terminology is slightly misleading in that there is no notion      of winning, and play can continue forever.)  A play of even length is \emph{active-ending} and one of odd length is
\emph{passive-ending}.

A \emph{strategy} (more precisely: nondeterministic finite trace strategy) is a set $\sigma$ of passive-ending plays such that 
$sik \in \sigma$ implies $s \in \sigma$.  Again, this terminology is based on the game idea, as a strategy tells Proponent (nondeterminstically) how to play.  The poset of all strategies, ordered by inclusion ($\subseteq$), is written $\strat$.

Let $(X,c)$ be a transition system, and $x \in X$ a state.  A
passive-ending play $k_0,i_0,\ldots,k_n$ is said to be a \emph{trace}
of $x$ when there is a sequence
\[ x = x_0 \stackrel{k_0}{\Longrightarrow} (y^{0}_i)_{i \in \arity{k_0}} 
\; , \qquad
y^{0}_{i_{0}} = x_1 \stackrel{k_1}{\Longrightarrow}  (y^{1}_i)_{i \in \arity{k_1}}  
\; , \; \cdots \]

\noindent The set of all such traces forms a strategy.  Note that
active-ending traces need not be considered, since these are
determined by the passive-ending traces.  Infinite traces are not
considered in~\cite{BowlerLevyPlotkin:infinstrat}, nor are they here.
Conversely, every strategy can be obtained in this
way~\cite[Proposition 6.1]{BowlerLevyPlotkin:infinstrat}.

\subsection{Strategies form a final coalgebra}\label{sect:stratfin}
  
A \emph{complete semilattice} is a poset with all suprema.  Hence it
also has all infima, which allows it to be called a ``complete
lattice''.  Clearly the poset $\strat$ of all strategies, ordered by
inclusion ($\subseteq$), is a complete semilattice.  Let $\complatt$
be the category of complete semilattices and \emph{homomorphisms},
i.e., monotone functions that preserve suprema.  It was shown
in~\cite{BowlerLevyPlotkin:infinstrat} that $\strat$ is a final
coalgebra for a certain endofunctor on $\complatt$, which we shall
describe in several steps.

Firstly, an \emph{almost complete semilattice} is a poset where every
nonempty subset has a supremum.  Hence every lower-bounded subset has
an infimum, but binary meets $a \wedge b$ need not exist in general.
Let $\alcomplatt$ be the category of almost complete semilattices and
\emph{homomorphisms}, i.e., monotone functions that preserve suprema
of nonempty sets.  Informally, our motivation for using this category
is the fact that, up to trace equivalence, an I/O action such as
printing commutes with binary nondeterminism, and more generally with
$I$-ary nondeterministic choice for any nonempty set $I$.  This point
(and the special role of the empty set) is developed in more detail
in~\cite{BowlerLevyPlotkin:infinstrat}.

For any set $J$ we define two functors:
\[ \xymatrix@C+0.5pc{
\complatt^{J}\ar[r]^-{\prod_{J}} & \alcomplatt 
& 
\alcomplatt^{J}\ar[r]^-{\bigolift_{J}} & \complatt 
} \]

\noindent as follows. (In~\cite{BowlerLevyPlotkin:infinstrat} they are linked to universal properties.)
\begin{itemize}
                                              \item For a family $(A_j)_{j \in J}$ of complete semilattices, let $\prod_{j \in J}A_j$ be the cartesian product.  Endowed with pointwise order, it is an almost complete (in fact complete) semilattice.                                                            
 \item For a family $(f_{j} \colon A_{j} \to B_{j})_{j \in J}$ of complete semilattice homomorphisms, let $\prod_{j \in J}f_j \colon \prod_{j \in J}A_j \to \prod_{j \in J} B_j$ be the map sending $(a_j)_{j \in J}$ to $(f_ja_j)_{j \in J}$.                        
  \item For a family  $(A_j)_{j \in J}$ of almost complete semilattices, let $\bigolift_{j \in J}A_j$ be the set of pairs $(U,(a_j)_{j \in U})$ where $U \in \pset J$ and $a_j \in A_j$ for all $j \in U$.  It is a complete semilattice when endowed with the following order: we have $(U,(a_j)_{j \in U}) \leqslant (V,(b_j)_{j \in V})$ when $U \subseteq V$ and $a_j \leqslant b_j$ for all $j \in U$.
\item For a family $(f_{j} \colon A_{j} \to B_{j})_{j \in J}$ of almost complete semilattice homomorphisms, let $\bigolift_{j \in J}f_j \colon \bigolift_{j \in J}A_j \to \bigolift_{j \in J} B_j$ be the map sending a pair $(U,(a_j)_{j \in U})$ to $(U,(f_ja_j)_{j \in U})$.
  \end{itemize}
  From these we build our endofunctor
  \begin{displaymath}
    {\bigoplus_{k \in K}}^{\bot}\prod_{i \in \arity{k}} \colon \complatt \to \complatt
  \end{displaymath}
whose final coalgebra is given as follows.
 \begin{theorem}{\rm \cite[Theorem 6.3]{BowlerLevyPlotkin:infinstrat}} \label{thm:finalcoalgstrat}
 Let $\parstrat \colon \strat \to \bigolift_{k \in K}\prod_{i \in \arity{k}} \strat$ send a strategy $\sigma$ to $(\initsigma, ((\sigmaki_{i})_{i \in \arity{k}})_{k \in \initsigma})$, where
  \begin{eqnarray*}
    \initsigma & \eqdef & \setbr{k \in K \mid (k) \in \sigma} \\
    \sigmaki & \eqdef &  \setbr{s \mid k.i.s \in \sigma}
  \end{eqnarray*}
Then $(\strat, \parstrat)$ is a final $\bigolift_{k \in K}\prod_{i \in \arity{k}}$-coalgebra. 
\end{theorem}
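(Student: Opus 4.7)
The plan is to prove finality directly. Given a coalgebra $c\colon A \to F(A)$ in $\complatt$, where I write $F$ as shorthand for $\bigolift_{k\in K}\prod_{i\in\arity{k}}(-)$, the goal is to exhibit a unique $\complatt$-homomorphism $h\colon A \to \strat$ such that $\parstrat \after h = F(h) \after c$. Setting $c(a) = (U_a, (p^a_k)_{k\in U_a})$ and unfolding both sides of the square, this equation is equivalent to the pair of conditions $\mathrm{Init}(h(a)) = U_a$ and $h(a)/ki = h(p^a_k(i))$ for all $k\in U_a$ and $i\in\arity{k}$. This is the fixed-point characterisation that any candidate trace map must satisfy.

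For existence on the level of underlying functions, I would define membership of a passive-ending play $s$ in $h(a)$ by recursion on the length of $s$: declare $(k) \in h(a)$ iff $k \in U_a$, and $k\cdot i\cdot s' \in h(a)$ iff $k \in U_a$ and $s' \in h(p^a_k(i))$. The resulting set is automatically closed under the prefix condition $sik\in h(a)\Rightarrow s\in h(a)$, so $h(a) \in \strat$, and the commuting square holds by construction. Uniqueness of the underlying function follows from the same induction on play length, since any $h'$ satisfying the square is forced to obey the same two conditions and hence agrees with $h$ everywhere.

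The substantive remaining step is to show that $h$ preserves arbitrary suprema, so that it actually is a morphism in $\complatt$. Here I would use the explicit formula for suprema in $\bigolift_{k\in K}\prod_{i\in\arity{k}}A$: union of the first components and pointwise join in the second. Since $c$ is a homomorphism, for $a = \bigvee_{j\in J}a_j$ with $c(a_j) = (U_j,(p^{a_j}_k)_{k\in U_j})$ one obtains $U_a = \bigcup_j U_j$ and $p^a_k(i) = \bigvee\{p^{a_j}_k(i) \mid j\in J,\; k \in U_j\}$. Then by induction on the length of $s$ I would prove the statement ``for every family $(a_j)_{j\in J}$ in $A$, $s \in h(\bigvee_j a_j)$ iff $s \in h(a_j)$ for some $j$'': the base case $s = (k)$ reduces to $U_a = \bigcup_j U_j$, and the step case $s = k\cdot i\cdot s'$ applies the inductive hypothesis to the subfamily $(p^{a_j}_k(i))_{j\colon k\in U_j}$, whose supremum is exactly $p^a_k(i)$. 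The empty-join instance gives $c(\bot) = (\emptyset,())$ and hence $h(\bot) = \emptyset$, matching the bottom of $\strat$.

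The main obstacle is precisely this supremum-preservation argument, which interleaves the play-length recursion with bookkeeping of which indices $j\in J$ supply the leading letter $k$ of $s$; the asymmetry between unioning the first components and joining the second components of $\bigolift$ must be handled carefully, and one should keep in mind that the product factor never produces a fresh choice of $k$. Once this is in place, the remaining ingredients---the fixed-point equation, closure under prefix, and uniqueness---are routine.
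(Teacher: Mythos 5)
Your argument is correct. Note that the paper does not prove this statement at all --- it is imported verbatim as \cite[Theorem 6.3]{BowlerLevyPlotkin:infinstrat} --- so there is no in-paper proof to compare against; your direct finality argument (define $h(a)$ by recursion on play length, check prefix-closure and the coalgebra square, then prove $s \in h(\bigvee_j a_j) \Leftrightarrow \exists j.\, s \in h(a_j)$ by an induction on $|s|$ quantified over all families) is the natural route and is sound. The two delicate points are exactly the ones you flag: the second-component joins in $\bigolift$ are taken over the nonempty index set $\{j \mid k \in U_{a_j}\}$, which is all that the almost-complete-semilattice structure of the factors supports, and the empty family must be treated separately via $c(\bot) = (\emptyset,())$, giving $h(\bot) = \emptyset$; both are handled correctly.
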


\subsection{The step} \label{sect:iostep}

With the background completed, we now want to instantiate our general setting to form an account of traces.
Our adjunction and endofunctors are as follows:
\begin{displaymath}
    \xymatrix{
    \Sets \ar@/^{2ex}/[rr]^-{\pset} \save !L(.5) \ar@(dl,ul)^{\pset\sum_{k \in K} \prod_{i \in \arity{k}}} \restore & \bot & \complatt \ar@/^{2ex}/[ll]^-{U} \save !R(.5) \ar@(ur,dr)^{\bigolift_{k \in K}\prod_{i \in \arity{k}}} \restore 
}
\end{displaymath}
Here $U \colon \complatt \to \Sets$ is the forgetful functor, which is monadic.  Explicitly, the free complete semilattice on a set $X$ is
 $\pset X$, ordered by inclusion ($\subseteq$), with unit $X \to \pset X$ sending
   $x \mapsto \{x\}$.  Likewise the forgetful functor $U \colon \alcomplatt \to \Sets$ is monadic.  Explicitly, the free almost complete semilattice on a set $X$ is the set
  $\psetplus X$ of nonempty subsets, ordered by inclusion ($\subseteq$), with unit
  $X \to \psetplus X$ sending $x \mapsto \{x\}$.

  Our step is formulated using bimodules and 2-cells, which are explained in the Appendix.  Any functor $U \colon \catd\to \catc$ gives rise to a bimodule  $\rbim{U} \colon \catc \relto \catd$ by  Definition~\ref{def:lbimrbim}(\ref{item:rbim}), and then, for any set $J$, to a bimodule $(\rbim{U})^{J} \colon \catc^{J} \relto \catd^{J}$ by Definition~\ref{def:prodbim}.  Central to our story are the following 2-cells (in the sense of Definition~\ref{def:mapbim}(\ref{item:tcell})) defined for any set $J$.
 \begin{displaymath}
    \xymatrix{
      \Sets^{J} \ar[rr]|-@{|}^-{(\rbim{U})^{J}} \ar[d]_{\prod} \ar@{}[drr]|{\prod\ \Downarrow}
      & & \complatt^{J}  \ar[d]^{\prod}
      &
\Sets^{J} \ar[d]_{\pset \sum} \ar@{}[drr]|{\sumstar\ \Downarrow}  \ar[rr]|-@{|}^-{(\rbim{U})^{J}}
& & \alcomplatt^{J} \ar[d]^{\bigolift} \\
      \Sets  \ar[rr]|-@{|}_-{\rbim{U}} & & \alcomplatt &
      \Sets \ar[rr]|-@{|}_-{\rbim{U}} & & \complatt 
      }
    \end{displaymath}
They are defined as follows.
\begin{itemize}
\item Given a family of functions $(f_j \colon X_j \to B_j)_{j \in J}$, where
  $X_j$ is a set and $B_j$ a complete semilattice, the function
  $\prod_{j \in J}f_j \colon \prod_{j \in J}X_j \to \prod_{j \in J}
  B_j$ sends $(x_j)_{j \in J}$ to $(fx_j)_{j \in J}$.
  \item Given a family of functions  $(f_j \colon X_j \to A_j)_{j \in J}$, where
    $X_j$ is a set and $A_j$ an almost complete semilattice, the function
    $\sumstar_{j \in J} f_j \colon \pset \sum_{j \in J} X_j \to \bigolift_{j \in J}A_j$ sends $R$ to  $(L,(y_j)_{j \in L})$ where
    \begin{eqnarray*}
L & = &  \setbr{j \in J \mid \exists x \in X_j.\,\mathsf{in}_j\,x \in R} \\
y_j & = & \bigvee_{x \in X_j \,:\, \mathsf{in}_j\,x \in R} f_j(x) \betwixt \text{ for $j \in L$.}
    \end{eqnarray*}
\end{itemize}
Note that, as in Sections~\ref{sec:em}
and~\ref{sec:kl}, the $\rho_4$ version of 
$\sumstar$ is an isomorphism, namely:
\[ \xymatrix@R-2pc{
  \bigolift_{j \in J} \psetplus(X_j) \ar[r]^-{\cong} & 
   \pset\big(\sum_{j \in J} X_j\big) 
\\
(U,(Y_j)_{j \in U})\ar@{|->}[r] & \{(j,x) \mid j \in U, x \in Y_j\}
} \]

\noindent Combining these 2-cells, we obtain the following 2-cell:
\begin{equation} \label{eqn:twocellforoi}
  \xymatrix{
    \Sets \ar@{}[drrr]|{\sumstar_{k \in K}\prod_{i \in \arity{k}} \ \Downarrow} \ar[d]_{\pset\sum_{k \in K} \prod_{i \in \arity{k}}}  \ar[rrr]|-@{|}^{\rbim{U}} & & & \complatt \ar[d]^{\bigolift_{k \in K}\prod_{i \in \arity{k}}} \\
    \Sets \ar[rrr]|-@{|}_{\rbim{U}} & & & \complatt
    }
\end{equation}
As Theorem~\ref{thm:matesbim} explains, this provides our step
\begin{displaymath}
\rho \colon \pset\sum_{k \in K} \prod_{i \in \arity{k}}U \Rightarrow U \bigoplus_{k \in K} \!{}^{\bot}\!\!\prod_{i \in \arity{k}}
\end{displaymath}
.  

From Theorem~\ref{thm:finalcoalgstrat} with
Proposition~\ref{prop:corec-bimod}(\ref{item:corec-bimod-left}), we
see that, for every coalgebra $c \colon X \to \pset\sum_{k \in K}
\prod_{i \in \arity{k}}X$, there is a unique morphism to
$(\strat,\parstrat)$.  Specifically~\cite[Theorem
  6.6]{BowlerLevyPlotkin:infinstrat} tells us that what this morphism
sends $x \in X$ to its set of traces.  Finally by
Proposition~\ref{prop:corec-bimod}(\ref{item:corec-bimod-right}),
$U_{\rho}(\strat,\parstrat)$ is corecursive, and the map from $(X,c)$
to it is the same, i.e., it sends $x \in X$ to its set of traces.

Note that, as in Section~\ref{sec:em}, we can use $\pset^{\rho}$ to
determinise a transition system $(X,c)$.  This is applied
in~\cite[Section 6.2]{BowlerLevyPlotkin:infinstrat} to obtain a
bisimulation method for trace equivalence.

\subsection{Input, then output}

We adapt the story above to use instead of (\ref{eq:funcoi}) the functor
\[ \begin{array}{rcccl}
X 
     & \mapsto &
             \prod_{k \in K} \pset (\arity{k} \times X )  
& = &
\prod_{k \in K} \pset \sum_{i \in \arity{k}}X  .
\end{array} \]
Now a \emph{transition system} is a coalgebra $c \colon X \to \prod_{k \in K} \pset \sum_{i \in \arity{k}}X$.  In this case, the behaviour of a state $x \in X$ is to first input $k \in K$ and then nondeterministically output some $i \in \arity{k}$, resulting in a new state $x'$.  We write
\[ x@k \stackrel{i}{\Longrightarrow} x'
\qquad\mbox{for}\qquad
(i,x') \in (c(x))_k. \]
Accordingly, the definitions of play, strategy and trace in Section~\ref{sect:bgtrace} are adjusted as follows.
\begin{itemize}
\item A \emph{play} is a finite or infinite sequence
  $k_0,i_0,k_1,i_1,\ldots$, where $k_r \in K$ and $i_r \in \arity{k_r}$.  A play of odd length is \emph{active-ending} and one of even length is \emph{passive-ending}.  
\item A \emph{strategy} is a set  $\sigma$ of passive-ending plays such that $\varepsilon \in \sigma$ (where $\varepsilon$ is the empty play) and 
  $ski \in \sigma$ implies $s \in \sigma$.
\item Let $(X,c)$ be a transition system, and $x \in X$ a state.  A
passive-ending play $k_0,i_0,\ldots,k_n,i_{n}$ is said to be a \emph{trace}
of $x$ when there is a sequence
\[ x = x_0 \;\,\qquad x_0@k_0 \stackrel{i_0}{\Longrightarrow} x_1
  \; , \qquad
 x_1@k_1 \stackrel{i_1}{\Longrightarrow} x_2
\; , \; \cdots \]
\noindent The set of all such traces form a strategy.  
\end{itemize}
The poset $\strat$ of all strategies, ordered by inclusion $(\subseteq)$, forms an almost complete (in fact complete) semilattice.  We adjust Theorem~\ref{thm:finalcoalgstrat} to say that $\strat$ carries a final coalgebra for the endofunctor $\prod_{k \in K} \bigolift_{i \in \arity{k}}$ on $\alcomplatt$.

Finally, we have the same results as in Section~\ref{sect:iostep}, but instead of (\ref{eqn:twocellforoi}) we use the following 2-cell:
\begin{displaymath}
  \xymatrix{
    \Sets \ar@{}[drrr]|{\prod_{k \in K}\sumstar_{i \in \arity{k}} \ \Downarrow} \ar[d]_{\prod_{k \in K} \pset \sum_{i \in \arity{k}}}  \ar[rrr]|-@{|}^{\rbim{U}} & & & \alcomplatt \ar[d]^{\prod_{k \in K}\bigolift_{i \in \arity{k}}} \\
    \Sets \ar[rrr]|-@{|}_{\rbim{U}} & & & \alcomplatt
    }
  \end{displaymath}
  To summarize, we first told our story for transition systems with ``active'' states, that output and then input.  (These systems are sometimes called ``generative''.)  In this section, we have adapted it for systems with ``passive'' states, that input and then output.  (These systems are sometimes called ``reactive''.)  Another variation would be transition systems with both active and passive states, as in~\cite{LevyStaton:transgames}.

\section{Comparison}
\label{sec:comp}

The presentation of trace semantics in terms of corecursive algebras
allows us to compare the different approaches by constructing algebra
morphisms between them. In three subsections, we compare the
Eilenberg-Moore approach with the logical approach, the Kleisli
approach with the logical approach, and finally we compare the Kleisli
and Eilenberg-Moore approaches.

\subsection{Eilenberg-Moore and Logic}
\label{sec:comp-em-logic}

To compare the Eilenberg-Moore approach with the logical approach, we
combine their assumptions, as follows.

\begin{assumption}[Comparison Eilenberg-Moore and Logic] In this subsection, we assume
an adjunction $F \dashv
G$, endofunctors $B,L$ and a monad $T$ as follows:
$$
\xymatrix{
\op{\cat{D}} \ar@/_2ex/[rr]_-G \save !L(.5) \ar@(dl,ul)^{L} \restore 
& \bot & \ar@/_2ex/[ll]_-{F} 
\cat{C}\ar@/^2ex/[rr]^-{\free} \ar@(ul,ur)^{BT} & \bot & \EM(T) \ar@/^2ex/[ll]^-U \save !R(.5) \ar@(ur,dr)^{\overline{B}} \restore 
}
$$

\noindent together with:
\begin{enumerate}
\item A final $B$-coalgebra $\zeta \colon \Theta \congrightarrow B(\Theta)$.

\item An $\EM$-law $\kappa \colon TB \Rightarrow BT$, or equivalently,
  a lifting $\overline{B}$ of $B$.

\item An initial algebra $\alpha \colon L(\Phi) \congrightarrow \Phi$.

\item A step $\delta \colon BG \Rightarrow GL$.
\item A step $\tau\colon TG \Rightarrow G$, whose components are
  $\EM$-algebras (a \emph{monad action}).
\end{enumerate}
\end{assumption}

\noindent 
Here we have assumed slightly more than the union of the assumptions of the two approaches.
The step $\tau$ is an assumption of the logical
approach in Section~\ref{sec:logic}, but there the compatibility with
the monad structure was not assumed---simply because $T$ was not
assumed to be a monad before. 
Here, we use this assumption as a first
compatibility requirement between the logical and Eilenberg-Moore approaches.

We note that $\tau$ being a monad
action is the same thing as $\tau$ being an $\EM$-law, involving the
monad $T$ on the left and the identity monad on the right. The next
result is therefore an instance of Theorem~\ref{thm:mates}.

\begin{lemma}\label{lm:eq-tau}
The following are equivalent:
\begin{enumerate}
\item a monad action $\tau_1 \colon TG \Rightarrow G$;

\item a map $\tau_2 \colon F \Rightarrow FT$, satisfying the obvious
  dual action equations;

\item a monad morphism $\tau_4 \colon T \Rightarrow GF$;

\item an extension $\widehat{F} \colon \Kl(T) \rightarrow \op{\cat{D}}
  \;(\,= \Kl(\Id))$ of $F$. 

\item a lifting $\widehat{G} \colon \op{\cat{D}} \rightarrow \EM(T)$
  of $G$. \QED
\end{enumerate}
\end{lemma}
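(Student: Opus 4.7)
The plan is to apply Theorem~\ref{thm:mates} directly to the adjunction $F \dashv G$ between $\cat{C}$ and $\op{\cat{D}}$, instantiating it with $H = T$ on $\cat{C}$ and $L = \Id$ on $\op{\cat{D}}$, both viewed as monads. The four presentations from Theorem~\ref{thm:mates} then specialize precisely to: $\tau_1 \colon TG \Rightarrow G$ (from $\rho_1 \colon HG \Rightarrow GL$), $\tau_2 \colon FT \Rightarrow F$ in $\op{\cat{D}}$, which is equivalently a transformation $F \Rightarrow FT$ in $\cat{D}$ (from $\rho_2 \colon FH \Rightarrow LF$), and $\tau_4 \colon T \Rightarrow GF$ (from $\rho_4 \colon H \Rightarrow GLF$, using $GLF = GF$). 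At this level, Theorem~\ref{thm:mates} already supplies the bijective correspondences between the underlying natural transformations of items (1)--(3), before any monad compatibility is imposed.

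Next, I would invoke the monad part of Theorem~\ref{thm:mates}: since $T$ and $\Id$ are both monads, being an $\EM$-law, a $\Kl$-law, and a monad map are equivalent further conditions on matching pairs among $\tau_1, \tau_2, \tau_4$. As observed in the paragraph just before the lemma statement, $\tau_1$ being a monad action is the same as being an $\EM$-law for the pair $(T, \Id)$; dually, $\tau_2$ satisfying the action equations in item (2) is the same as being a $\Kl$-law for $(T,\Id)$; and a monad map $T \Rightarrow GF$ is exactly the content of item (3). Combining with the natural-transformation bijections, this establishes the equivalence of (1), (2) and (3).

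For items (4) and (5), I would appeal to the lifting/extension part of Theorem~\ref{thm:mates}, observing that $\EM(\Id) = \op{\cat{D}}$ and $\Kl(\Id) = \op{\cat{D}}$. Consequently the lifting of the right adjoint in Theorem~\ref{thm:mates} becomes a functor $\widehat{G} \colon \op{\cat{D}} \rightarrow \EM(T)$ over $G$, giving item (5); dually, the extension of the left adjoint becomes a functor $\widehat{F} \colon \Kl(T) \rightarrow \op{\cat{D}}$ under $F$, giving item (4).

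I do not foresee any real obstacles, since the lemma is essentially a relabelling of the special case of Theorem~\ref{thm:mates} in which one of the two monads is the identity; the only bookkeeping point is tracking variances in $\op{\cat{D}}$ so that the direction of $\tau_2$ comes out correctly, which amounts to reading the natural transformation $FT \Rightarrow F$ in $\op{\cat{D}}$ as $F \Rightarrow FT$ in $\cat{D}$.
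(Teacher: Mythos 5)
Your proposal is correct and is exactly the paper's argument: the paper explicitly notes that $\tau$ being a monad action is the same as an $\EM$-law with the identity monad on $\op{\cat{D}}$, and declares the lemma an instance of Theorem~\ref{thm:mates}, which is precisely the instantiation ($H=T$, $L=\Id$, with $\EM(\Id)=\Kl(\Id)=\op{\cat{D}}$) that you spell out, including the variance bookkeeping for $\tau_2$.
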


\noindent Such monad actions and the corresponding liftings are used,
e.g., in~\cite{HinoKHJ16,Jacobs17b,Hasuo15} where $\widehat{F}$ is
called $\Pred$. We use $\widehat{\,\cdot\,}$ to indicate liftings
associated with the step $\tau$, in order to create a distinction with
the lifting $\overline{\,\cdot\,}$ associated with $\kappa$.

We now start focusing on the actual comparison between the
Eilenberg-Moore and logical approach.  First, observe that the step
$\delta \colon BG \Rightarrow GL$ gives a lifting $G_{\delta} \colon
\Alg(L) \rightarrow \Alg(B)$, where $G$ is a functor $\op{\cat{D}}
\rightarrow \cat{C}$. The `opposite' requires some care: the initial
algebra $\alpha \colon L(\Phi) \rightarrow \Phi$ in $\cat{D}$ forms a
final coalgebra $\alpha \colon \Phi \rightarrow L(\Phi)$ in
$\op{\cat{D}}$, and thus a corecursive algebra $\alpha^{-1} \colon
L(\Phi) \rightarrow \Phi$ in $\op{\cat{D}}$.  Hence, applying
$G_{\delta}$ to the latter corecursive $L$-algebra gives a corecursive
$B$-algebra, namely:
\[ \xymatrix@C-1.3pc{
G_{\delta}(\Phi, \alpha^{-1}) \coloneqq 
\Big(BG(\Phi)\ar[rr]^-{\delta} & & GL(\Phi)\ar[rrr]^-{G(\alpha^{-1})} 
   & & & G(\Phi)\Big).
} \]

\noindent Since this algebra is corecursive we obtain a unique map
$\emlog$ as in the following diagram:
\begin{equation}\label{eq:theory-finalb}
\vcenter{
\xymatrix@R-0.5pc{
\Theta \ar[dd]_{\zeta}^{\cong} \ar[rr]^{\emlog} & & G(\Phi) 
\\
& & GL(\Phi) \ar[u]_{G(\alpha^{-1})}
\\
B(\Theta) \ar[rr]^{B(\emlog)} & & BG(\Phi)\ar[u]_{\delta}
}}
\end{equation}

\noindent This $\emlog \colon \Theta \rightarrow G(\Phi)$ is a
morphism from the carrier of the corecursive algebra $\caem \colon
BT(\Theta) \rightarrow \Theta$, from the Eilenberg-Moore
approach~\eqref{eqn:caem}, to the carrier of the corecursive algebra
$\calogbt \colon BTG(\Phi) \rightarrow G(\Phi)$, from the logical
approach~\eqref{diag:twocorecursivealgebras}. Note that, by the above
diagram, $\emlog$ is a $B$-algebra morphism, whereas $\caem$ and
$\calogbt$ are $BT$-algebras.  We next describe a sufficient condition
under which the map $\emlog$ is a $BT$-algebra morphism from $\caem$
to $\calogbt$, which implies that the logical trace semantics factors
through the Eilenberg-Moore trace semantics, see subsequent
Theorem~\ref{thm:logic-and-em}.

\begin{lemma}\label{lm:condition-em-log}
The $\EM$-law $\kappa\colon TB\Rightarrow BT$ commutes with the step
compositions in~\eqref{eqn:logicextensions}, as in:
	\begin{equation}\label{eq:kappa-rho}
	\vcenter{
	\xymatrix@R-0.5pc{
		TBG \ar[rr]^{\kappa G} \ar[dr]_-{\tau \circledcirc \delta}
			& & BTG \ar[dl]^-{\delta \circledcirc \tau} \\
			& GL & 
	}
	}
	\end{equation}

\noindent iff there is a natural transformation $\widehat{\delta}
\colon \overline{B}\widehat{G} \Rightarrow \widehat{G}L$ satisfying
$U(\widehat{\delta}) = \delta$ in:
\[ \xymatrix@R-0.5pc{
& & \EM(T)\ar[dd]_{U}\save !R(.5) \ar@(ur,dr)^{\overline{B}} \restore 
  & \quad\mbox{with}\quad
  \overline{B}\widehat{G} \ar@{=>}[r]^-{\widehat{\delta}} & \widehat{G}L
\\
\op{\cat{D}}\ar[urr]^-{\widehat{G}}\ar[drr]_{G}
   \save !L(.5) \ar@(dl,ul)^{L} \restore 
\\
& & \cat{C}\save !R(.5) \ar@(ur,dr)^{B} \restore 
  & \quad\mbox{with}\quad
  BG \ar@{=>}[r]^-{\delta} & GL
} \]

\noindent The functor $\widehat{G} \colon \op{\cat{D}} \rightarrow
\EM(T)$ is the lifting corresponding to $\tau$, see
Lemma~\ref{lm:eq-tau}.
\end{lemma}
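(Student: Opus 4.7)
The plan is to unfold both sides of the biconditional until they coincide with a single pointwise equation, namely that each component of $\delta$ is an $\EM$-algebra morphism of a specific shape. By Lemma~\ref{lm:eq-tau}, the monad action $\tau$ corresponds to a lifting $\widehat{G} \colon \op{\cat{D}} \rightarrow \EM(T)$ given on objects by $\widehat{G}(D) = (G(D),\tau_D)$. The $\EM$-law $\kappa$ gives the standard lifting $\overline{B}(X,a) = (B(X), B(a) \after \kappa_X)$, so that
\[
\overline{B}\widehat{G}(D) \;=\; \big(BG(D),\, B(\tau_D) \after \kappa_{G(D)}\big)
\qquad\text{and}\qquad
\widehat{G}L(D) \;=\; \big(GL(D),\, \tau_{L(D)}\big).
\]

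Next I would observe that a natural transformation $\widehat{\delta} \colon \overline{B}\widehat{G} \Rightarrow \widehat{G}L$ lifting $\delta$ through $U$ is exactly a family $(\delta_D)_D$ each of whose components is an $\EM$-morphism $\overline{B}\widehat{G}(D) \rightarrow \widehat{G}L(D)$. The algebra-morphism condition at $D$ unfolds to
\[
\tau_{L(D)} \after T(\delta_D) \;=\; \delta_D \after B(\tau_D) \after \kappa_{G(D)}.
\]
Unfolding the composite steps of~\eqref{eqn:logicextensions} at $D$ gives exactly these two composites for the two legs of~\eqref{eq:kappa-rho}: the leg $\tau \circledcirc \delta$ yields $\tau_{L(D)} \after T(\delta_D)$, while $(\delta \circledcirc \tau) \after \kappa G$ yields $\delta_D \after B(\tau_D) \after \kappa_{G(D)}$. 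Hence~\eqref{eq:kappa-rho} at $D$ is literally the algebra-morphism equation for $\delta_D$.

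From here the biconditional is immediate in both directions. For $(\Leftarrow)$, given $\widehat{\delta}$, each component's being an $\EM$-morphism yields~\eqref{eq:kappa-rho} pointwise, hence as an equation of natural transformations. For $(\Rightarrow)$, assuming~\eqref{eq:kappa-rho}, I set $\widehat{\delta}_D \coloneqq \delta_D$, which is now well-typed as an $\EM$-morphism $\overline{B}\widehat{G}(D) \rightarrow \widehat{G}L(D)$; naturality of $\widehat{\delta}$ in $D \in \op{\cat{D}}$ follows from naturality of $\delta$, since $U$ is faithful and reflects commuting squares. The only real obstacle is bookkeeping---keeping straight the opposite-category convention in $\op{\cat{D}}$ and the two definitions of $\tau \circledcirc \delta$ and $\delta \circledcirc \tau$ from~\eqref{eqn:logicextensions}---but no genuine calculation beyond this identification is required. \QED
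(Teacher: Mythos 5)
Your proof is correct and follows essentially the same route as the paper's: both unfold the existence of the lifting $\widehat{\delta}$ to the componentwise condition that $\delta_D$ is an $\EM$-algebra morphism $\overline{B}\widehat{G}(D) \rightarrow \widehat{G}L(D)$, and identify that equation, $\tau_{L(D)} \after T(\delta_D) = \delta_D \after B(\tau_D) \after \kappa_{G(D)}$, with the two legs of~\eqref{eq:kappa-rho}. Your explicit remark that naturality of $\widehat{\delta}$ is inherited because $U$ is faithful is a detail the paper leaves implicit.
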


\begin{proof}
The existence of such a $\widehat{\delta}$ amounts to the property that each
component $\delta_X \colon BG(X) \rightarrow GL(X)$ is a $T$-algebra
homomorphism from $\overline{B}\widehat{G}(X)$ to $\widehat{G}L(X)$,
i.e., the following diagram commutes:
$$
\xymatrix@R-0.8pc@C=1.3cm{
TBG(X) \ar[r]^{T\delta} \ar[d]_{\kappa}
	& TGL(X) \ar[dd]^{\tau} \\
BTG(X) \ar[d]_{B(\tau)} & \\
	BG(X) \ar[r]^{\delta}
		& GL(X)
}
$$ 

\noindent This corresponds exactly to~\eqref{eq:kappa-rho},
see~\eqref{eqn:logicextensions}.  \QED
\end{proof}

\begin{theorem}	
\label{thm:logic-and-em}
If the equivalent conditions in Lemma~\ref{lm:condition-em-log} hold, then 
the map $\emlog$ defined in~\eqref{eq:theory-finalb} is a $BT$-algebra morphism 
from $\caem$ to $\calogbt$, as on the left below.
	$$
	\xymatrix@R-0.5pc@C=1.3cm{
		BT(\Theta) \ar[d]_-{\caem} \ar[r]^-{BT(\emlog)}
			& BTG(\Phi) \ar[d]^-{\calogbt} \\
		\Theta \ar[r]_-{\emlog} 
			& G(\Phi)  
	}
	\qquad\qquad
	\xymatrix@R-0.5pc{
		& X \ar[dl]_-{\trem_c} \ar[dr]^(0.52){\trlog_c} & \\
		\Theta \ar[rr]_{\emlog}& &  G(\Phi)
	}
	$$ 

\noindent In that case, for any coalgebra $\smash{X
  \stackrel{c}{\rightarrow} BT(X)}$ the triangle on the right
commutes.
\end{theorem}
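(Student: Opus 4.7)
The plan is to reduce the $BT$-algebra morphism condition to the statement that $\emlog$ is a $T$-algebra homomorphism $(\Theta,a) \to (G(\Phi),\tau_{\Phi})$, and then to exhibit $\emlog$ as the forgetful image of a lifted coalgebra-to-algebra morphism in $\EM(T)$. Once we have
\begin{equation}
\emlog \after a \;=\; \tau \after T(\emlog) \label{eq:plan-tmap}
\end{equation}
the verification of the big square in the statement is a routine chain of rewrites using the defining equation of $\emlog$ from \eqref{eq:theory-finalb}, the definitions of $\caem$ from \eqref{eqn:caem} and of $\calogbt$ from \eqref{diag:twocorecursivealgebras}, and functoriality of $B$.

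To obtain \eqref{eq:plan-tmap}, I would invoke the lifted step $\widehat{\delta} \colon \overline{B}\widehat{G} \Rightarrow \widehat{G}L$ guaranteed by Lemma~\ref{lm:condition-em-log}, which provides a corecursive $\overline{B}$-algebra on $\widehat{G}(\Phi)$ via the step-induced algebra lifting applied to the initial $L$-algebra $\alpha$ seen as final $L$-coalgebra in $\op{\cat{D}}$. Concretely this algebra is $\widehat{G}(\alpha^{-1}) \after \widehat{\delta} \colon \overline{B}\widehat{G}(\Phi) \to \widehat{G}(\Phi)$ in $\EM(T)$. Because $\zeta \colon (\Theta,a) \to \overline{B}(\Theta,a)$ is a final $\overline{B}$-coalgebra (Lemma~\ref{lm:final-coalg-em}), there is a unique coalgebra-to-algebra morphism $\widehat{\emlog}$ in $\EM(T)$ from $\zeta$ to this algebra. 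Since $U(\widehat{\delta}) = \delta$ and $U$ sends coalgebra-to-algebra morphisms to coalgebra-to-algebra morphisms, $U(\widehat{\emlog})$ is a coalgebra-to-algebra morphism from $(\Theta,\zeta)$ to $G_{\delta}(\Phi,\alpha^{-1})$. By the uniqueness half of corecursiveness of $G_{\delta}(\Phi,\alpha^{-1})$, we conclude $U(\widehat{\emlog}) = \emlog$, so \eqref{eq:plan-tmap} holds because $\widehat{\emlog}$ is an $\EM(T)$-morphism.

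With \eqref{eq:plan-tmap} in hand I would compute
\begin{align*}
\emlog \after \caem
&= \emlog \after \zeta^{-1} \after B(a) \\
&= G(\alpha^{-1}) \after \delta \after B(\emlog) \after B(a) \\
&= G(\alpha^{-1}) \after \delta \after B(\tau) \after BT(\emlog) \\
&= \calogbt \after BT(\emlog),
\end{align*}
where the second step uses \eqref{eq:theory-finalb}, the third uses \eqref{eq:plan-tmap} under $B$, and the last unfolds $\calogbt = G(\alpha^{-1}) \after (\delta \circledcirc \tau)$ from \eqref{eqn:logicextensions}. This establishes the left-hand square.

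For the triangle on the right, both $\trem_c$ and $\trlog_c$ are characterised as unique coalgebra-to-algebra morphisms, out of $(X,c)$, into $\caem$ and $\calogbt$ respectively. From the square already proved, $\emlog \after \trem_c$ is a coalgebra-to-algebra morphism from $(X,c)$ to $\calogbt$: indeed, using the defining equation of $\trem_c$ in \eqref{eq:traces-em} and the square,
\[
\calogbt \after BT(\emlog \after \trem_c) \after c = \emlog \after \caem \after BT(\trem_c) \after c = \emlog \after \trem_c.
\]
Uniqueness then forces $\trlog_c = \emlog \after \trem_c$. The main obstacle is the step that produces $\widehat{\delta}$ and exploits it together with finality in $\CoAlg(\overline{B})$; once that is done, the remainder is bookkeeping.
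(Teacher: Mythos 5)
Your overall architecture agrees with the paper's: everything reduces to the identity $\emlog \after a = \tau \after T(\emlog)$, and once that is in hand your computation of the left square and the uniqueness argument for the triangle are exactly right. The problem is the step that is supposed to deliver this identity. You claim that $\widehat{G}(\alpha^{-1}) \after \widehat{\delta} \colon \overline{B}\widehat{G}(\Phi) \to \widehat{G}(\Phi)$ is a corecursive $\overline{B}$-algebra ``via the step-induced algebra lifting'', and then that finality of $((\Theta,a),\zeta)$ produces the morphism $\widehat{\emlog}$. Neither justification works. Proposition~\ref{prop:pres-corec} preserves corecursiveness only for step-induced liftings of \emph{right adjoints}, and $\widehat{G}\colon\op{\cat{D}}\to\EM(T)$ is merely the lifting of $G$ supplied by Lemma~\ref{lm:eq-tau}; no left adjoint for it is assumed or constructed in this setting (one exists in concrete cases such as $2^{-}\colon\op{\Sets}\to\EM(\Powfin)$, but not from the stated hypotheses). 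And finality of the \emph{source} coalgebra never yields a coalgebra-to-algebra morphism into an arbitrary algebra: for $H=\Id$ on $\Sets$ the final coalgebra is $1$, yet there is no coalgebra-to-algebra map from $(1,\id)$ to the algebra ``negation'' on $2$, since that would be a fixed point of negation. So the existence of $\widehat{\emlog}$, which carries the whole weight of your argument, is unproven.

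The intermediate claim is in fact true and the proof can be repaired, but it needs an argument you have not given: either show directly that a $\overline{B}$-algebra whose underlying $B$-algebra is corecursive is itself corecursive in $\EM(T)$ (a diagram chase using that the algebra and coalgebra structure maps are $T$-algebra morphisms), or bypass $\EM(T)$ entirely. The cheapest repair in the spirit of your proposal is to show that both $\emlog \after a$ and $\tau \after T(\emlog)$ are coalgebra-to-algebra morphisms from the $B$-coalgebra $(T(\Theta), \kappa \after T(\zeta))$ to the corecursive algebra $G_{\delta}(\Phi,\alpha^{-1})$ --- the first because $a$ is a $B$-coalgebra morphism by~\eqref{eq:final-coalg-em}, the second by a short computation using $\delta \after B(\tau)\after\kappa = \tau\after T(\delta)$ (the unfolding of~\eqref{eq:kappa-rho} given in the proof of Lemma~\ref{lm:condition-em-log}) together with naturality of $\kappa$ and $\tau$ --- and then conclude by uniqueness. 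The paper instead transposes along $F\dashv G$ and uses initiality of $\Phi$ in $\cat{D}$; either route closes the gap, but your middle step as written does not.
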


\begin{proof}
We use that $\caem = \zeta^{-1} \after B(a) \colon BT(\Theta)
\rightarrow \Theta$, where $((\Theta,a),\zeta)$ is the final
$\overline{B}$-coalgebra, see Section~\ref{sec:em}.  We need to prove
that the outer rectangle of the following diagram commutes.
$$\xymatrix@R-0.5pc{
BT(\Theta) \ar[r]^-{B(a)} \ar[d]_{BT(\emlog)}
  & B(\Theta)\ar[d]^-{B(\emlog)}\ar[rr]^-{\zeta^{-1}}_-{\cong}
			& 
			& \Theta \ar[d]^-{\emlog} \\
		BTG(\Phi) \ar[r]_-{B(\tau_{1})}\ar`d[r]`[rrru]_{\calogbt}[rrr]
			& BG(\Phi) \ar[r]_-{\delta}
			& GL(\Phi) \ar[r]_-{G(\alpha^{-1})}^-{\cong}
			& G(\Phi)
	}
	$$
	
\noindent The rectangle on the right commutes by definition of
$\emlog$. For the square on the left, it suffices to show $\emlog
\after a = \tau_{1} \after T(\emlog)$; this is equivalent to $F(a)
\after \overline{\emlog} = \tau_{2} \after \overline{\emlog}$ in:
\[ \xymatrix@C+2pc{
\Phi\ar[r]^-{\overline{e} = F(e) \after \varepsilon} & 
   F(\Theta)\ar@<+0.2pc>[r]^-{F(a)}\ar@<-0.2pc>[r]_-{\tau_{2}} & FT(\Theta)
} \]

\noindent Indeed, by transposing we have on the one hand:
\[ \begin{array}{rcccccl}
\overline{\emlog \after a}
& = &
F(a \after \emlog) \after \varepsilon
& = &
F(a) \after F(\emlog) \after \varepsilon
& = &
F(a) \after \overline{\emlog}.
\end{array} \]

\noindent And on the other hand, using that $\tau_2 \circ \varepsilon = F(\tau_1) \circ \varepsilon$ by Lemma~\ref{lm:mates-useful}:
\begin{align*}
	\tau_{2} \circ \overline{\emlog}
		&= \tau_2 \circ F(\emlog) \circ \varepsilon \\
		&= FT(\emlog) \circ \tau_2 \circ \varepsilon \\
		&= FT(\emlog) \circ F(\tau_1) \circ \varepsilon \\
		&= \overline{\tau_1 \circ T(\emlog)}
\end{align*}

\noindent By transposing the map $\emlog$ in~\eqref{eq:theory-finalb},
it follows that $\overline{\emlog} \colon \Phi \rightarrow F(\Theta)$
is the unique morphism from the initial $L$-algebra $\alpha \colon
L(\Phi) \congrightarrow \Phi$ to $F(\zeta) \after \delta_{2} \colon
LF(\Theta) \rightarrow F(\Theta)$. Hence, for the desired equality
$F(a) \after \overline{\emlog} = \tau_2 \after \overline{\emlog}$, it
suffices to prove that $F(a)$ and $\tau_2$ are both algebra
homomorphisms from $F(\zeta) \after \delta_2$ to a common algebra,
which in turn follows from commutativity of the following diagram.
$$\xymatrix@R-0.5pc{
LF(\Theta) \ar[r]^-{L(\tau_{2})} \ar[dd]_{\delta_2}
			& LFT(\Theta) \ar[d]^{\delta_2}
			& LF(\Theta)\ar[d]^{\delta_2} \ar[l]_{LF(a)}\\
			& FBT(\Theta) \ar[d]^{F(\kappa)}
			& FB(\Theta) \ar[dd]^{F(\zeta)} \ar[l]_{FB(a)}\\
		FB(\Theta) \ar[d]_{F(\zeta)} \ar[r]^-{\tau_2}
			& FTB(\Theta) \ar[d]^{FT(\zeta)}
			& \\
		F(\Theta) \ar[r]_-{\tau_2}
			& FT(\Theta)
			& F(\Theta) \ar[l]^-{F(a)}
	}
	$$

\noindent Using the translation $(-)_{1} \leftrightarrow (-)_{2}$ of
Theorem~\ref{thm:mates}, one can show that the upper-left rectangle is
equivalent to the assumption~\eqref{eq:kappa-rho}. To see this, we use
Lemma~\ref{lm:mates-composition} to obtain $(\delta \circledcirc
\tau)_2 = (\delta_1 \after B\tau_1)_2 = \delta_2 T \after L\tau_2$ and
$(\tau \circledcirc \delta)_2 = (\tau_1 L \after T \delta_1)_2 =
\tau_2 B \after \delta_2$.  Moreover, it is easy to check that
$(\delta_1 \after B\tau_1 \after \kappa G)_2 = F \kappa
\after(\delta_1 \after B \tau_1)_2$.
The lower-right rectangle commutes since $((\Theta,a),\zeta)$ is a
$\overline{B}$-coalgebra.  The other two squares commute by
naturality.

For the second part of the theorem, let $c \colon X \rightarrow BT(X)$
be a coalgebra. Since $\emlog$ is an algebra morphism, the equation
$\emlog \after \trem_c = \trlog_c$ follows by uniqueness of coalgebra-to-algebra morphisms
from $c$ to $\calogbt$.  \QED
\end{proof}

The equality $\emlog \after \trem_c = \trlog_c$ means that equivalence
w.r.t.\ Eilenberg-Moore trace semantics implies equivalence w.r.t.\ the
logical trace semantics. The converse is, of course, true if $\emlog$
is monic. For that, it is sufficient if $\delta \colon BG \Rightarrow
GL$ is \emph{expressive}.  Here expressiveness is the property that
for any $B$-coalgebra, the unique coalgebra-to-algebra morphism to the
corecursive algebra $G_{\delta}(\Phi, \alpha^{-1})$ factors as a $B$-coalgebra
homomorphism followed by a mono. This holds in particular if the
components $\delta_{A} \colon BG(A) \rightarrow GL(A)$ are all monic
(in $\cat{C}$)~\cite{Klin07}.

\begin{lemma}
\label{lm:expr-final}
If $\delta \colon BG \Rightarrow GL$ is expressive, then $\emlog$ is
monic.  Moreover, if $\delta$ is an isomorphism, then $\emlog$ is an
iso as well.
\end{lemma}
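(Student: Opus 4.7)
The plan is to reduce both statements to properties of the final $B$-coalgebra $(\Theta,\zeta)$, using that $\emlog$ is by construction the unique coalgebra-to-algebra morphism from $(\Theta,\zeta)$ to the corecursive $B$-algebra $G_{\delta}(\Phi,\alpha^{-1})$.

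For the first part, I would apply expressiveness directly to the final coalgebra $(\Theta,\zeta)$: by hypothesis $\emlog$ factors as
\[
\Theta \xrightarrow{\,h\,} Y \xrightarrow{\,m\,} G(\Phi),
\]
where $h$ is a $B$-coalgebra homomorphism (for some coalgebra structure on $Y$) and $m$ is monic. By finality of $(\Theta,\zeta)$ there is a unique coalgebra morphism $g\colon Y\to\Theta$, and both $g\after h$ and $\id_{\Theta}$ are coalgebra endomorphisms of $\Theta$; by uniqueness $g\after h=\id_{\Theta}$, so $h$ is a split mono. Hence $\emlog = m\after h$ is a composite of monos, and thus monic.

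For the second part, suppose $\delta$ is an isomorphism. Since $G\colon \op{\Cat{D}}\to\Cat{C}$ sends isos to isos and $\alpha$ is iso, the algebra $G_{\delta}(\Phi,\alpha^{-1})=G(\alpha^{-1})\after\delta$ is an isomorphism $BG(\Phi)\congrightarrow G(\Phi)$. I would then invoke the standard fact that a corecursive algebra whose structure map is invertible gives a final coalgebra by inverting it: indeed, if $a\colon B(A)\congrightarrow A$ is corecursive, then for any coalgebra $c\colon X\to B(X)$ the unique coalgebra-to-algebra morphism $f\colon X\to A$ satisfies $B(f)\after c = a^{-1}\after f$, so $f$ is the unique $B$-coalgebra morphism into $(A,a^{-1})$. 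Applied to our situation, this makes $(G(\Phi),(G_{\delta}(\Phi,\alpha^{-1}))^{-1})$ a final $B$-coalgebra. Since $\emlog$ is the unique $B$-coalgebra morphism between two final $B$-coalgebras, it is an iso.

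The only delicate point is verifying that the coalgebra structure produced by expressiveness really gives $h$ as a coalgebra morphism out of the final coalgebra, so that finality forces $h$ to be a split mono; everything else is routine uniqueness-based reasoning.
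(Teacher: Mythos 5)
Your proposal is correct and follows essentially the same route as the paper: split monicity of the coalgebra-homomorphism part of the expressive factorisation via finality of $(\Theta,\zeta)$, and for the second claim the observation that an invertible corecursive algebra is the inverse of a final coalgebra, so $\emlog$ is a morphism between final coalgebras. Your "delicate point" is already covered by the definition of expressiveness, which explicitly produces $h$ as a $B$-coalgebra homomorphism out of the given coalgebra.
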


\begin{proof}
Expressivity of $\delta$ means that we have $\emlog = m \after h$ for
some coalgebra homomorphism $h$ and mono $m$.  By finality of $\zeta$
there is a $B$-coalgebra morphism $h'$ such that $h' \after h =
\idmap$.  It follows that $h$ is monic (in $\Cat{C}$), so that $m
\after h = \emlog$ is monic too.

For the second claim, if $\delta$ is an isomorphism, then
$G(\alpha^{-1}) \after \delta \colon BG(\Phi) \rightarrow G(\Phi)$ is
an invertible corecursive $B$-algebra, which implies it is a final
coalgebra (see~\cite[Proposition 7]{CaprettaUustaluVene:recursive},
which states the dual).  It then follows from~\eqref{eq:theory-finalb}
that $\emlog$ is a coalgebra morphism from one final $B$-coalgebra to
another, which means it is an isomorphism.  \QED
\end{proof}

Previously, we have seen both a class of examples of the
Eilenberg-Moore approach (Theorem~\ref{thm:em-automata}), and the
logical approach (Proposition~\ref{prop:automaton-logic}).  Both arise
from the same data: a monad $T$ (just a functor in the logical
approach) and an $\EM$-algebra $t$.  We thus obtain, for these
automata-like examples, both a logical trace semantics and a matching
`Eilenberg-Moore' semantics, where the latter essentially amounts to a
determinisation procedure.  The underlying distributive laws
satisfy~\eqref{eq:kappa-rho} by construction, so that the two
approaches coincide (as already seen in the concrete examples).

\begin{theorem}
Let $\Omega$ be a set, $T \colon \Sets \rightarrow \Sets$ a monad and
$t \colon T(\Omega) \rightarrow \Omega$ an $\EM$-algebra.  The
$\EM$-law $\kappa$ of Theorem~\ref{thm:em-automata}, together with
$\delta, \tau$ as defined in the proof of
Proposition~\ref{prop:automaton-logic},
satisfies~\eqref{eq:kappa-rho}. For any coalgebra $c \colon X
\rightarrow \Omega \times T(X)^A$, the map $\trlog_c$ coincides (up to
isomorphism) with the map $\trem_c$.
\end{theorem}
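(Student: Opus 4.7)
The plan has two parts: first verify the diagram~\eqref{eq:kappa-rho}, then deduce the coincidence of $\trem_c$ and $\trlog_c$ by invoking Theorem~\ref{thm:logic-and-em} and Lemma~\ref{lm:expr-final}.

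For~\eqref{eq:kappa-rho}, I would chase an element $u \in T(\Omega \times (\Omega^X)^A)$ through both paths of the diagram, with the observation that every map appearing in either composite is built from $t$, $\st$, and evaluation maps, so the comparison reduces to the naturality of $\st$ and the equation $\st(T(f)(u))(x) = T(\evmap_x \after f)(u)$. Concretely: on the right-hand side $\kappa_{\Omega^X}(u) = \bigl(t(T\pi_1(u)),\,\st(T\pi_2(u))\bigr)$, applying $B(\tau_X) = \idmap \times \tau_X^A$ replaces the second component by $a \mapsto \tau_X(T(\evmap_a)(T\pi_2(u)))$, and unfolding $\tau_X = t^X \after \st$ gives $a \mapsto \bigl(x \mapsto t(T(\evmap_x \after \evmap_a \after \pi_2)(u))\bigr)$; finally $\delta_X$ packages this as the function on $A \times X + 1$ sending $*$ to $t(T\pi_1(u))$ and $(a,x)$ to $t(T(\evmap_x \after \evmap_a \after \pi_2)(u))$. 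On the left-hand side, $\tau_{A\times X + 1} \after T(\delta_X)$ sends $u$ to $y \mapsto t(T(\evmap_y \after \delta_X)(u))$, and since $\evmap_* \after \delta_X = \pi_1$ and $\evmap_{(a,x)} \after \delta_X = \evmap_x \after \evmap_a \after \pi_2$, the two paths agree. I expect this calculation to be the one place where care is needed: the main obstacle is simply bookkeeping of strength and the currying isomorphism.

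Once~\eqref{eq:kappa-rho} is established, Theorem~\ref{thm:logic-and-em} applies and yields the factorisation $\trlog_c = \emlog \after \trem_c$ for every coalgebra $c \colon X \to BT(X) = \Omega \times T(X)^A$. It remains to argue that $\emlog$ is an isomorphism. For this I would invoke Lemma~\ref{lm:expr-final}: the step $\delta_X \colon \Omega \times (\Omega^X)^A \to \Omega^{A \times X + 1}$ is nothing but the canonical currying/uncurrying bijection, hence an isomorphism at every component, so $\delta$ is in particular expressive and Lemma~\ref{lm:expr-final} gives that $\emlog \colon \Theta \to G(\Phi) = \Omega^{A^*}$ is an isomorphism. (Indeed both $\Theta$ and $\Omega^{A^*}$ are final coalgebras for isomorphic functors: $\Omega \times (-)^A$ on $\Sets$ versus its currying along the initial $L$-algebra $A^* \cong A \times A^* + 1$.) Therefore $\trem_c$ and $\trlog_c$ coincide up to the isomorphism $\emlog$, as required. \QED
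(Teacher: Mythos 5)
Your proposal is correct and follows essentially the same route as the paper: an element-wise verification of~\eqref{eq:kappa-rho} using $\kappa = (t \times \st) \after \langle T\pi_1, T\pi_2\rangle$, the identity $\st(T(f)(u))(x) = T(\evmap_x \after f)(u)$, and the equations $\evmap_\ast \after \delta_X = \pi_1$, $\evmap_{(a,x)} \after \delta_X = \evmap_x \after \evmap_a \after \pi_2$, followed by Theorem~\ref{thm:logic-and-em} and the isomorphism clause of Lemma~\ref{lm:expr-final} (since $\delta$ is the currying bijection). No gaps.
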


\begin{proof}
To prove~\eqref{eq:kappa-rho}, i.e., $\delta \circledcirc
\tau \after \kappa = \tau \circledcirc \delta$, we first compute,
following~\eqref{eqn:logicextensions},
$$\begin{array}{rcl}
\lefteqn{(\delta \circledcirc \tau)_X \,\colon\,
   \Omega \times (T(\Omega^X))^A \longrightarrow \Omega^{A \times X + 1}}
\\
& = &
\delta_X \after (\idmap \times \tau_X^A)
\\
& = &
\delta_X \after (\idmap \times (t^X \after \st)^A)
\\
\lefteqn{(\tau \circledcirc \delta)_X \,\colon\,
    T(\Omega \times (\Omega^X)^A) \longrightarrow \Omega^{A \times X + 1}}
\\
& = &
\tau_{A \times X + 1} \after T(\delta_X)
\\
& = &
t^{A \times X + 1} \after \st \after T(\delta_X).
\end{array}$$

\noindent Hence, we need to show that 
\begin{equation}
\label{eq:toshow-rho}
\delta_X \after (\idmap \times (t^X \after \st)^A) \after (t \times \st) \after \langle T(\pi_1),T(\pi_2)\rangle 
		= t^{A \times X + 1} \after \st \after T(\delta_X)
	\end{equation}

\noindent for every set $X$. To this end, let $S \in T(\Omega \times
(\Omega^X)^A)$ and $u \in (A \times X + 1)$. We first spell out the
right-hand side:
$$\begin{array}{rcl}
&& (t^{A \times X + 1} \after \st \after T(\delta_X)(S))(u)\\
& = &
t((\st \after T(\delta_X)(S))(u)) 
\\
& = &
t(T(\evmap_u \after \delta_X)(S)) 
\\
& = &
\begin{cases} 
t(T(\pi_1)(S)) & \text{if } u=\ast \in 1 \\
t(T(\evmap_x \after \evmap_a \after \pi_2)(S)) & \text{if } u=(a,x) \in A \times X 
\end{cases}
\end{array}$$

\noindent In the last step, we used the definition of $\delta$:
	\begin{align*}
		\evmap_{\ast} \after \delta_X (\omega,f) &= \delta_X(\omega,f)(\ast) = \omega = \pi_1(\omega,f)\,, \\
		\evmap_{(a,x)} \after \delta_X (\omega, f) &= \delta_X(\omega,f)(a,x) = f(a)(x) = \evmap_x \after \evmap_a \after \pi_2(\omega,f) \,.
	\end{align*}
	For the left-hand side of~\eqref{eq:toshow-rho}, distinguish cases $\ast \in 1$ and $(a,x) \in A \times X$.
	\begin{align*}
		& (\delta_X \after (\idmap \times (t^X \after \st)^A) \after (t \times \st) \after \langle T(\pi_1),T(\pi_2)\rangle(S))(\ast) \\
		& = \pi_1(\idmap \times (t^X \after \st)^A) \after (t \times \st) \after \langle T(\pi_1),T(\pi_2)\rangle(S)) \\
		& = t(T(\pi_1)(S)) 
	\end{align*}
	which matches the right-hand side of~\eqref{eq:toshow-rho}. For $(a,x) \in A \times X$, we have:
	\begin{align*}
		& (\delta_X \after (\idmap \times (t^X \after \st)^A) \after (t \times \st) \after \langle T(\pi_1),T(\pi_2)\rangle(S))(a,x) \\
		& = (((t^X \after \st)^A \after \st)(T(\pi_2)(S)))(a)(x) \\
		& = (((t^X)^A \after \st^A \after \st)(T(\pi_2)(S)))(a)(x) \\
		& = ( t^X \after \st (\st(T(\pi_2)(S))(a)))(x) \\
		& = ( t^X \after \st (T(\evmap_a)(T(\pi_2)(S)))(x) \\
		& = ( t^X \after \st (T(\evmap_a \after \pi_2)(S)))(x) \\
		& = t (\st(T(\evmap_a \after \pi_2)(S))(x)) \\
		& = t (T(\evmap_x) \after T(\evmap_a \after \pi_2)(S)) \\
		& = t (T(\evmap_x \after \evmap_a \after \pi_2)(S)) 
	\end{align*}
	which also matches the right-hand side, hence we obtain~\eqref{eq:toshow-rho} as desired.
	
	Since~\eqref{eq:kappa-rho} is satisfied, it follows from Theorem~\ref{thm:logic-and-em} that
	$\emlog \after \trem_c = \trlog_c$. Since $\delta$ is an iso, 
	$\emlog$ is an iso as well by Lemma~\ref{lm:expr-final}. 
	\QED
\end{proof}

\subsection{Kleisli and Logic}
\label{sec:comp-kleisli-logic}

To compare the Kleisli approach to the logical approach, we combine
their assumptions. Further, similar to the comparison between
Eilenberg-Moore and logic in the previous section, we assume a first
compatibility criterion by requiring the components $\tau$ to be 
componentwise Eilenberg-Moore algebras. 
\begin{assumption}[Comparison Kleisli and Logic] In this subsection, we assume
an adjunction $F \dashv G$,
endofunctors $A,L$ and a monad $T$ as follows:
$$
\xymatrix{
\op{\cat{D}} \ar@/_2ex/[rr]_-G \save !L(.5) \ar@(dl,ul)^{L} \restore 
& \bot & \ar@/_2ex/[ll]_-{F} 
\cat{C}\ar@/^2ex/[rr]^-{J} \ar@(ul,ur)^{TA} & \bot & \Kl(T) \ar@/^2ex/[ll]^-V \save !R(.5) \ar@(ur,dr)^{\overline{A}} \restore 
}
$$

\noindent together with:
\begin{enumerate}
\item An initial algebra $\beta \colon A(\Psi) \congrightarrow \Psi$. 

\item A $\Kl$-law $\lambda \colon AT \Rightarrow TA$, or equivalently,
  an extension $\overline{A}\colon \Kl(T) \rightarrow \Kl(T)$
  of $A\colon\cat{C} \rightarrow \cat{C}$.

\item $(\Psi,J(\beta^{-1}))$ is a final $\overline{A}$-coalgebra. 

\item An initial algebra $\alpha \colon L(\Phi) \congrightarrow \Phi$.

\item A step $\delta \colon AG \Rightarrow GL$.

\item A step $\tau\colon TG \Rightarrow G$, whose components are
  $\EM$-algebras (a monad action).
\end{enumerate}
\end{assumption}

\noindent By the last assumption, $\tau$ satisfies the equivalent conditions in Lemma~\ref{lm:eq-tau};
again, this is in itself not part of the logical approach, but is used
in the comparison to the Kleisli approach to trace semantics. 
  We
obtain the following unique coalgebra-to-algebra morphism $\kllog$
from the initial $A$-algebra:
\begin{equation}\label{eq:theory-initB-second}
\vcenter{
\xymatrix@R-0.5pc@C=0.7cm{
\Psi  \ar[rr]^{\kllog} & & G(\Phi) 
\\
& & GL(\Phi) \ar[u]_{G(\alpha^{-1})} \\
A(\Psi)\ar[uu]^{\beta}_{\cong} \ar[rr]^{A(\kllog)} 
	& 
	& AG(\Phi) \ar[u]_{\delta} \\
}}
\end{equation}

\noindent Since $\tau$ is a monad action, for every $X$, $G(X)$
carries an Eilenberg-Moore algebra $\tau_{X}$.  Thus we can take the
adjoint transpose (w.r.t.\ the Eilenberg-Moore adjunction)
$\overline{\kllog} = \tau_{\Phi} \after T(\kllog) \colon T(\Psi)
\rightarrow G(\Phi)$. We then have the following analogue of
Theorem~\ref{thm:logic-and-em}.

\begin{lemma}
\label{lm:cond-kleisli-log}
The distributive law $\lambda\colon AT \Rightarrow TA$ commutes with
the logics in~\eqref{eqn:logicextensions}, as in:
\begin{equation}
\label{eq:lambda-rho}
	\vcenter{
	\xymatrix@R-0.5pc{
		ATG \ar[rr]^{\lambda G} \ar[dr]_-{\delta \circledcirc \tau}
			& & TAG \ar[dl]^-{\tau \circledcirc \delta} \\
			& GL & 
	}
	}
\end{equation}	

\noindent iff there is a natural transformation $\widehat{\delta}
\colon L\widehat{F} \Rightarrow \widehat{F}\overline{A}$ satisfying
$\widehat{\delta} J = \delta$ in:
\[ \xymatrix@R-0.5pc{
& & \Kl(T)\ar[dll]_{\widehat{F}}\save !R(.5) \ar@(ur,dr)^{\overline{A}} \restore 
  & \quad\mbox{with}\quad
  L\widehat{F} \ar@{=>}[r]^-{\widehat{\delta}} & \widehat{F}\overline{A}
\\
\op{\cat{D}}\save !L(.5) \ar@(dl,ul)^{L} \restore 
\\
& & \cat{C}\ar[ull]^{F}\ar[uu]_{J}\save !R(.5) \ar@(ur,dr)^{A} \restore 
  & \quad\mbox{with}\quad
  LF \ar@{=>}[r]^-{\delta} & FA
} \]

\noindent The two natural transformation on the right are written in
$\cat{D}$ instead of $\op{\cat{D}}$. The functor $\widehat{F} \colon
\Kl(T) \rightarrow \op{\cat{D}}$ is the extension corresponding to
$\tau$, Lemma~\ref{lm:eq-tau}.
\end{lemma}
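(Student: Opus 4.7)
The plan is to mimic the argument of Lemma~\ref{lm:condition-em-log} in the Kleisli setting: I will reduce the existence of $\widehat{\delta}$ to a single coherence condition, then match that condition with~\eqref{eq:lambda-rho} via the mate correspondence.

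First I would observe that, since $J$ is bijective on objects, requiring $\widehat{\delta}J = \delta$ already pins down every component of $\widehat{\delta}$; what remains is the naturality of $\widehat{\delta}$ in $\Kl(T)$. Every Kleisli morphism $g \colon X \to TY$ factors in $\Kl(T)$ as $\idmap_{TY} \klafter J(g)$ (by the monad unit law), and naturality of $\widehat{\delta}$ at $J(g)$ is automatic from naturality of $\delta$ in $\cat{C}$ (using $\widehat{F}J = F$ and $\overline{A}J = JA$). Hence it suffices to verify naturality of $\widehat{\delta}$ at the counits $\varepsilon_{Y} = \idmap_{TY} \colon TY \to Y$ of the Kleisli adjunction.

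Next I would unfold this single generic naturality square. Lemma~\ref{lm:eq-tau} gives that $\widehat{F}$ acts on a Kleisli morphism $f \colon X \to TY$ as $\widehat{F}(f) = \tau_{2,Y} \after F(f)$ in $\op{\cat{D}}$; in particular $\widehat{F}(\idmap_{TY}) = \tau_{2,Y}$, and since $\overline{A}(\idmap_{TY}) = \lambda_{Y}$ as a Kleisli morphism $ATY \to AY$, we also have $\widehat{F}\overline{A}(\idmap_{TY}) = \tau_{2,AY} \after F(\lambda_{Y})$. Writing $\widehat{\delta}$ in $\op{\cat{D}}$ as the mate $\delta_{2} \colon FA \Rightarrow LF$, the naturality square at $\idmap_{TY}$ reduces to the equation
\[
L\tau_{2} \after \delta_{2}T \;=\; \delta_{2} \after \tau_{2}A \after F\lambda \qquad \text{as } FAT \Rightarrow LF \text{ in } \op{\cat{D}}.
\]

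Finally I would match this coherence condition with~\eqref{eq:lambda-rho} via mates. Lemma~\ref{lm:mates-composition} gives $(\delta \circledcirc \tau)_{2} = L\tau_{2} \after \delta_{2}T$ and $(\tau \circledcirc \delta)_{2} = \delta_{2} \after \tau_{2}A$. A short calculation with the explicit formula for mates shows that precomposing a step $\beta \colon HG \Rightarrow GL$ with $\gamma G$, for $\gamma$ a natural transformation in $\cat{C}$, corresponds under the mate correspondence to precomposing $\beta_{2}$ with $F\gamma$; hence the mate of $(\tau \circledcirc \delta) \after \lambda G$ is $\delta_{2} \after \tau_{2}A \after F\lambda$, so~\eqref{eq:lambda-rho} is precisely the above coherence equation under mates. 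By bijectivity of the mate correspondence (Theorem~\ref{thm:mates}), the two conditions are equivalent, completing the proof.

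The main obstacle I anticipate is bookkeeping of directions: the lemma's diagram displays $\widehat{\delta}$ and $\delta$ using the $\cat{D}$-convention, though they really live in $\op{\cat{D}}$ as natural transformations with the direction of the mate $\delta_{2}$. Combined with the asymmetric variance patterns of the two step-compositions in~\eqref{eqn:logicextensions}, this makes the translation through Lemma~\ref{lm:mates-composition} the most error-prone part; the actual categorical content of the equivalence is routine once the conventions have been set straight.
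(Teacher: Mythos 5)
Your proposal is correct and follows essentially the same route as the paper: the paper's proof likewise forces the components of $\widehat{\delta}$ to be those of $\delta$, splits the naturality square at a general Kleisli map $f\colon X \to T(Y)$ into a lower rectangle that is just naturality of $\delta_2$ in $\cat{C}$ and an upper rectangle that is the naturality square at $\idmap_{TY}$, and identifies that upper rectangle with~\eqref{eq:lambda-rho} (recovering the converse by specialising to $f = \idmap_{TY}$) --- exactly your factorisation $g = \varepsilon_Y \klafter J(g)$. The only difference is that you spell out, via Lemma~\ref{lm:mates-composition} and the precomposition-with-$F\gamma$ computation, the mate translation that the paper merely asserts when it says the upper rectangle ``is equivalent to~\eqref{eq:lambda-rho}''; your extra detail is accurate.
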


\begin{proof}
The condition $\widehat{\delta} J = \delta$ simply means that
$\widehat{\delta}_X = \delta_X$ for every object $X$ in $\cat{C}$.
Naturality of $\widehat{\delta}$ amounts to commutativity of the
outside of the diagram below, for every map $f \colon X \rightarrow
T(Y)$.
	$$
	\xymatrix@R=0.4cm{
		& LF(Y) \ar[rr]^-{\delta} \ar[dd]_{L(\tau)}
                \ar`l[ld]`[dddr]_{L\widehat{F}(f)}[ddd]
			& & FA(Y) \ar[d]^{\tau}
                \ar`r[rd]`[dddl]^{\widehat{F}\overline{A}(f)}[ddd] &
                \\
                & & & FTA(Y) \ar[d]^{F(\lambda)} &
                \\
		& LFT(Y) \ar[rr]^-{\delta} \ar[d]_{LF(f)}
			& & FAT(Y) \ar[d]^{FA(f)} &
                \\
		& LF(X) \ar[rr]_-{\delta}
			& & FA(X) &
	}
	$$ 

\noindent The lower rectangle commutes by naturality, the upper is
equivalent to~\eqref{eq:lambda-rho}.  Hence,~\eqref{eq:lambda-rho}
implies naturality. Conversely, if $\widehat{\delta}$ is natural, then
the upper rectangle commutes for each $Y$ by taking $f = \idmap[TY]$
(the identity map in $\cat{C}$).  \QED
\end{proof}

\begin{theorem}
\label{thm:kl-log-comparison}
If the equivalent conditions in Lemma~\ref{lm:cond-kleisli-log} hold,
then the map $\overline{\kllog} = \tau_\Phi \after T(\kllog) \colon
T(\Psi) \rightarrow G(\Phi)$ is a $TA$-algebra morphism from $\cakl$ to
$\calogtb$, as on the left below.
	$$
	\xymatrix@R-0.5pc@C=1.8cm{
		TAT(\Psi) \ar[d]_-{\cakl} \ar[r]^-{TA(\smash{\overline{\kllog}})}
			& TAG(\Phi) \ar[d]^-{\calogtb} \\
		T(\Psi) \ar[r]^-{\smash{\overline{\kllog}}} 
			& G(\Phi)  
	}
	\qquad\qquad	
	\xymatrix@R-0.5pc{
		& X \ar[dl]_-{\trkl_c} \ar[dr]^(0.49){\trlog_c} & \\
		 T(\Psi) \ar[rr]^-{\smash{\overline{\kllog}}} & &  G(\Phi)
	}
	$$

\noindent In that case, for any coalgebra $c \colon X \rightarrow
TA(X)$ there is a commuting triangle as on the right above.
\end{theorem}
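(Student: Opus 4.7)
The plan is to verify the square on the left — the $TA$-algebra morphism property of $\overline{\kllog}$ — and then derive the triangle on the right by uniqueness of coalgebra-to-algebra maps into a corecursive algebra. I first unfold all the definitions: $\cakl = T(\beta) \after \mu_{A\Psi} \after T(\lambda_\Psi)$ from~\eqref{eqn:cakl}, $\calogtb = G(\alpha^{-1}) \after \tau_{L\Phi} \after T(\delta_\Phi)$ from~\eqref{diag:twocorecursivealgebras}, and $\overline{\kllog} = \tau_\Phi \after T(\kllog)$. Using the defining diagram~\eqref{eq:theory-initB-second} of $\kllog$, namely $\kllog \after \beta = G(\alpha^{-1}) \after \delta_\Phi \after A(\kllog)$, the composite $\overline{\kllog} \after \cakl$ becomes $\tau_\Phi \after TG(\alpha^{-1}) \after T(\delta_\Phi) \after TA(\kllog) \after \mu_{A\Psi} \after T(\lambda_\Psi)$; by naturality of $\tau$ at the morphism $\alpha^{-1}$ of $\op{\cat{D}}$ this equals $G(\alpha^{-1}) \after \tau_{L\Phi} \after T(\delta_\Phi) \after TA(\kllog) \after \mu_{A\Psi} \after T(\lambda_\Psi)$. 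Since $\calogtb$ also begins with $G(\alpha^{-1})$, it suffices to strip this common prefix and compare what remains.

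The remaining equation to verify is
\[ \tau_{L\Phi} \after T(\delta_\Phi) \after TA(\kllog) \after \mu_{A\Psi} \after T(\lambda_\Psi) \;=\; \tau_{L\Phi} \after T(\delta_\Phi) \after TA(\tau_\Phi) \after TAT(\kllog). \]
I would massage the left-hand side by a chain of routine naturality arguments followed by the monad-action axiom. First, naturality of $\mu$ at $A(\kllog)$ rewrites $TA(\kllog) \after \mu_{A\Psi}$ as $\mu_{AG\Phi} \after T^2A(\kllog)$; next, naturality of $\lambda$ at $\kllog$ replaces $T^2A(\kllog) \after T(\lambda_\Psi)$ by $T(\lambda_{G\Phi}) \after TAT(\kllog)$; a second application of naturality of $\mu$, this time at $\delta_\Phi$, rewrites $T(\delta_\Phi) \after \mu_{AG\Phi}$ as $\mu_{GL\Phi} \after T^2(\delta_\Phi)$; and finally the monad-action axiom $\tau_{L\Phi} \after \mu_{GL\Phi} = \tau_{L\Phi} \after T(\tau_{L\Phi})$ collapses the two $T$-layers. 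The resulting expression is $\tau_{L\Phi} \after T\bigl(\tau_{L\Phi} \after T(\delta_\Phi) \after \lambda_{G\Phi}\bigr) \after TAT(\kllog)$, and here the hypothesis~\eqref{eq:lambda-rho} from Lemma~\ref{lm:cond-kleisli-log}, which unfolds to $\tau_{LX} \after T(\delta_X) \after \lambda_{GX} = \delta_X \after A(\tau_X)$, replaces the inner bracket by $\delta_\Phi \after A(\tau_\Phi)$, yielding the required right-hand side.

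For the triangle on the right, $\trkl_c$ is by~\eqref{diag:trkl} the unique coalgebra-to-algebra morphism from $(X,c)$ to $\cakl$, and the square just established makes $\overline{\kllog}$ a $TA$-algebra morphism from $\cakl$ to $\calogtb$. Hence $\overline{\kllog} \after \trkl_c$ is a coalgebra-to-algebra morphism from $(X,c)$ to $\calogtb$, and since the latter is corecursive this forces $\overline{\kllog} \after \trkl_c = \trlog_c$. The main obstacle is purely a matter of bookkeeping — keeping subscripts straight and applying each naturality or coherence step at the correct component — as the argument closely mirrors the proof of Theorem~\ref{thm:logic-and-em}, with the $\Kl$-law $\lambda$ playing the role of the $\EM$-law $\kappa$, the initial $A$-algebra $\beta$ in place of the final $B$-coalgebra $\zeta$, and the monadic transpose passing through the Kleisli side rather than the Eilenberg-Moore side.
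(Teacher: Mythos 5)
Your proposal is correct and follows essentially the same route as the paper's proof: it unfolds $\cakl$, $\calogtb$ and $\overline{\kllog}$, uses the defining property~\eqref{eq:theory-initB-second} of $\kllog$, naturality of $\mu$, $\lambda$ and $\tau$, the monad-action axiom for $\tau$, and the hypothesis~\eqref{eq:lambda-rho} — exactly the ingredients of the paper's single commuting diagram, just written out equationally — and the triangle is obtained by the same uniqueness argument into the corecursive algebra $\calogtb$.
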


\begin{proof}
Consider the following diagram.
	$$
	\xymatrix@C=1.3cm@R=0.5cm{
		TAT(\Psi) \ar[d]^{T(\lambda)} \ar[r]^-{TAT(\kllog)} \ar`l[d]`[dddr]|{\cakl}[ddd]
			& TATG(\Phi) \ar[rr]^-{TA(\tau)} \ar[d]^{T(\lambda)}
			& & TAG(\Phi) \ar[d]_{T(\delta)} \ar`r[d]`[dddl]|{\calogtb}[ddd] \\
		TTA(\Psi) \ar[d]^{\mu } \ar[r]^{TTA(\kllog)} 
			& TTAG(\Phi) \ar[r]^{TT\delta} \ar[d]^{\mu}
			& TTGL(\Phi) \ar[r]^{T(\tau)} \ar[d]^{\mu}
			& TGL(\Phi) \ar[d]_{\tau}\\
		TA(\Psi) \ar[r]^{TA(\kllog)} \ar[d]^{T(\beta)}
			& TAG(\Phi) \ar[r]^{T(\delta)}
			& TGL(\Phi) \ar[r]^{\tau} \ar[d]^{TG(\alpha^{-1})}
			& GL(\Phi) \ar[d]_{G(\alpha^{-1})}\\
		T(\Psi) \ar[rr]_-{T(\kllog)}
			& & TG(\Phi) \ar[r]_-{\tau}
			& G(\Phi)
	}
	$$

\noindent Everything commutes: the upper right rectangle by
assumption~\eqref{eq:lambda-rho}, the right-most square in the middle
row since $\tau$ is an action, the outer shapes by definition of
$\cakl$ and $\calogtb$, the lower left rectangle by
\eqref{eq:theory-initB-second} and the rest by naturality.  

For the second part of the theorem, since $\overline{\kllog}$ is an algebra morphism,
we have that $\overline{\kllog} \circ \trkl_c$ is a coalgebra-to-algebra morphism
from $c$ to the corecursive algebra $\calogtb$. Hence $\overline{\kllog} \circ \trkl_c = \trlog_c$ 
by uniqueness of such morphisms. 
\QED
\end{proof}

The above result gives a sufficient condition under which Kleisli
trace equivalence implies logical trace equivalence.  However,
contrary to the case of traces in Eilenberg-Moore, in
Lemma~\ref{lm:expr-final}, we currently do not have a converse.  
The condition that $\delta$ has monic components is, surprisingly,
not sufficient for $\overline{\kllog}$ to be monic, as confirmed
by Example~\ref{ex:strange-rho} below. In the comparison between
Eilenberg-Moore and Kleisli traces in
Section~\ref{sec:comp-kleisli-em}, a similar difficulty arises.

\begin{example}\label{ex:strange-rho}
We give an example where $\delta \colon AG \Rightarrow GL$ is monic
and~\eqref{eq:lambda-rho} commutes, but where nevertheless logical
equivalence does not imply Kleisli trace equivalence.  Let
$\Cat{C} = \Cat{D} = \Sets$, $F = G = 2^{-}$, $A = L = (-) + 1$, 
$T = \Powfin$, $\tau \colon \Powfin 2^{-} \Rightarrow 2^{-}$ given
by union as before, and define the step $\delta$, for
$\varphi \in 2^X$, by $\delta_X(\varphi)(t) = \top$ iff 
$t \in X \wedge
\varphi(t)$, and $\delta_X(\ast)(t) = \top$ (the latter differs from
the step in Proposition~\ref{prop:tree}). Notice that $\delta$ indeed
has monic components.
(In the conference version~\cite{JacobsLR18}, we used a more general setting 
with $A = L = (\Sigma \times -) + 1$, where $\Sigma$ is a fixed set. However,
the associated $\delta$ is not monic 
if $\Sigma$ contains more than one element, contrary to what is stated there. Indeed,
we need to choose $\Sigma$ to be a singleton for the example to go through.)
	
Let $\lambda \colon AT \Rightarrow TA$ be the distributive law
from~\cite{HasuoJS07}, given by $\lambda_X(S) = \{x \mid x \in
S\}$ for $S \in \Powfin(X)$, and $\lambda(\ast) = \{\ast\}$. Then~\eqref{eq:lambda-rho} 
is satisfied:
	$$
	\xymatrix{
		\Powfin(2^X) + 1 \ar[rr]^{\lambda} \ar[d]_{\tau + 1}
			& & \Powfin(2^X + 1) \ar[d]^{\Powfin(\delta)} \\
		2^X + 1 \ar[r]^{\delta}
			& 2^{X + 1}
			& \Powfin(2^{X + 1}) \ar[l]_{\tau}
	}
	$$

\noindent It is straightforward to check that this commutes.  However,
given a coalgebra $c \colon X \rightarrow \Powfin A(X)$, the induced logical
semantics $\trlog \colon X \rightarrow 2^{\mathbb{N}}$ is: $\trlog(x)(n)
= \top$ iff $\ast \in c(x)$ or $n > 0 \wedge \exists y
\in c(x). \, \trlog(y)(n-1)=\top$.  In
particular, this means that if $\ast \in c(x)$ and $\ast \in c(y)$ for
some states $x,y$, then they are trace equivalent. This differs from
the Kleisli semantics, which amounts to the usual language semantics
of non-deterministic automata (over a singleton alphabet)~\cite{HasuoJS07}.
\end{example}

C{\^{\i}}rstea~\cite{Cirstea16} compares logical traces to a
`path-based semantics', which resembles the Kleisli approach (as well
as~\cite{KurzMPS15}) but does not require a final
$\overline{A}$-coalgebra.  In particular, given a commutative monad
$T$ on $\Sets$ and a signature $\Sigma$, she considers a canonical
distributive law $\lambda \colon H_\Sigma T \Rightarrow T H_\Sigma$,
which coincides with the one in~\cite{HasuoJS07}. C{\^{\i}}rstea shows
that, with $\Omega = T(1)$, $t = \mu_1 \colon TT(1) \rightarrow T(1)$
and $\delta$ from the proof of Proposition~\ref{prop:tree} (assuming
$T1$ to have enough structure to define that logic), the
triangle~\eqref{eq:lambda-rho} commutes (see~\cite[Lemma
  5.12]{Cirstea16}).

\subsection{Kleisli and Eilenberg-Moore}
\label{sec:comp-kleisli-em}

To compare the Eilenberg-Moore and Kleisli approaches, 
we first combine their assumptions. The Kleisli approach applies to $TA$-coalgebras; 
to match this, we make use of the variant of the Eilenberg-Moore approach for $TA$-coalgebras
presented in Section~\ref{sec:gen-em}.
The latter approach uses a lifting of a functor $B$
as well as a step relating $A$ and $B$. 

\begin{assumption}[Comparison Kleisli and Eilenberg-Moore] In this subsection, 
we assume to endofunctors $A,B$ and a monad $T$, on a base category $\cat{C}$,
and liftings $\overline{A}, \overline{B}$ to Kleisli- and
Eilenberg-Moore-categories as follows:
$$
\xymatrix{
\EM(T) \ar@/_2ex/[rr]_-U \save !L(.5) \ar@(dl,ul)^{\overline{B}} \restore 
& \bot & \ar@/_2ex/[ll]_-{\free} 
\;\cat{C}\;\ar@/^2ex/[rr]^-{J} \ar@(ul,ur)^{TA} & \bot & \Kl(T) \ar@/^2ex/[ll]^-V \save !R(.5) \ar@(ur,dr)^{\overline{A}} \restore
}
$$

\noindent In this situation we further assume the following ingredients,
which combine earlier assumptions.
\begin{enumerate}
\item An initial algebra $\beta \colon A(\Psi) \congrightarrow \Psi$. 

\item A $\Kl$-law $\lambda \colon AT \Rightarrow TA$, or equivalently,
  an extension $\overline{A}\colon \Kl(T) \rightarrow \Kl(T)$ of the
  functor $A$.

\item $(\Psi,J(\beta^{-1}))$ is a final $\overline{A}$-coalgebra.

\item An $\EM$-law $\kappa \colon TB \Rightarrow BT$, or equivalently,
  a lifting $\overline{B} \colon \EM(T) \rightarrow \EM(T)$.

\item A final coalgebra $\zeta \colon \Theta \congrightarrow
  B(\Theta)$.

\item A step $\rho \colon AU \Rightarrow U\overline{B}$.
\end{enumerate}
\end{assumption}

The step $\rho \colon AU \Rightarrow U\overline{B}$ is an assumption of 
the Eilenberg-Moore approach for $TA$-coalgebras in Section~\ref{sec:gen-em}, 
defined on top of the assumptions
for the Eilenberg-Moore approach for $BT$-coalgebras.
Under a further assumption such a law corresponds to an \emph{extension}
natural transformation as in~\cite{JacobsSS15}, see Proposition~\ref{prop:cond-kleisli-em}. 

Recall from Section~\ref{sec:em} that the final
$B$-coalgebra $(\Theta,\zeta)$ gives rise to a final
$\overline{B}$-coalgebra $((\Theta,a),\zeta)$.  We will make use of
the counit $\varepsilon$ of the $\EM$-adjunction $\free \dashv U$ as a
step $U\varepsilon \colon TU \Rightarrow U$. Its components are
$\EM$-algebras. For the trace semantics
of $TA$-coalgebras via Eilenberg-Moore, see Section~\ref{sec:gen-em},
we make use of the composed step:
\begin{equation}
\begin{array}{rcl}
U\varepsilon \circledcirc \rho
& = &
\xymatrix{\Big(TAU\ar@{=>}[r]^-{T\rho} &
 TU  \overline{B}  \ar@{=>}[r]^-{U\varepsilon  \overline{B} } & U  \overline{B}  \Big) }
\end{array}.
\end{equation}

\noindent These assumptions form an instance of the assumptions in
Section~\ref{sec:comp-kleisli-logic}, where we compared Kleisli to
logical trace semantics.  In particular, in the latter we instantiate
$\op{\Cat{D}}$ with $\EM(T)$, $L$ with $\overline{B}$, $\delta$ with
$\rho\colon AU \Rightarrow U\overline{B}$ and $\tau$ with
$U\varepsilon\colon UT\Rightarrow U$. Thus, we immediately obtain the
comparison result from Theorem~\ref{thm:kl-log-comparison}.  For
presentation purposes, we restate the relevant results and
definitions.

There is the following unique coalgebra-to-algebra morphism $\klem$ from
the initial $A$-algebra:

\begin{equation}
\label{eq:theory-initB}
\vcenter{\xymatrix@R-1.2pc{
\Psi\ar@{-->}[rr]^-{\klem} & & \Theta
\\
& & B(\Theta)\ar[u]_{\zeta^{-1}}
\\
& & U\overline{B}(\Theta,a)\ar@{=}[u] 
   & BT(\Theta)
   \ar@/_1em/[ul]_-{B(a)}\ar`u[uul]_-{\caem}`l_{} [uul] &
\\
& & AU(\Theta,a)\ar[u]_-{\rho} & TA(\Theta)\ar[u]_-{U(\rho_{2})}
\\
A(\Psi)\ar@{-->}[rr]^-{A(\klem)}\ar[uuuu]^{\beta}_{\cong} 
   & & A(\Theta)\ar@{=}[u]\ar`r[ru]^-{\eta}`u_{} [ru] &
}}
\end{equation}

\noindent The rectangle on the right commutes since $\caem = \zeta^{-1}
\after B(a)$, by definition~\eqref{eqn:caem}, and:
\[ \begin{array}{rcll}
B(a) \after U(\rho_{2}) \after \eta
& = &
B(a) \after \rho \after A(\eta) \qquad \qquad &\text{by Lemma~\ref{lm:mates-useful}}
\\
& = &
\rho \after A(a) \after A(\eta) & \text{since $a \colon 
   (T(\Theta),\mu) \rightarrow (\Theta,a)$ in $\EM(T)$}
\\
& = &
\rho.
\end{array} \]

Taking the adjoint transpose, w.r.t.\ the Eilenberg-Moore adjunction
$\free \dashv U$, of this map $\klem \colon \Psi \rightarrow \Theta =
U(\Theta, a)$, yields a map of Eilenberg-Moore algebras:
$$\begin{array}{rcccl}
\overline{\klem}
& = &
\xymatrix@C-0.5pc{
\Big(\free (\Psi) \ar[r]^-{\free(\klem)}
	& \free U(\Theta, a) \ar[r]^-\varepsilon
	& (\Theta,a)\Big)
}
& = &
\xymatrix@C-0.5pc{
\Big(T(\Psi) \ar[r]^-{T(\klem)}
	& TU(\Theta, a) \ar[r]^-{a} & \Theta\Big).
}
\end{array}$$

We have seen in~\eqref{eqn:cakl} that $T(\Psi)$ is the carrier of the
corecursive algebra $\cakl \colon TAT(\Psi) \rightarrow T(\Psi)$
giving Kleisli trace semantics. At the same time $\Theta$ is the
carrier of the corecursive algebra
$\caem \colon BT(\Theta)
\rightarrow \Theta$
from~\eqref{eqn:caem} as well as the corecursive algebra 
$$
G_{U\varepsilon \circledcirc \rho}((\Theta, a), \zeta) = \caem^A = 
\caem \after U(\rho_2) \colon
TA(\Theta) \rightarrow \Theta \,,
$$
where $((\Theta, a), \zeta)$ is the final $\overline{B}$-coalgebra,
and the equality on the right is given by Lemma~\ref{lm:corec-alg-ta}. 
Thus, the map $\overline{\klem} \colon
T(\Psi) \rightarrow \Theta$ relates the carriers of
the corecursive $TA$-algebras $\cakl$ and $\caem^A$.  Like in the previous sections, we
now give a sufficient condition for $\overline{\klem}$ to be an algebra morphism.

\begin{proposition}
\label{prop:cond-kleisli-em}
In the above setting, the following three statements are equivalent.
\begin{enumerate}
\item The distributive law $\lambda\colon AT \Rightarrow TA$ commutes
  with the two composed steps $\rho \circledcirc U(\varepsilon)$ and
  $U(\varepsilon) \circledcirc \rho$, as in:
\begin{equation}
\label{eq:lambda-rho-em}
	\vcenter{
	\xymatrix@R-0.5pc{
		ATU \ar[rr]^{\lambda} \ar[dr]_-{\rho \circledcirc U(\varepsilon)}
			& & TAU \ar[dl]^-{U(\varepsilon) \circledcirc \rho} \\
			& U\overline{B} & 
	}
	}
	\end{equation}	

\item There is a natural transformation $\mathfrak{e} \colon
  \widehat{\free} \overline{A} \Rightarrow \overline{B}
  \widehat{\free}$ satisfying $\mathfrak{e} J = \rho_2$ in:
\[ \xymatrix@R-0.5pc{
& & \Kl(T)\ar[dll]_{\widehat{\free} = K}\save !R(.5) \ar@(ur,dr)^{\overline{A}} \restore 
  & \quad\mbox{with}\quad
   \widehat{F}\overline{A} \ar@{=>}[r]^-{\mathfrak{e}} & \overline{B}\widehat{F}
\\
\EM(T)\save !L(.5) \ar@(dl,ul)^{\overline{B}} \restore 
\\
& & \cat{C}\ar[ull]^{\free}\ar[uu]_{J}\save !R(.5) \ar@(ur,dr)^{A} \restore 
  & \quad\mbox{with}\quad
  \free A  \ar@{=>}[r]^-{\rho_2} & \overline{B}\free
} \]

\noindent The functor $\widehat{\free} \colon \Kl(T) \rightarrow
\EM(T)$ is the extension corresponding to $U(\varepsilon)$, according
to Theorem~\ref{thm:mates}; it is often called the `comparison'
functor, and then written as $K$.

\item The following `extension requirement' from~\cite{JacobsSS15}
  commutes:
\begin{equation}
\label{eq:ext-req}
\vcenter{\xymatrix@R-0.5pc{
		TAT \ar[rr]^-{U(\rho_2)} \ar[d]_{T(\lambda)}
		& & BTT \ar[d]^{B(\mu)} \\
		TTA \ar[r]_{\mu}
			& TA \ar[r]_{U(\rho_2)}
			& BT
}}
\end{equation}
\end{enumerate}
\end{proposition}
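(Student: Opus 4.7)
The plan is to establish (1) $\Leftrightarrow$ (2) as a direct consequence of Lemma~\ref{lm:cond-kleisli-log}, and then (2) $\Leftrightarrow$ (3) by a short direct analysis of the naturality of $\mathfrak{e}$.

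For (1) $\Leftrightarrow$ (2), I would use the instantiation of Section~\ref{sec:comp-kleisli-logic} already announced in the paragraph above the proposition: set $\op{\cat{D}} = \EM(T)$, $L = \overline{B}$, $G = U$, $F = \free$, $\delta = \rho$ and $\tau = U\varepsilon$. The extension of $F$ associated to the monad action $U\varepsilon$ via Lemma~\ref{lm:eq-tau} is exactly the comparison functor $K = \widehat{\free} \colon \Kl(T) \to \EM(T)$. Condition (1) matches~\eqref{eq:lambda-rho} of Lemma~\ref{lm:cond-kleisli-log} verbatim, and the natural transformation $\widehat{\delta}$ it produces, when read in $\cat{D} = \op{\EM(T)}$ as stipulated by that lemma, becomes a natural transformation $\mathfrak{e} \colon K\overline{A} \Rightarrow \overline{B}K$ in $\EM(T)$. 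Moreover, the condition $\widehat{\delta} J = \delta$ translates under the mate correspondence (Theorem~\ref{thm:mates}) to $\mathfrak{e} J = \rho_2$, yielding (2).

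For (2) $\Leftrightarrow$ (3), note first that $\mathfrak{e} J = \rho_2$ fully pins down the components of $\mathfrak{e}$: since $J$ is the identity on objects, one has $\mathfrak{e}_X = (\rho_2)_X$ for every object $X$ of $\Kl(T)$, so only naturality remains to be characterised. The key observation is that any Kleisli morphism $f \colon X \to Y$ (underlying $\Cat{C}$-map $f_0 \colon X \to TY$) factors in $\Kl(T)$ as $\id_{TY}^{\Kl} \after J(f_0)$, where $\id_{TY}^{\Kl} \colon TY \to Y$ is the Kleisli morphism with underlying $\Cat{C}$-map $\id_{TY}$. Naturality of $\mathfrak{e}$ on the $J$-image of any $\Cat{C}$-morphism coincides with the naturality of $\rho_2$ in $\EM(T)$ and is automatic. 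Naturality at $\id_{TY}^{\Kl}$ is where the actual content sits: direct unfolding of the definitions of $K$, $\overline{A}$ and $\overline{B}$ gives $K\overline{A}(\id_{TY}^{\Kl}) = \mu_{AY} \after T\lambda_Y$ and $\overline{B}K(\id_{TY}^{\Kl}) = B(\mu_Y)$ at the carrier level, so the naturality square at $\id_{TY}^{\Kl}$ reads
\[
	B(\mu_Y) \after (U\rho_2)_{TY} \;=\; (U\rho_2)_Y \after \mu_{AY} \after T\lambda_Y,
\]
which is precisely~\eqref{eq:ext-req} evaluated at $Y$.

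The main obstacle is not conceptual but bookkeeping. In the reduction to Lemma~\ref{lm:cond-kleisli-log}, one must be careful with the direction reversal between $\op{\cat{D}}$ and $\cat{D} = \op{\EM(T)}$, so that $\widehat{\delta} J = \delta$ (a condition relating morphisms in $\Cat{C}$ via the mate correspondence) correctly transports to $\mathfrak{e} J = \rho_2$ (a condition in $\EM(T)$) rather than its reverse. The reduction of naturality to the two families of Kleisli morphisms in the second step is standard but should be spelled out so that the reader sees why the only nontrivial square collapses exactly to~\eqref{eq:ext-req}.
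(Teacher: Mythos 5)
Your proposal is correct. The first half — deducing (1)~$\Leftrightarrow$~(2) by instantiating Lemma~\ref{lm:cond-kleisli-log} with $\op{\cat{D}} = \EM(T)$, $L = \overline{B}$, $\delta = \rho$, $\tau = U\varepsilon$, and taking care of the direction reversal so that $\widehat{\delta}J = \delta$ becomes $\mathfrak{e}J = \rho_2$ — is exactly what the paper does. Where you diverge is in how condition (3) is brought in: the paper proves (1)~$\Leftrightarrow$~(3) directly, by transposing Diagram~\eqref{eq:lambda-rho-em} under the mate correspondence, using Lemma~\ref{lm:mates-composition} to compute $\big(\rho \circledcirc U\varepsilon\big)_2 = \overline{B}\varepsilon \after \rho_2$ and $\big(U\varepsilon \circledcirc \rho\big)_2 = \rho_2 \after \varepsilon \after \free(\lambda)$, and then observing that the transposed square is~\eqref{eq:ext-req}. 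You instead prove (2)~$\Leftrightarrow$~(3) by pinning down the components of $\mathfrak{e}$ via $\mathfrak{e}J = \rho_2$, factoring every Kleisli morphism as a free morphism followed by the ``identity-carried'' morphism $TY \to Y$, noting that naturality on free morphisms is just naturality of $\rho_2$, and checking that the one remaining naturality square is~\eqref{eq:ext-req}. Your computations of $K\overline{A}(\id^{\Kl}_{TY}) = \mu_{AY}\after T\lambda_Y$ and $\overline{B}K(\id^{\Kl}_{TY}) = B(\mu_Y)$ are right, and the factorisation argument closes the converse direction cleanly. The two routes cost about the same; yours is arguably more self-contained (it does not need Lemma~\ref{lm:mates-composition} a second time) and mirrors the technique the paper itself uses inside the proof of Lemma~\ref{lm:cond-kleisli-log}, while the paper's mate computation keeps everything at the level of natural transformations without descending to a case analysis of Kleisli morphisms.
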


\begin{proof}
The equivalence of points~(1) and~(2) is an instance of
Lemma~\ref{lm:cond-kleisli-log}, where it should be noted that we are
instantiating $\cat{D}$ with $\op{\EM(T)}$. 
which causes the two
natural transformations in the above diagram to be in opposite
direction.

We show the equivalence of~(1) and~(3). Using
Lemma~\ref{lm:mates-composition}, it is straightforward to check, via
Theorem~\ref{thm:mates}, that $\big(\rho \circledcirc
U(\varepsilon)\big)_{2} = \big(\rho_1 \after AU\varepsilon\big)_2 =
\overline{B} \varepsilon \after \rho_2$ and $U\varepsilon \after
T\rho_1 \after \lambda = \rho_2 \after \varepsilon \after
\free(\lambda)$.  As a consequence, commutativity of
Diagram~\eqref{eq:lambda-rho-em} is equivalent to commutativity of the
following diagram:
$$\xymatrix@R-0.5pc{
		\free AT \ar[rr]^-{\rho_2} \ar[d]_{\free(\lambda)}
		& & \overline{B} \free T \ar[d]^{\overline{B}(\varepsilon)} \\
		\free TA \ar[r]_{\varepsilon}
			& \free A \ar[r]_{\rho_2}
			& U \overline{B} 			
}$$

\noindent This amounts to the diagram in point~(3). \QED
\end{proof}

Under the above equivalent conditions, we obtain the desired algebra morphism.

\begin{theorem}
\label{thm:kl-em-comparison}
If the equivalent conditions in Proposition~\ref{prop:cond-kleisli-em}
hold, then the map $\overline{\klem}\colon T(\Psi) \rightarrow \Theta$
obtained from~\eqref{eq:theory-initB}, is a $TA$-algebra morphism
between corecursive algebras $\cakl$ and $\caem^A =
\caem \after U(\rho_2)$, as on the left below.
	$$
	\xymatrix@R-0.5pc@C=1.3cm{
		TAT(\Psi) \ar[d]_-{\cakl} \ar[r]^-{TA(\smash{\overline{\klem}})}
			& TA(\Theta) \ar[d]^-{\caem^A = \caem \after U(\rho_2)} \\
		T(\Psi) \ar[r]^-{\smash{\overline{\klem}}} 
			& \Theta  
	}
	\qquad\qquad	
	\xymatrix@R-0.5pc{
		& X \ar[dl]_-{\trkl_c} \ar[dr]^(0.49){\trem_c^A} & \\
		 \llap{$T($}\Psi) \ar[rr]^-{\smash{\overline{\klem}}} & &  \Theta
	}
	$$

\noindent In that case, for any coalgebra $c \colon X \rightarrow
TA(X)$ there is a commuting triangle as on the right above, where
$\trem_c^A$ is the unique map from $(X,c)$ to the corecursive algebra
$\caem \after U(\rho_2)$.
\end{theorem}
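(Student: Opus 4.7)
The plan is to derive this theorem as a direct specialization of Theorem~\ref{thm:kl-log-comparison}. One instantiates the Kleisli/logic setup by taking $\op{\cat{D}} := \EM(T)$, $F := \free$, $G := U$, $L := \overline{B}$, $\delta := \rho \colon AU \Rightarrow U\overline{B}$, and $\tau := U\varepsilon \colon TU \Rightarrow U$. Under the dualization $\cat{D} \leftrightarrow \op{\cat{D}}$, the initial $L$-algebra $\alpha \colon L(\Phi) \to \Phi$ in $\cat{D}$ translates into the final $\overline{B}$-coalgebra $\zeta \colon (\Theta,a) \congrightarrow \overline{B}(\Theta,a)$ in $\EM(T)$ furnished by Lemma~\ref{lm:final-coalg-em}. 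Since each component $U\varepsilon_{(X,b)} = b$ is an Eilenberg--Moore algebra, the monad-action hypothesis on $\tau$ is automatic.

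Under this instantiation, each piece of data in Theorem~\ref{thm:kl-log-comparison} becomes the corresponding piece in the present statement: the composed steps $\tau \circledcirc \delta$ and $\delta \circledcirc \tau$ become $U\varepsilon \circledcirc \rho$ and $\rho \circledcirc U\varepsilon$; the corecursive algebra $\calogtb$ of the logical approach becomes the corecursive algebra $\caem^A = \caem \after U(\rho_{2})$ identified in Lemma~\ref{lm:corec-alg-ta}; the defining diagram~\eqref{eq:theory-initB-second} for $\kllog$ becomes~\eqref{eq:theory-initB} for $\klem$; and the adjoint transpose $\tau_{\Phi} \after T(\kllog)$ becomes $U\varepsilon_{(\Theta,a)} \after T(\klem) = a \after T(\klem) = \overline{\klem}$. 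Crucially, condition~\eqref{eq:lambda-rho} of Lemma~\ref{lm:cond-kleisli-log} specializes verbatim to condition~\eqref{eq:lambda-rho-em} of Proposition~\ref{prop:cond-kleisli-em}. Theorem~\ref{thm:kl-log-comparison} then immediately yields the left-hand square: $\overline{\klem}$ is a $TA$-algebra morphism $\cakl \to \caem^A$.

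The triangle on the right follows by the usual uniqueness argument. For any coalgebra $c \colon X \to TA(X)$, the composite $\overline{\klem} \after \trkl_c$ is again a coalgebra-to-algebra morphism from $(X,c)$ to $\caem^A$, because post-composing a coalgebra-to-algebra morphism with an algebra morphism yields another one. Since $\caem^A$ is corecursive, such a morphism is unique, so $\overline{\klem} \after \trkl_c = \trem_c^A$.

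The substantive work is not a new diagram chase --- that has been carried out in Theorem~\ref{thm:kl-log-comparison} --- but rather the bookkeeping attached to the dualization $\op{\cat{D}} = \EM(T)$: one must verify that the initial/final inversion carries $\alpha^{-1}$ to $\zeta$, that a logic-style step $\delta \colon AG \Rightarrow GL$ is literally the same datum as the Eilenberg--Moore step $\rho \colon AU \Rightarrow U\overline{B}$, and that the two composed steps line up as $U\varepsilon \circledcirc \rho$ and $\rho \circledcirc U\varepsilon$, so that~\eqref{eq:lambda-rho} and~\eqref{eq:lambda-rho-em} are syntactically identical under the translation. Once this is in place, the result drops out as a corollary.
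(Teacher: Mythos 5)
Your proposal is correct, and every piece of the bookkeeping checks out: with $\op{\cat{D}} = \EM(T)$, $G=U$, $F=\free$, $L=\overline{B}$, $\delta=\rho$ and $\tau=U\varepsilon$, the initial $L$-algebra required by the Kleisli/logic comparison is exactly the final $\overline{B}$-coalgebra of Lemma~\ref{lm:final-coalg-em}, the monad-action hypothesis on $\tau$ holds because $U\varepsilon_{(X,b)}=b$, diagram~\eqref{eq:theory-initB-second} specialises to~\eqref{eq:theory-initB}, $\calogtb$ specialises to $\caem^A$ (equal to $\caem\after U(\rho_2)$ by Lemma~\ref{lm:corec-alg-ta}), and condition~\eqref{eq:lambda-rho} becomes~\eqref{eq:lambda-rho-em}. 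This is precisely the route the paper itself announces in the paragraph preceding the theorem ("we immediately obtain the comparison result from Theorem~\ref{thm:kl-log-comparison}"), and indeed the paper already uses the same instantiation, including the caveat about reversed directions in $\op{\EM(T)}$, in the proof of Proposition~\ref{prop:cond-kleisli-em}. However, the paper's displayed proof of the theorem does not take this shortcut: it is a self-contained equational chain that unfolds $\cakl$ via~\eqref{eqn:cakl} and $\klem$ via~\eqref{eq:theory-initB}, and pushes the composite through~\eqref{eq:final-coalg-em}, the extension diagram~\eqref{eqn:extensiondiagsingle} and the extension requirement~\eqref{eq:ext-req}. The trade-off is the expected one: your instantiation argument is more modular and makes the structural parallel between the two comparisons explicit, while the paper's direct computation exhibits concretely where each of the extension-law identities is consumed and does not depend on the reader having internalised the dualisation. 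Your treatment of the right-hand triangle (post-composition of a coalgebra-to-algebra morphism with an algebra morphism, then uniqueness into the corecursive algebra $\caem^A$) coincides with the paper's.
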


\begin{proof}
In order to prove commutation of the rectangle we need to combine
many earlier facts:
\[ \begin{array}[b]{rcll}
\overline{\klem} \after \cakl
& \smash{\stackrel{\eqref{eqn:cakl}}{=}} &
a \after T(\klem) \after T(\beta) \after \mu \after T(\lambda)
\\
& \smash{\stackrel{\eqref{eq:theory-initB}}{=}} &
a \after T(\zeta^{-1} \after B(a) \after U(\rho_{2}) \after \eta \after A(\klem))
   \after \mu \after T(\lambda)
\\
& \smash{\stackrel{\eqref{eq:final-coalg-em}}{=}} &
\zeta^{-1} \after B(a) \after \kappa \after TB(a) \after TU(\rho_{2}) 
   \after T(\eta) \after TA(\klem) \after \mu \after T(\lambda)
\\
& = &
\zeta^{-1} \after B(a) \after BT(a) \after \kappa \after TU(\rho_{2}) 
   \after T(\eta) \after TA(\klem) \after \mu \after T(\lambda)
\\
& = &
\zeta^{-1} \after B(a) \after B(\mu) \after \kappa \after TU(\rho_{2}) 
   \after T(\eta) \after TA(\klem) \after \mu \after T(\lambda)
\\
& \smash{\stackrel{\eqref{eqn:extensiondiagsingle}}{=}} &
\zeta^{-1} \after B(a) \after U(\rho_{2}) \after \mu
   \after T(\eta) \after TA(\klem) \after \mu \after T(\lambda)
\\
& = &
\zeta^{-1} \after B(a) \after U(\rho_{2})
     \after TA(\klem) \after \mu \after T(\lambda)
\\
& = &
\zeta^{-1} \after B(a) \after BT(\klem) \after U(\rho_{2}) \after 
   \mu \after T(\lambda)
\\
& \smash{\stackrel{\eqref{eq:ext-req}}{=}} &
\zeta^{-1} \after B(a) \after BT(\klem) \after B(\mu) \after U(\rho_{2})
\\
& = &
\zeta^{-1} \after B(a) \after B(\mu) \after BT^{2}(\klem) \after U(\rho_{2})
\\
& = &
\zeta^{-1} \after B(a) \after BT(a) \after BT^{2}(\klem) \after U(\rho_{2})
\\
& = &
\caem \after BT(\overline{k}) \after U(\rho_{2})
\\
& = &
\caem \after U(\rho_{2}) \after TA(\overline{k}).
\end{array} \]

Now let a coalgebra $c\colon X \rightarrow TA(X)$ be given.
We need to prove that $\overline{\klem} \after \trkl_{c}$ satisfies
the defining property of $\trem_c^A$.  But this easy using
the rectangle in the theorem:
\[ \begin{array}[b]{rcccl}
\caem^{A} \after TA(\overline{\klem} \after \trkl_{c}) \after c
& = &
\overline{\klem} \after \cakl \after TA(\trkl_{c}) \after c
& \smash{\stackrel{\eqref{diag:trkl}}{=}} &
\overline{\klem} \after \trkl_{c}
\end{array} \eqno{\QEDbox} \]
\end{proof}

Just like in the comparison between Kleisli and logic, the above
result gives a sufficient condition for the Eilenberg-Moore trace
semantics to factor through the Kleisli trace semantics. However,
again we do not know of a reasonable condition to ensure that the map
$\overline{\klem}$ is monic. Such a result is important for the
comparison: it would ensure that two states are equivalent
w.r.t.\ Kleisli traces iff they are equivalent w.r.t.\ Eilenberg-Moore
traces (right now, we only have the implication from left to right).
In~\cite{JacobsSS15}, such a condition is also missing; monicity of
$\overline{\klem}$ is only shown to hold in several concrete examples.

In~\cite[\S6]{JacobsSS15}, the Kleisli approach to coalgebraic trace
semantics is compared with the Eilenberg-Moore approach, making use of
an `extension' natural transformation $\mathfrak{e}$ satisfying two 
requirements, namely:
\[ \vcenter{\xymatrix@R-0.5pc{
TAT\ar[d]_{\mathfrak{e}}\ar[r]^-{T(\lambda)} & T^{2}A\ar[r]^-{\mu} & TA\ar[d]^{\mathfrak{e}}
& 
T^{2}A\ar[rr]^-{\mu}\ar[d]_{T(\mathfrak{e})} & & TA\ar[d]^{\mathfrak{e}}
\\
BT^{2}\ar[rr]^-{B(\mu)} & & BT
& 
TBT\ar[r]^-{\kappa} & BT^{2}\ar[r]^-{B(\mu)} & BT
}} \]

\noindent 
In the present step-based setting, 
the rectangle on the right
occurred as~\eqref{eqn:extensiondiagsingle} in Section~\ref{sec:gen-em},
which contains the generalisation of
Eilenberg-Moore trace semantics
that is used here. The first of the
above two rectangles captures compatibility in
Proposition~\ref{prop:cond-kleisli-em} and is used for a comparison of
Kleisli and Eilenberg-Moore semantics in
Theorem~\ref{thm:kl-em-comparison}. The conclusion is that the
approach of this paper not only covers the approach
of~\cite[\S6]{JacobsSS15} but also puts it in a wider step-based
perspective, using corecursive algebras.

\section{Completely Iterative Algebras}\label{sec:cia}


Milius~\cite{Milius:cia} introduced a notion of ``complete iterativity'' of algebras, that is stronger than corecursiveness and has the advantage of being preserved by various constructions.  So, whenever we encounter a corecursive algebra, it is natural to ask whether it is in fact completely iterative.  This section shows that all our corecursive algebras are completely iterative (Theorem~\ref{thm:finalcia}), and that this yields trace maps in more general settings.
\begin{definition}
Let $\catc$ have binary coproducts.  For an endofunctor $H$ on $\catc$, an $H$-algebra $a \colon HA \to A$
is \emph{completely iterative} when $[\idmap,a]$ is a corecursive
$A + H$-coalgebra.  Explicitly: when for every $c \colon X \to A+ HX$
there is a unique $f \colon X \to A$ such that the following diagram
commutes.
\[     \xymatrix@R-0.5pc{
 X \ar[rr]^{f} \ar[d]_{c} & & A  \\
A+HX \ar[rr]_{A+Hf} & & A+HA \ar[u]_{[\idmap,a]}
} \]
\end{definition}

The following gives two useful ways of constructing such algebras.
\begin{proposition}\hfill
  \begin{enumerate}
  \item \label{item:fincia} If $\zeta \colon A \to HA$ is a final $H$-coalgebra, then $(A,\zeta^{-1})$ is completely iterative.
 \item \label{item:grhocia} Given a step as in Section~\ref{sect:stepsem}, the functor $G_{\rho}$ preserves complete iterativity. 
  \end{enumerate}
\end{proposition}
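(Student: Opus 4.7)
My plan for~(\ref{item:fincia}) is the familiar ``extension'' construction. Given $c \colon X \to A + HX$, I would build an auxiliary $H$-coalgebra $\hat c \colon X + A \to H(X + A)$, with the $A$-summand governed by $H(\mathrm{inr}) \after \zeta$ and the $X$-summand by $[H(\mathrm{inr}) \after \zeta,\; H(\mathrm{inl})] \after c$. Since $\hat c$ restricted to the $A$-summand is (an injected copy of) $\zeta$ itself, the identity $\mathrm{id}_A$ is a coalgebra homomorphism $(A, \zeta) \to (X + A, \hat c) \to (A, \zeta)$, so finality forces the unique homomorphism $g \colon (X+A, \hat c) \to (A, \zeta)$ to satisfy $g \after \mathrm{inr} = \mathrm{id}_A$. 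The desired map is then $f := g \after \mathrm{inl}$; uniqueness follows by lifting any competitor $f'$ to the copair $[f', \mathrm{id}_A] \colon X + A \to A$ and reapplying finality.

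For~(\ref{item:grhocia}) I would not argue from scratch but reduce to Proposition~\ref{prop:pres-corec}. The key observation is that complete iterativity of an $L$-algebra $a \colon LA \to A$ is, by definition, corecursiveness of the algebra $[\mathrm{id}_A, a] \colon A + LA \to A$ for the endofunctor $L' := A + L(-)$ on $\cat{D}$; and similarly, complete iterativity of $G_\rho(A, a)$ amounts to corecursiveness of $[\mathrm{id}_{GA}, G_\rho(A, a)]$ for the endofunctor $H' := GA + H(-)$ on $\cat{C}$. The plan is thus to promote the given step $\rho \colon HG \Rightarrow GL$ to a step $\rho' \colon H'G \Rightarrow GL'$ between these ``pointed'' endofunctors and then invoke Proposition~\ref{prop:pres-corec}.

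Define $\rho'_Y \colon GA + HGY \to G(A + LY)$ componentwise as
\begin{displaymath}
\rho'_Y \;:=\; [\, G(\mathrm{inl}),\; G(\mathrm{inr}) \after \rho_Y\,],
\end{displaymath}
which is natural in $Y$ by naturality of $\rho$ and the universal property of the coproduct. A short calculation then gives
\begin{displaymath}
G_{\rho'}\bigl(A,\, [\mathrm{id}_A, a]\bigr) \;=\; \bigl(GA,\; [\mathrm{id}_{GA},\, G_\rho(A, a)]\bigr),
\end{displaymath}
because post-composing $G([\mathrm{id}_A, a])$ with the two copair components of $\rho'$ produces $\mathrm{id}_{GA}$ on the left and $G(a) \after \rho_A = G_\rho(A, a)$ on the right. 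Applying Proposition~\ref{prop:pres-corec} to the corecursive $L'$-algebra $(A, [\mathrm{id}_A, a])$ and the step $\rho'$ yields corecursiveness of the right-hand side, which is precisely the completely iterative property sought for $G_\rho(A, a)$. The main obstacle---managing the interaction between $\rho$ and the coproduct structure---is confined to verifying naturality of $\rho'$ and the elementary identity above, both of which reduce to routine diagram chasing.
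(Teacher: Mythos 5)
Your proofs are correct. Note, though, that the paper does not prove this proposition at all: part~(1) is dispatched by citing Milius's Theorem~2.8 and part~(2) by citing the dual of Theorem~5.6 of Hinze--Wu--Gibbons, so you have supplied the arguments the paper delegates to the literature. For part~(1) your construction is the classical one (and essentially the one underlying the cited result): extend $c \colon X \to A + HX$ to a coalgebra $\hat c$ on $X+A$ whose restriction along $\mathrm{inr}$ is $\zeta$, use finality to get $g$ with $g \after \mathrm{inr} = \idmap[A]$, set $f = g \after \mathrm{inl}$, and recover uniqueness by copairing a competitor with $\idmap[A]$; all the required equations check out. For part~(2) your route is genuinely different from invoking an external theorem on conjugate hylomorphisms: you observe that complete iterativity is by definition corecursiveness for the ``pointed'' functors $L' = A + L(-)$ and $H' = GA + H(-)$, manufacture the step $\rho' = [\,G(\mathrm{inl}),\, G(\mathrm{inr}) \after \rho\,] \colon H'G \Rightarrow GL'$ over the same adjunction, verify $G_{\rho'}(A,[\idmap,a]) = (GA,[\idmap,G_\rho(A,a)])$, and then simply apply Proposition~\ref{prop:pres-corec}. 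This keeps the whole argument inside the paper's own step framework, needs no coproduct preservation by $F$ or $G$, and makes the slogan ``step-induced algebra liftings of right adjoints preserve complete iterativity'' a literal corollary of the corecursiveness statement; the cost is only the routine naturality check for $\rho'$, which you correctly identify and which does go through.
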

\begin{proof}
  Part~(\ref{item:fincia}) is included in~\cite[Theorem 2.8]{Milius:cia}, and part~(\ref{item:grhocia}) is the dual of~\cite[Theorem 5.6]{HinzeWuGibbons:conjhylo}. \QED
\end{proof}
We may thus say: \emph{``step-induced algebra liftings of right adjoints preserve complete iterativity''}.  We deduce the following strengthening of Theorem~\ref{thm:finalcor}.
\begin{theorem} \label{thm:finalcia}
  Given a step as in Section~\ref{sect:stepsem} and a final coalgebra  $\zeta \colon A \to HA$, the algebra $G_{\rho}(A,\zeta^{-1})$ is completely iterative.
\end{theorem}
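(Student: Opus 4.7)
The statement is essentially the concatenation of the two parts of the Proposition that immediately precedes it, so my plan is just to chain them. The setting gives adjunction $F \dashv G$ between $\catc$ and $\catd$, endofunctors $H$ on $\catc$ and $L$ on $\catd$, and a step $\rho \colon HG \Rightarrow GL$; the lifting $G_{\rho}$ sends $L$-algebras to $H$-algebras. I read $\zeta \colon A \to HA$ in the statement as a typographic slip for $\zeta \colon A \to LA$, since $G_{\rho}$ is only defined on $L$-algebras; this is the only reading that makes $G_{\rho}(A, \zeta^{-1})$ well-typed.

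The plan is then: first, invoke part~(\ref{item:fincia}) of the Proposition with the endofunctor there instantiated to $L$. Since $\zeta \colon A \to LA$ is by assumption a final $L$-coalgebra, this gives that the $L$-algebra $(A, \zeta^{-1})$ is completely iterative.

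Second, invoke part~(\ref{item:grhocia}) of the Proposition: the step-induced algebra lifting $G_{\rho} \colon \Alg(L) \to \Alg(H)$ preserves completely iterative algebras. Applying $G_{\rho}$ to the completely iterative $L$-algebra $(A, \zeta^{-1})$ therefore yields a completely iterative $H$-algebra $G_{\rho}(A, \zeta^{-1})$, which is the desired conclusion. \QED

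There is no real obstacle here: the genuine content sits in the two clauses of the preceding Proposition, whose proofs are cited to~\cite{Milius:cia} and~\cite{HinzeWuGibbons:conjhylo} respectively. The only thing to double-check (and perhaps flag in the text) is that part~(\ref{item:grhocia}) is indeed the complete-iterativity analogue of Proposition~\ref{prop:pres-corec}, i.e., that the same $G_{\rho}$ construction transfers the stronger property; once this is granted, the theorem is immediate and the analogy with Theorem~\ref{thm:finalcor} (now just invoking the stronger preservation fact in place of Proposition~\ref{prop:pres-corec}) is exact.
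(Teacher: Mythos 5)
Your proof is correct and matches the paper's intended argument exactly: the theorem is stated as an immediate consequence of the two parts of the preceding proposition (final coalgebras give completely iterative algebras; $G_{\rho}$ preserves complete iterativity), and the paper offers no further proof beyond that deduction. Your reading of $\zeta \colon A \to HA$ as a slip for $\zeta \colon A \to LA$ is also the right one, since $G_{\rho}$ is only defined on $L$-algebras.
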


if $L$ has a final coalgebra $(\Psi,\zeta)$ then $G_{\rho}(A,\zeta^{-1})$ is completely iterative.

For example, in the setting of Section~\ref{sec:em}, we obtain the following variation of~(\ref{eq:traces-em}).  Given a coalgebra $c \colon X \to \Theta + BT(X)$, there is a unique map, writting as $\trem_c$, making the following square commute:
\begin{equation}
\label{eq:gen-traces-em}
\vcenter{\xymatrix@R-0.5pc{
	X \ar@{-->}[rr]^{\trem_c} \ar[d]_c
		& & \Theta \\
	\Theta + BT(X) \ar@{-->}[rr]_{\Theta + BT(\trem_c)}
        & & \Theta + BT(\Theta) \ar[u]_{[\idmap,\caem]}
}}
\end{equation}
In what sense is $\trem_c$ a ``trace map''?  Let us look at the special case of Example~\ref{ex:first-em}, where $B(X) = 2 \times X^A$ so that $\Theta$ is the set $2^{A^*}$ of languages, and $T$ is the
finite powerset monad.  Think of $c$ as a \emph{generalised} nondeterministic automaton: as well as the usual accepting and rejecting states, there can also be ``semantic states'' that are labelled with a language and from which there are no transitions.  For a state $x \in X$, a \emph{trace} of $x$ is either a word appearing along a path from $x$ to an accepting state (as usual), or a concatenation of words $s$ and $t$, where $s$ appears along a path from $x$ to a semantic state labelled by $L$, and $t \in L$.  With this definition, we see that $\trem_c$ sends $x \in X$ to its set of traces.

Each of our examples is similar to this one: the completely iterative algebra yields trace semantics for a generalised transition system in which the semantics may sometimes be given directly.

\section{Future work}
\label{sec:fw}

The main contribution of this paper is a general treatment of trace
semantics via corecursive algebras, constructed through an adjunction
and a step, covering the `Eilenberg-Moore', `Kleisli' and `logic'
approaches to trace semantics. It is expected that our framework also
works for other examples, such as the `quasi-liftings'
in~\cite{Bonchi0S17}, but this is left for future work.
In~\cite{KerstanKoenigWesterbaan:liftadj}, several examples of
adjunctions are discussed in the context of automata theory, some of
them the same as the adjunctions here, but with the aim of lifting
them to categories of coalgebras, under the condition that what 
we call the step is an iso. In our case, it usually is not an iso, since the behaviour
functor is a composite $TB$ or $BT$; however, it remains interesting
to study cases in which such adjunction liftings appear, as used for
instance in the aforementioned paper and~\cite{Rot16,KlinR16}.  Further,
our treatment in Section~\ref{sec:em} (Eilenberg-Moore) assumes a
monad to construct the corecursive algebra, but it was shown by
Bartels~\cite{Bartels03} that this algebra is also corecursive when
the underlying category has countable coproducts (and dropping the
monad assumption). We currently do not know whether this fits our
abstract approach.  
Finally, the Kleisli/logic and Kleisli/Eilenberg-Moore comparisons (Section~\ref{sec:comp}) are 
similar, but the Eilenberg-Moore/logic
comparison seems different. So far we have 
been unable to derive a general perspective on such comparisons that
covers all three.

A further direction of research is provided by the recent~\cite{DorschMS19}, where graded monads are 
used to define trace semantics in a general way, together with associated expressive logics. On the one 
hand it would be interesting to try and capture this within our steps-and-adjunctions framework; but this 
would require to capture graded semantics via some notion of finality or corecursiveness, which we are 
not currently aware of. On the other hand, as pointed out by one of the reviewers, the comparison results 
of Section~\ref{sec:comp} can be viewed as expressivity results of one semantics w.r.t. the other---this 
insight is interesting on its own, and might help in devising a more general method for making comparisons 
as in Section~\ref{sec:comp}, also discussed above. Further, the expressiveness criteria in~\cite{DorschMS19} 
may be useful to address the issues in the comparison of Kleisli semantics to Eilenberg-Moore and logical 
trace semantics. We leave these considerations for future work. 

\paragraph{Acknowledgement.} We are grateful to the anonymous referees of both the conference version and
this extended paper for various comments and suggestions.

\bibliography{logic-computation.bib}

\newpage
\begin{appendix}

\section{Details for Section~\ref{sect:stepsem}}\label{app:details-steps}

We recall a (standard) lemma that relates a step
$\rho_1$ to its mate $\rho_2$.

\begin{lemma}
\label{lm:mates-useful}
For any step $\rho \colon HG \Rightarrow GL$, the following diagrams
commute.
	$$
	\xymatrix{
		FHG \ar[r]^{F\rho_1} \ar[d]_{\rho_2 G}
			& FGL \ar[d]^{\varepsilon L}\\
		LFG \ar[r]_{L \varepsilon} 
			& L
	}
	\qquad\qquad
	\xymatrix{
		H \ar[r]^{H\eta} \ar[d]_{\eta H}
			& HGF \ar[d]^{\rho_1 F}\\
		GFH \ar[r]_{G\rho_2} 
			& GLF
	}
	$$
\end{lemma}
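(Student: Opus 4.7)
The plan is to invoke the explicit formula for $\rho_2$ in terms of $\rho_1$ recorded in Theorem~\ref{thm:matesgeneral}, namely $\rho_2 = \varepsilon LF \circ F\rho_1 F \circ FH\eta$, and to verify each of the two diagrams by a purely equational manipulation using naturality together with the triangle identity $G\varepsilon \circ \eta G = \mathrm{id}_G$ of the adjunction $F \dashv G$.

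For the right-hand square I would start from $G\rho_2 \circ \eta H$, substitute the formula for $\rho_2$, apply $G$, and precompose with $\eta H$, obtaining the composite $G\varepsilon LF \circ GF\rho_1 F \circ GFH\eta \circ \eta H$. Two applications of the naturality of $\eta$, first at the components of $H\eta$ and then at those of $\rho_1 F$, let me slide the innermost $\eta H$ leftwards, producing $G\varepsilon LF \circ \eta GLF \circ \rho_1 F \circ H\eta$. The triangle identity, instantiated at $LF$, then collapses the leading composite $G\varepsilon LF \circ \eta GLF$ to $\mathrm{id}_{GLF}$, leaving $\rho_1 F \circ H\eta$, which is exactly the other side of the square.

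The left-hand square is handled symmetrically. Starting from $L\varepsilon \circ \rho_2 G$, substituting the formula for $\rho_2$ and whiskering with $G$ on the right yields $L\varepsilon \circ \varepsilon LFG \circ F\rho_1 FG \circ FH\eta G$. Naturality of $\varepsilon$ at the components of $L\varepsilon$ rewrites $L\varepsilon \circ \varepsilon LFG$ as $\varepsilon L \circ FGL\varepsilon$; then naturality of $\rho_1$ at the components of $\varepsilon$, together with functoriality of $F$, rewrites $FGL\varepsilon \circ F\rho_1 FG$ as $F\rho_1 \circ FHG\varepsilon$; and finally the triangle identity, whiskered on the left by $FH$, makes $FHG\varepsilon \circ FH\eta G$ equal to $\mathrm{id}_{FHG}$. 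What remains is exactly $\varepsilon L \circ F\rho_1$.

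I do not anticipate any real obstacle: both verifications are bookkeeping exercises once the mate formula is unwound. The only care needed is to keep the whiskering notation straight, so in the write-up it may be helpful to work at a generic component ($Y \in \cat{D}$ for the first diagram, $X \in \cat{C}$ for the second) rather than manipulating globular natural transformations throughout. Alternatively, one could observe that both diagrams are standard consequences of the mate correspondence and cite, e.g., \cite{KellyStreet:twocat}, but a direct verification along the lines above is short enough to include.
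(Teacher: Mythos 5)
Your proposal is correct and follows essentially the same route as the paper's proof: unpack the mate formula $\rho_2 = \varepsilon LF \circ F\rho_1 F \circ FH\eta$, then apply naturality of $\varepsilon$ (resp.\ $\eta$) and of $\rho_1$, and finish with the triangle identity $G\varepsilon \circ \eta G = \id$. The paper only writes out the left-hand square explicitly and leaves the other as analogous, whereas you carry out both symmetric computations, but the argument is the same.
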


\begin{proof}
By unpacking the definitions and using naturality, e.g.,~in the first diagram:
\[ \begin{array}[b]{rcl}
L\varepsilon \after \rho_{2}G
& = &
L\varepsilon \after \varepsilon LFG \after F\rho_{1} FG \after FH\eta G
\\
& = &
\varepsilon L \after FGL\varepsilon \after F\rho_{1} FG \after FH\eta G
\\
& = &
\varepsilon L \after F\rho_{1} \after FHG\varepsilon \after FH\eta G
\\
& = &
\varepsilon L \after F\rho_{1}.
\end{array} \eqno{\QEDbox} \]
\end{proof}

The above lemma is useful in the proofs of Theorem~\ref{thm:mates} 
and Lemma~\ref{lm:mates-composition}, presented next.

\begin{proof}[Proof of Theorem~\ref{thm:mates}]
	The correspondence between $\rho_1, \rho_2, \rho_3$ and $\rho_4$ follows from
	Theorem~\ref{thm:matesgeneral}. 
	For the second part of the statement, 
	suppose $H$ and $L$ have monad structures $(H,\eta^H,\mu^H)$ and $(L,\eta^L,\mu^L)$ respectively. 
	The fact that $\rho_1$ is an $\EM$-law iff 
	$\rho_4$ is a monad map can be reconstructed from~\cite{street}.
	
	We show that if $\rho_1$ is an $\EM$-law then $\rho_2$ is a Kleisli law---the
	converse follows analogously. To this end, for the compatibility of 
	$\rho_2$ with $\mu$, consider the following diagram.
	$$
	\xymatrix@C=1.5cm{
		FHH \ar[rr]^{F\mu^H} \ar[d]^{\rho_2 H} \ar[dr]^{FHH\eta}
			& 
			& FH \ar[d]_{FH\eta} \ar`r[d]`[ddddl]^{\rho_2}[dddd] \\ 
		LFH \ar[d]^{LFH\eta} \ar`l[d]`[dddr]_{L\rho_2}[ddd]
			& FHHGF \ar[dl]^{\rho_2 HGF} \ar[d]^{FH \rho_1 F} \ar[r]^{F \mu^H LF}
			& FHGF \ar[dd]_{F\rho_1 F} \\
		LFHGF \ar[d]^{LF\rho_1 F} 
			& FHGLF \ar[d]^{F\rho_1 LF} \ar[dl]^{\rho_2 GLF}
			& \\
		LFGLF \ar[d]^{L \varepsilon LF} 
			& FGLLF \ar[dl]^{\varepsilon LLF} \ar[r]^{FG \mu^L F}
			& FGLF \ar[d]_{\varepsilon LF} \\
		LLF \ar[rr]^{\mu^L F} 
			& 
			& LF 
	}
	$$
	The outside shapes commute by definition of $\rho_2$. For the (inner) rectangle, 
	everything commutes, clockwise starting at the north by naturality, 
	the fact that $\rho_1$ is an $\EM$-law, naturality, Lemma~\ref{lm:mates-useful},
	and twice naturality. 
	For the unit axiom, we have the following diagram:
	$$
	\xymatrix@C=1.5cm{
		FH \ar[r]_{FH \eta} \ar`u`[rrr]^{\rho_2}[rrr]
			& FHGF \ar[r]_{F\rho_1 F} 
			& FGLF \ar[r]_{\varepsilon LF}
			& LF \\
		F \ar[u]^{F \eta^H} \ar[r]^{F \eta} \ar`d[r]`[rrru]^{\id}[rrr]
			& FGF \ar@{=}[r] \ar[u]_{F\eta^H GF}
			& FGF \ar[r]^{\varepsilon F} \ar[u]_{FG \eta^L F}
			& F \ar[u]_{\eta^L F}
	}
	$$
	which commutes by naturality, a triangle identity of the adjunction, definition of $\rho_2$ 
	and the fact that $\rho_1$ is an $\EM$-law (middle shape). 
	
	Finally, the correspondence between $\EM$-laws and liftings was shown in~\cite{Johnstone:Adj-lif}, 
	and the variant for Kleisli laws in~\cite{Mulry93}. 
	\qed
\end{proof}

\begin{proof}[Proof of Lemma~\ref{lm:mates-composition}]
First, note that $\big(\theta \circledcirc \rho\big)_2 = \varepsilon
MLF \after F\theta_1 LF \after FK\rho_1 F \after FKH\eta$.  It thus
suffices to prove that the following diagram commutes.
$$\xymatrix@C=1.3cm{
FKH \ar[rr]^-{FKH \eta} \ar[d]_-{\theta_2 H}		
	& & FKHGF \ar[d]^-{FK\rho_1 F} \ar[dl]_-{\theta_2 HGF} 
\\
MFH \ar[dd]_-{M\rho_2} \ar[r]^-{MFH\eta}
	& MFHGF \ar[d]_-{MF\rho_1 F}
	& FKGLF \ar[dl]_-{\theta_2 GLF} \ar[dd]^-{F\theta_1 LF} 
\\
& MFGLF \ar[dl]_-{M\varepsilon LF} & 
\\
MLF & & FGMLF \ar[ll]^-{\varepsilon MLF}
}$$

\noindent All the inner parts commute, clockwise starting from the top
by naturality of $\theta_2$ (twice), Lemma~\ref{lm:mates-useful}, and
definition of $\rho_2$ from $\rho_1$.  \QED
\end{proof}

\section{Steps and bimodules} \label{sec:bimod}


For the example of partial traces for I/O given in Section~\ref{sec:partial}, it is convenient to take a different view of our step-and-adjunction setting, using the following notion.
\begin{definition}
  For categories $\cat{C}$ and $\cat{D}$, a \emph{bimodule}
    $\cat{O} \colon \cat{C} \relto \cat{D}$ consists of the following data.
 \begin{itemize}
 \item A family of sets $(\cato(X,\uly))_{X \in \cat{C}, \uly \in \cat{D}}$, where $g \in \cato(X,\uly)$ is called an \emph{$\cat{O}$-morphism} $g \colon X \to \uly$.
\item Each $g \colon X \to \uly$ can be composed with a $\cat{C}$-map $f \colon X' \to X$ or $\cat{D}$-map $h \colon \uly \to \uly'$.
\end{itemize}
For  $g \colon X \to \uly$ we must have the following.  (We use semicolon for diagrammatic-order composition.)
\begin{eqnarray*}
  \id_{X};g & = & g \\
(f';f);g & = & f';(f;g)\\
g;\id_{Y} & = & g \\
g;(h;h') & = & (g;h);h' \\
(f;g);h & = & f; (g;h)
\end{eqnarray*}
\end{definition}
For example, for an endofunctor  $H$ on $\catc$, the coalgebra-to-algebra morphisms constitute a bimodule $\CoAlg{H} \relto \Alg(H)$.  Bimodules $\catc \relto \catd$ corresponds to functors $\op{\catc} \times \cat{D} \to \Sets$ and are also called \emph{distributors} or \emph{profunctors} (but some authors reverse the direction).  
\begin{definition} \label{def:mapbim} \hfill
  \begin{enumerate}
  \item A \emph{map} of bimodules
    $$
      \xymatrix{
 \cat{C} \ar@/^10pt/[r]|-@{|}^{\cato} \ar@{}[r]|{R \Downarrow} \ar@/_10pt/[r]|-@{|}_{\cato'} & \cat{D}
} 
    $$
sends each $\cato$-morphism $g \colon X \to \uly$ to an $\cato'$-morphism $R g \colon X \to \uly$ with the following commuting:
\begin{displaymath}
  \xymatrix{
  X'\ar[d]_{f} \ar[dr]^{R(f;g)} & & X\ar[r]^{Rg} \ar[dr]_{R(g;h)} & \uly \ar[d]^{h} \\
 X \ar[r]_{Rg} & \uly & & \uly'
}
\end{displaymath}
\item \label{item:tcell} More generally, a \emph{2-cell} 
  \begin{displaymath} 
    \xymatrix{
    \cat{C} \ar[d]_{H} \ar@{}[dr]|{R\,\Downarrow} \ar[r]|-@{|}^{\cato} & \cat{D} \ar[d]^{L} \\
    \cat{C'} \ar[r]|-@{|}_{\cato'} & \cat{D'}
    }
  \end{displaymath}
sends each $\cato$-morphism $g \colon X \to \uly$ to an $\cato'$-morphism $R g \colon HX \to L\uly$ with the following commuting:
\begin{displaymath}
  \xymatrix{
  HX'\ar[d]_{Hf} \ar[dr]^{R(f;g)} &  & HX\ar[r]^{Rg} \ar[dr]_{R(g;h)} & L\uly \ar[d]^{Lh} \\
 HX \ar[r]_{Rg} & L\uly & &L\uly'
}
\end{displaymath}
  \end{enumerate}
\end{definition}
The product construction on categories extends to bimodules:
\begin{definition} \label{def:prodbim}
  Let $(\cat{O}_j \colon \cat{C}_j \relto \cat{D}_j)_{j \in J}$ be a family of bimodules.  Then the bimodule $\prod_{j \in J} \cat{O}_j \colon \prod_{j \in J} \cat{C}_j \relto \prod_{j \in J} \cat{D}_j$ is defined by saying that a morphism $(X_j)_{j \in J} \to (\uly_{j})_{j \in J}$ is a family $(g_j \colon X_j \to \uly_j)_{j \in J}$, where $g_j$ is an $\cat{O}_j$-morphism for all $j \in J$
  .  Composition is defined componentwise.
\end{definition}
Of course this construction extends also to maps and 2-cells between bimodules.

Here are two ways of constructing a bimodule $\cat{C} \relto \cat{D}$.
\begin{definition} \label{def:lbimrbim} \hfill
\begin{enumerate}
\item \label{item:lbim} A functor $F \colon \cat{C} \to \cat{D}$ gives $\lbim{F} \colon \cat{C} \relto \cat{D}$, where $\lbim{F}(X,\uly) \eqdef \catd(FX,\uly)$.
\item \label{item:rbim} A functor $G \colon \cat{D} \to \cat{C}$ gives $\rbim{G} \colon \cat{C} \relto \cat{D}$, where $\rbim{G}(X,\uly) \eqdef \catc(X,G\uly)$.
\end{enumerate}
\end{definition}

\begin{definition} \label{def:bimrep}
For a bimodule $\cato \colon \catc \relto \catd$,
\begin{itemize}
\item a \emph{left representation} consists of a functor $F \colon \catc \to \catd$ and an isomorphism $m: \cato \cong \lbim{F}$
\item a \emph{right representation} consists of a functor $G \colon \catc \to \catd$ and an isomorphism $n: \cato \cong \rbim{G}$.
\end{itemize}
\end{definition}
Note that an adjunction
\begin{displaymath}
   \xymatrix{
\cat{C}\ar@/^2ex/[rr]^-{F} \save !L(.5)  \restore & \bot & \cat{D} \ar@/^2ex/[ll]^-G \save !R(.5)  \restore 
}
\end{displaymath}
may be viewed as a bimodule isomorphism $\lbim{F} \cong \rbim{G}$.  Conversely a bimodule $\catc \relto \catd$ equipped with both a left and a right representation constitutes an adjunction.

The natural transformations in Theorem~\ref{thm:matesgeneral} correspond to 2-cells of bimodules, as follows.
\begin{theorem} \label{thm:matesbim}
Suppose we have left representations $m\colon\cato \cong \lbim{F}$ and
$m'\colon\cato' \cong \lbim{F'}$.  Then a 2-cell
\begin{displaymath} 
    \xymatrix{
    \cat{C} \ar[d]_{H} \ar@{}[dr]|{R\,\Downarrow} \ar[r]|-@{|}^{\cato} & \cat{D} \ar[d]^{L} \\
    \cat{C'} \ar[r]|-@{|}_{\cato'} & \cat{D'}
    }
  \end{displaymath}
corresponds to a natural transformation
\begin{displaymath}
  \xymatrix{
 \cat{C}\ar[r]^{F} \ar[d]_{H} \ar@{}[dr]|{\stackrel{\rho_2}{\Rightarrow}} &  \cat{D} \ar[d]^{L} \\
 \cat{C'} \ar[r]_{F'} & \cat{D'}
}
\end{displaymath}
where $R$ sends an $\cato$-morphism $g \colon X \to \uly$
to
  \begin{math}
  \left(  \xymatrix{
  F'HX \ar[r]^-{\rho_2} & LFX \ar[r]^-{Lm(g)}  & L\uly
      }  \right)
  \end{math}.
The analogous statements hold for  $\rho_1$, $\rho_3$ and $\rho_4$.
\end{theorem}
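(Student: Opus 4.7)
\section*{Proof sketch for Theorem~\ref{thm:matesbim}}

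The plan is to give explicit formulas for both directions of the correspondence and then check that they are mutually inverse, using the fact that a left representation $m \colon \cato \cong \lbim{F}$ makes every $\cato$-morphism $g \colon X \to \uly$ the same datum as a $\catd$-map $m(g) \colon FX \to \uly$, with composition on the left by $f \colon X' \to X$ corresponding to precomposition with $F f$.

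First I will describe the forward direction: given a natural transformation $\rho_2 \colon F'H \Rightarrow LF$, define $R$ by the formula in the statement, i.e.\ $R(g) \eqdef L m(g) \after \rho_2 \after m'^{-1}(\cdot)$ transported back through $m'$. Concretely, we set $m'(R(g)) \eqdef Lm(g) \after (\rho_2)_X \colon F'HX \to L\uly$. The two equations required of a 2-cell, namely $R(f;g) = Hf;R(g)$ and $R(g;h) = R(g);Lh$, then unfold to $Lm(g) \after (\rho_2)_X \after F'Hf = Lm(g) \after LFf \after (\rho_2)_{X'}$ and $L(h) \after Lm(g) \after (\rho_2)_X = Lm(g;h) \after (\rho_2)_X$; the first is naturality of $\rho_2$ at $f$ (after applying functoriality $Lm(f;g) = Lm(g) \after LFf$ coming from $m$ being a representation), and the second is immediate from $m(g;h) = h \after m(g)$ and functoriality of $L$.

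For the backward direction, given a 2-cell $R$, define $(\rho_2)_X \colon F'HX \to LFX$ by applying $m'$ to the $\cato'$-morphism $R(m^{-1}(\id_{FX})) \colon HX \to LFX$. Naturality in $X$ follows from the compatibility axiom of the 2-cell: for any $\catc$-morphism $f \colon X' \to X$, using that $m^{-1}(\id_{FX});Ff$ corresponds under $m$ to $Ff$ and thus equals $f;m^{-1}(\id_{FX'})$ as $\cato$-morphisms (by naturality of $m$), one obtains $(\rho_2)_X \after F'Hf = LFf \after (\rho_2)_{X'}$ after applying $m'$ throughout.

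It remains to check these constructions are mutually inverse. In one direction, starting from $\rho_2$, forming $R$ and then evaluating at the identity $m^{-1}(\id_{FX})$ gives $m'(R(m^{-1}(\id_{FX}))) = L m(m^{-1}(\id_{FX})) \after (\rho_2)_X = L\id_{FX} \after (\rho_2)_X = (\rho_2)_X$, so we recover $\rho_2$. In the other direction, starting from $R$, the constructed $\rho_2$ yields a 2-cell $R'$ with $m'(R'(g)) = Lm(g) \after (\rho_2)_X = Lm(g) \after m'(R(m^{-1}(\id_{FX})))$; using the right-action axiom of $R$ along $m(g) \colon FX \to \uly$, this equals $m'(R(m^{-1}(\id_{FX}));m(g)) = m'(R(m^{-1}(\id_{FX});g)) = m'(R(g))$, so $R' = R$.

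The main obstacle is bookkeeping: keeping straight the difference between an $\cato$-morphism $g \colon X \to \uly$ and its representing map $m(g) \colon FX \to \uly$, and using at the right moments that composition of an $\cato$-morphism with $\catd$-maps on the right corresponds under $m$ to ordinary composition in $\catd$. Once this dictionary is in place, both the 2-cell axioms and the mutual-inversion identities reduce to functoriality of $L$ and naturality of the representations $m, m'$. The analogous statements for $\rho_1$, $\rho_3$, $\rho_4$ follow by the same argument, replacing left representations by right representations where appropriate, and the adjoint-transposition equivalences among the $\rho_i$ then follow from Theorem~\ref{thm:matesgeneral}.
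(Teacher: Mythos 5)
Your proof is correct: the forward and backward constructions, the verification of the two 2-cell axioms via naturality of $\rho_2$ and functoriality of $L$, and the mutual-inversion check at $m^{-1}(\id_{FX})$ are all sound. The paper states Theorem~\ref{thm:matesbim} without proof, treating it as a routine Yoneda-style verification, and your argument is exactly the standard one that fills that gap.
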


Now we give a more refined account of steps.  Suppose we have a bimodule, two endofunctors and a 2-cell:
\begin{displaymath}
    \xymatrix{
    \cat{C} \ar[d]_{H} \ar@{}[dr]|{R\,\Downarrow} \ar[r]|-@{|}^{\cato} & \cat{D} \ar[d]^{L} \\
    \cat{C} \ar[r]|-@{|}_{\cato} & \cat{D}
    }
  \end{displaymath}
  We call $R$ a ``step''.  Given a left representation $m \colon \cato \cong \lbim
  {F}$ we have $\rho_2$ and the functor $F^{\rho}$.  Given a right representation $n \colon \cato \cong \rbim{G}$ we have $\rho_1$ and the functor $G_{\rho}$.


\begin{definition}\hfill
  \begin{enumerate}
  \item A \emph{coalgebra morphism} from an $H$-coalgebra $c\colon
  X \rightarrow H(X)$ to an $L$-coalgebra $d \colon \Theta \rightarrow L(\Theta)$ is an $\cato$-morphism  $g \colon X \to \Theta$ such that the following commutes:
  \begin{displaymath}
    \xymatrix@R-0.5pc{
 X \ar[r]^-{f} \ar[d]_{c} & \Theta \ar[d]^{d}  \\
 H(X) \ar[r]_-{R(f)} & L(\Theta) 
}
  \end{displaymath}
This gives a bimodule $\CoAlg{H} \relto \CoAlg{L}$.
\item A \emph{coalgebra-to-algebra morphism} from an $H$-coalgebra $c\colon
  X \rightarrow H(X)$ to an $L$-algebra $a \colon L(\Theta) \Rightarrow \Theta$  is an $\cato$-morphism  $g \colon X \to \Theta$ such that the following commutes:
  \begin{displaymath}
    \xymatrix@R-0.5pc{
 X \ar[r]^-{f} \ar[d]_{c} & \Theta \\
 H(X) \ar[r]_-{R(f)} & L(\Theta) \ar[u]_{a}
}
  \end{displaymath}
Equivalently: such a morphism is a fixpoint for the
endofunction on the homset $\catc(X,\Theta)$ sending $f$ to the composite
\begin{math}
 \xymatrix{
 X \ar[r]^-{c} & H(X) \ar[r]^-{R(f)} & L(\Theta) \ar[r]^-{a} & \Theta
} 
\end{math}.
This gives a bimodule $\CoAlg{H} \relto \Alg{L}$.
  \end{enumerate}
\end{definition}

\begin{definition} \hfill
  \begin{enumerate}
  \item A final coalgebra $d \colon \Theta \Rightarrow L(\Theta)$ is said to \emph{extend across $\cato$} when from each $H$-coalgebra  $c \colon X \to H(X)$ there  is a unique morphism to $(\Theta,d)$.
\item A corecursive algebra  $a \colon L(\Theta) \Rightarrow \Theta$ is said to \emph{extend across $\cato$} when from each $H$-coalgebra  $c \colon X \to H(X)$ there  is a unique morphism to $(\Theta,a)$.
  \end{enumerate}
\end{definition}

Now let us decompose Proposition~\ref{prop:pres-corec} into two parts.
\begin{proposition} \label{prop:corec-bimod} \hfill
  \begin{enumerate}
  \item \label{item:corec-bimod-left} Let $\cato$ have a left representation $m\colon\cato \cong \lbim{F}$.  Then any corecursive $L$-algebra $(\Theta,a)$ 
extends across $\cato$. (And hence also any final $L$-algebra.)  Explicitly, the map $(X,c) \to (\Theta,a)$ is $m^{-1}$ applied to the map $F^{\rho}(X,c) \to (\Theta,a)$.
\item \label{item:corec-bimod-right} Let $\cato$ have a right representation $n \colon \cato \cong \rbim{G}$.  Then any corecursive $L$-algebra $(\Theta,a)$ extending across $\cato$ is sent by $G_{\rho}$ to a corecursive $H$-algebra.  Explicitly, the map $(X,c) \to G_{\rho} (\Theta,a)$ is $n$ applied to the map $(X,c) \to (\Theta,a)$.
  \end{enumerate}
\end{proposition}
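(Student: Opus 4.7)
The overall plan is to use the bimodule representation (left for Part 1, right for Part 2) to translate the fixpoint equation defining a coalgebra-to-algebra morphism across $\cato$ into an ordinary coalgebra-to-algebra equation in $\catd$ (resp.\ $\catc$), after which the desired uniqueness follows from corecursiveness of $(\Theta, a)$. The key ingredients are (i) the naturality of the representation isomorphism, which tracks how the $\cato$-style composition with $\catc$-maps on the left and $\catd$-maps on the right interacts with the transpose, and (ii) Theorem~\ref{thm:matesbim}, which tells us exactly how the action of the 2-cell $R$ on morphisms translates under the representation, in terms of $\rho_2$ or $\rho_1$.

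For Part~\ref{item:corec-bimod-left}, I would fix an $H$-coalgebra $c \colon X \to HX$ and put $\tilde g \coloneqq m(g) \colon FX \to \Theta$ for any candidate $\cato$-morphism $g \colon X \to \Theta$. By Theorem~\ref{thm:matesbim}, the $\cato$-morphism $R(g) \colon HX \to L\Theta$ corresponds under $m$ to $L(\tilde g) \circ (\rho_2)_X \colon FHX \to L\Theta$. Naturality of $m$ then says that pre-composition with $c$ in $\catc$ becomes pre-composition with $F(c)$, and post-composition with $a$ in $\catd$ is preserved. So the fixpoint equation $c; R(g); a = g$ transports to $\tilde g = a \circ L(\tilde g) \circ (\rho_2)_X \circ F(c)$ in $\catd$, which is exactly the defining diagram for a coalgebra-to-algebra morphism from $F^{\rho}(X,c) = \big(FX \xrightarrow{F(c)} FHX \xrightarrow{\rho_2} LFX\big)$ to $(\Theta, a)$. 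Since $(\Theta, a)$ is corecursive, there is a unique such $\tilde g$, and hence a unique $g = m^{-1}(\tilde g)$; and this is visibly the map described in the statement.

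Part~\ref{item:corec-bimod-right} is dual. Using the right representation $n$, set $\hat g \coloneqq n(g) \colon X \to G\Theta$; the $\rho_1$-version of Theorem~\ref{thm:matesbim} gives $n(R(g)) = (\rho_1)_\Theta \circ H(\hat g)$. Naturality of $n$ converts pre-composition with $c$ on the $\catc$-side to pre-composition with $c$ on $\hat g$, while post-composition with $a$ on the $\catd$-side becomes post-composition with $G(a)$. The fixpoint equation thus becomes $\hat g = G(a) \circ (\rho_1)_\Theta \circ H(\hat g) \circ c$; by definition of $G_\rho$, the composite $G(a) \circ (\rho_1)_\Theta$ is exactly $G_\rho(\Theta, a)$, so this is precisely the coalgebra-to-algebra equation from $(X, c)$ to $G_\rho(\Theta, a)$ in $\catc$. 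Since $(\Theta, a)$ extends across $\cato$, for every $(X, c)$ there is a unique $g$, hence a unique $\hat g$, establishing corecursiveness of $G_\rho(\Theta, a)$. The only real obstacle is careful book-keeping of the two composition conventions --- diagrammatic-order in the bimodule versus applicative-order in $\catc$ and $\catd$ --- and making sure to apply the correct mate formula from Theorem~\ref{thm:matesbim}; once the equations are laid down this way, both parts reduce immediately to the definition of corecursiveness.
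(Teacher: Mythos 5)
Your proof is correct and follows exactly the route the paper intends: the paper states Proposition~\ref{prop:corec-bimod} without proof as the decomposition of Proposition~\ref{prop:pres-corec}, whose proof is precisely your two transport steps (transpose the $\cato$-fixpoint equation along $m$ into the coalgebra-to-algebra equation for $F^{\rho}(X,c)$ in $\catd$, respectively along $n$ into the equation for $G_{\rho}(\Theta,a)$ in $\catc$, using Theorem~\ref{thm:matesbim} to identify $R$ with $\rho_2$ or $\rho_1$). Nothing is missing.
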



Note that this story also appears, in contravariant form, in~\cite[Propositions 16--17]{Levy15}.

\end{appendix}

\end{document}